\newmdenv[topline=false, bottomline=false, skipabove=\topsep, skipbelow=\topsep]{siderules}
\newtheorem{theorem}{Theorem}
\newtheorem{lemma}{Lemma}
\newtheorem*{lemma*}{Lemma}
\newtheorem{corollary}{Corollary}
\newtheorem{conjecture}{Conjecture}
\theoremstyle{definition}
\newtheorem{definition}{Definition}
\theoremstyle{remark}
\newtheorem{remark}{Remark}
\newcommand{\ca}[1]{{\mathcal #1}}
\newcommand{\ben}{\begin{equation}}
\newcommand{\een}{\end{equation}}
\def\bena{\begin{eqnarray}}
\def\eena{\end{eqnarray}}
\def\cA{{\mathcal A}}
\def\cB{{\mathcal B}}
\def\cV{{\ca V}}
\def\sC{\mathscr{C}}
\def\sH{\mathscr{H}}
\def\sP{\mathscr{P}}
\def\sR{\mathscr{R}}
\def\sS{\mathscr{S}}
\def\sY{\mathscr{Y}}
\def\bS{{\mathbb S}}
\def\cA{{\mathcal A}}
\def\cB{{\mathcal B}}
\newcommand{\rmo}[2]{\Delta_{#1, #2}}
\newcommand{\mo}[1]{\Delta_{#1}}
\newcommand{\mA}{\mathcal{A}}
\newcommand{\mB}{\mathcal{B}}
\def\({\left(}
\def\){\right)}
\def\ud{{\rm d}}
\newcommand{\beq}{\begin{equation}}
\newcommand{\eeq}{\end{equation}}
\begin{document}
\title{Approximate recovery and relative entropy I. 
general von Neumann subalgebras}

	\author{Thomas Faulkner$^{1}$, Stefan Hollands$^{2}$, Brian Swingle$^{3}$, Yixu Wang$^{3}$\\
	{\it $^1$ University of Illinois at Urbana-Champaign, IL and KITP, Santa Barbara}\\
	{\it $^2$ ITP, Universit\" at Leipzig, MPI-MiS Leipzig, and KITP, Santa Barbara}\\
	{\it $^3$ Maryland Center for Fundamental Physics} \\ 
	{\it and University of Maryland, College Park, MD, USA}
	}

\date{\today}
	
\maketitle

\begin{abstract}
We prove the existence of a universal recovery channel that approximately recovers states on a v. Neumann subalgebra when the change in relative entropy,
with respect to a fixed reference state, is small. Our result is a generalization of previous results that 
applied to type-I v. Neumann algebras by Junge at al. [arXiv:1509.07127]. 
We broadly follow their proof strategy but consider here arbitrary v. Neumann algebras, where qualitatively new issues arise. 
Our results hinge on the construction of certain analytic vectors and computations/estimations of their Araki-Masuda $L_p$ norms.    
We comment on applications to the quantum null energy condition. 
\end{abstract}

\section{Introduction}

Quantum error correction is an important tool in quantum computation but has physical manifestations well beyond this domain. For example, it has become influential in the study of topological aspects of many-body quantum physics~\cite{Wen-1991,Kitaev-2006,Dennis-2002}, renormalization group approaches to interacting theories~\cite{Swingle-2016,Kim-2017}, random quantum systems~\cite{Brown-2013}, and even basic aspects of quantum gravity in the AdS/CFT correspondence \cite{Almheiri-2014,Faulkner,swingle}. 
While quantum computers typically manipulate finite dimensional Hilbert spaces, many applications of error correction to field theory and gravity go beyond this simple setting and a general treatment requires more sophisticated tools, including tools from the theory of operator algebras. While one might hope to approximate any of these physical systems by finite quantum systems, this point of view can obscure crucial physical features that are more naturally expressed in a less restrictive approach. We will give an example of this in the context of quantum field theory, where operator algebraic approaches have a long tradition, see e.g. \cite{haag_2}. 

At the same time, the operator algebra approach is so general that expressing proofs of fundamental quantum information results in this language exposes the core nature of such proofs and ends up simplifying the approach in many situations. Indeed, many of the original theorems in quantum information have their origin in the study of operator algebras. In this paper, we generalize the results of \cite{Junge}, pertaining to the approximate reversibility of quantum channels, from a type-I v. Neumann algebra\footnote{Direct sums of matrix algebras or the algebra of all bounded operators on a separable Hilbert space.} setting to general v. Neumann algebras (Theorem 2). At the heart of these results is a strengthened version of the monotonicity \cite{Uhlmann1977} of relative entropy (Theorem 1).
In the present paper (part I), we treat the sub-algebra case which involves a simple quantum channel called an inclusion. In a follow-up paper (part II), we treat the general quantum channel case.

Along the way, we prove two theorems that might be of independent interest. The first (Theorem 3) concerns the computations of the derivatives of the ``sandwiched'' and ``Petz'' relative Renyi entropies for two nearby states. We call this result a first law because of its similarity to the first law of black hole thermodynamics in the setting of AdS/CFT \cite{Blanco:2013joa,Faulkner:2013ica}. The second (Theorem 4) pertains to a regularization procedure for relative entropy that produces states with finite relative entropy and also allows for continuous extrapolation of relative entropy when removing the regulator. The vectors that result from this procedure are important here because they lead to extended domains of holomorphy that allow us to proceed towards the proof of strengthened monotonicity with a similar argument as in the finite dimensional setting. 

We will also discuss an application to the study of the quantum information aspects of quantum field theory that requires this general v. Neumann algebra setting. In the field theory context, new results using operator algebra methods have made it possible to make rigorous statements about the dynamics of interacting theories. For example, we propose that the quantum null energy condition, a bound on the local energy density (that has already been proven with other methods~\cite{Bousso-2015,Balakrishnan-2017}), is tightly linked to the strengthened monotonicity result that we derive in this paper. 

\medskip
\noindent
{\bf Notations and conventions:} Calligraphic letters $\mathcal{A,M}, \dots$ denote v. Neumann algebras. Calligraphic letters $\mathscr{H,K}, \dots$ denote more general linear spaces or subsets thereof. $\bS_a=\{z \in {\mathbb C} \mid 0 < \Re (z) < a\}$ denotes an open strip, and we often write $\bS=\bS_1$. We typically use the physicist's ``ket''-notation $|\psi\rangle$ for vectors in a Hilbert space. 
The scalar product is written
\begin{equation}
(|\psi\rangle, |\psi'\rangle)_{\sH} = \langle \psi | \psi' \rangle
\end{equation}
and is anti-linear in the first entry. The norm of a vector is sometimes written simply as
$\| |\psi\rangle \| =: \| \psi \|$. The action of a linear operator $T$ on a ket is sometimes written as $T|\phi\rangle = |T\phi\rangle$. 
In this spirit, the norm of a bounded linear operator $T$ on $\sH$ is written as $\|T\|= \sup_{|\psi\rangle: \|\psi\|=1} \|T\psi\|$.

\section{Basic definitions and main results}
\label{sec:above}

\subsection{Tomita-Takesaki theory}

Here we outline some elements of v. Neumann algebra theory relevant for this work; for details, see \cite{Bratteli,Takesaki,Zsido}.
A v. Neumann algebra, $\mathcal{A}$, is a subspace of the set of all bounded operators $B(\sH)$ containing the unit
operator $1$ that 
is closed under: products, the star operation denoted $a^*$ and limits in the ultra-weak operator topology.
States on $\mathcal{A}$ are linear functionals that are positive, $\rho(a^*a) \ge 0$, normalized, $\rho(1)=1$,
and ``normal'' i.e., continuous in the ultra-weak operator topology. 
The set of normal states is contained in the ``predual'' $\mathcal{A}_\star$ of $\cA$, i.e. the set of all ultra-weakly 
continuous linear functionals on $\cA$. 
One defines the support projection $\pi^\cA$ associated to a state $\rho$ as the smallest projection $\pi = \pi^\cA(\rho)$ in $\mathcal{A}$ that satisfies $\rho(\pi) = 1$. Faithful states by definition have unit support projection. 

We will work with the v. Neumann algebra in a so called standard form, $(\mathcal{A}, \mathscr{H}, J, \mathscr{P}^\natural)$, where $\mathcal{A}$ acts on the Hilbert space $\mathscr{H}$ and where there is an anti-linear, unitary involution $J$ and a self-dual ``natural'' cone $\mathscr{P}^\natural$ left invariant by $J$. The existence and detailed properties of a normal form 
are proven in \cite{haagerup}; here we only mention: One has $ J \mathcal{A} J = \mathcal{A}'$ where $\mathcal{A}' \subset B(\mathscr{H})$, the ``commutant'', is the v. Neumann algebra of all bounded operators on $\mathscr{H}$ that commute with $\mathcal{A}$. The natural cone defines a set of vectors in the Hilbert space that canonically represent states on $\mathcal{A}$ via
\begin{equation}
\cA_\star \owns \rho \mapsto \left| \xi_\rho \right> \in \mathscr{P}^\natural \,, \qquad \rho( \cdot) = \omega_{\xi_\rho}( \cdot ) \equiv \left< \xi_\rho \right| \cdot \left| \xi_{\rho} \right>
\end{equation}
and where we use the notation $ \omega_\psi ( \cdot) \equiv \left< \psi \right| \cdot \left| \psi \right> \in \mathcal{A}_\star$ for the linear functional on $\mathcal{A}$ induced by a vector $\psi \in \mathscr{H}$. The vector in the natural cone representing $\omega_\psi$ will also be denoted by $|\xi_\psi\rangle$. It is known that it is related to $|\psi\rangle$
by a partial isometry $v'_\psi\in \cA'$,
\ben
|\xi_\psi\rangle = v'_\psi |\psi\rangle.
\een
Furthermore,  it is known that\footnote{For the case of matrix algebras, 
the second inequality is known as the Powers-St\" ormer inequality.} proximity of the state functionals implies that of the 
vector representatives in the natural cone and vice versa, in the sense that
\ben\label{eq:PS}
\|\xi_\phi-\xi_\psi\| \,  \|\xi_\phi+\xi_\psi\|  \ge \|\omega_\phi-\omega_\psi\| \ge \|\xi_\phi-\xi_\psi\|^2 
\een
holds.

We now introduce the modular operators that are central to our discussion of relative entropy \cite{Araki1,Araki2} and non-commutative $L_p$-spaces \cite{AM}. This is most straightforward if we have cyclic and separating vector $|\eta\rangle$ for $\mathcal{A}$ algebra, meaning that $\{a |\eta\rangle: a \in \cA\}$ is dense in $\sH$ and that $a|\eta\rangle=0$ implies that $a=0$. 
Then Tomita-Takesaki theory establishes that one can define an anti-linear, unitary operator $J$ and a positive, self-adjoint operator $\Delta_\eta$ by
the relations 
\begin{equation}
J \Delta_\eta^{1/2} a \left| \eta \right> = a^* \left| \eta \right> \,,\quad \forall a \in \mathcal{A} \,
\end{equation}
$\Delta_\eta$ is in general unbounded. $J$ can be used in this case to define a standard form, with 
$\sP^\natural$ given by the closure of $\{aJaJ |\eta \rangle: a \in \cA\}$, but we emphasize that a standard form exists generally even without a faithful state $|\eta\rangle$. From now on, we regard such a standard form, hence $J$, as fixed.
We will continue to take $\eta \in \sP^\natural$.

We will also need the concept of relative modular operator $\Delta_{\phi,\psi}$ \cite{Araki1}. 
In a slight generalization of the above definitions, let $|\phi \rangle, |\psi \rangle \in \sP^\natural$. Then
there is a non-negative, self-adjoint operator $\Delta_{\phi,\psi}$ characterized by
\ben 
\label{modA}
J \Delta_{\phi,\psi}^{1/2}\left( a \left| \psi \right> + \left| \chi \right> \right) = \pi^\cA(\psi) a^* \left| \phi \right>\,,  \quad \forall \,\, a \in \mathcal{A} \,,\,\, \left| \chi \right> \in (1-\pi^{\cA'}(\psi)) \mathscr{H} 
\een
The non-zero support of $\Delta_{\phi,\psi}$ is $\pi^\cA(\phi) \pi^{\cA}(\psi) \mathscr{H}$, and the functions $\Delta_{\phi,\psi}^z$
are understood via the functional calculus on this support and are defined as $0$ on $1-\pi^\cA(\phi) \pi^{\cA}(\psi)$. 
We can similarly define relative modular operators for vectors outside of the natural cone, for a detailed discussion of such matters see 
e.g., \cite{AM}, app. C. 
For example, we may use the well known transformation property of the modular operators $\Delta_{u'\phi,v'\psi}= v' \Delta_{\phi,\psi} {v'}^*$ where
$v',u'\in \cA'$ is a partial isometry (with appropriate initial and final support), to define:
\begin{equation}
\label{eq:change}
\Delta_{\phi,\psi} \equiv {v'_\psi}^* \Delta_{\xi_\phi,\xi_\psi} v'_\psi \,, \qquad |\psi \rangle, |\phi \rangle \in \mathscr{H}.
\end{equation}
Similarly we can define the relative modular operators for the commutant in direct analogy. We will often denote it by $\Delta_{\phi,\psi}'$. 

When $\ket{\psi}=\ket{\phi}$ we will denote these operators as $\Delta_{\phi,\phi}\equiv \Delta_\phi$. 
This is the non-relative modular operator already discussed from which we can define modular flow:
\begin{equation}
\varsigma_{\phi}^t(a) = \Delta_{\phi}^{it} a \Delta_{\phi}^{-it} \in \mathcal{A}\,, 
\end{equation}
where $a \in \mathcal{A}$ and we have taken $\phi$ to be cyclic and separating.  The modular flow can also be extracted from the relative modular operators:
\begin{equation}
\label{modflowrel}
 \Delta_{\phi,\psi}^{it} a \Delta_{\phi,\psi}^{-it}  = \varsigma_{\phi}^t(a) \pi^{\cA'}(\psi)
\end{equation}
for any $\psi \in \mathscr{H}$. 

The modular operators satisfy various relations that we need to draw on below and we simply quote these here (recall that $\eta \in \mathscr{P}^\natural$):
\begin{equation}
\label{rels}
\Delta_{\psi,\eta}^{-z} =  (\Delta_{\eta,\psi}')^{z}\, , \qquad J \Delta_{\xi_\psi,\eta}^{-z} = \Delta_{\eta,\xi_\psi}^{\bar z}  J 
\,, \qquad \Delta_{\psi,\eta}^{-it} a   \Delta_{\eta}^{it} \in \mathcal{A}
\end{equation}
for $t\in \mathbb{R}$, $z \in \mathbb{C}$ and $a\in \mathcal{A}$ and where these equations make sense when acting on vectors in appropriate domains -- we are more specific about this when we get to use these equations. The Connes cocycle $(D\psi:D\phi)_{t}$ is the partial isometry from $\cA$ defined by $(t \in \mathbb{R})$
\begin{equation}
\label{coswitch}
(\Delta_{\psi,\phi}^{-it}    \Delta_{\phi}^{it}) = \Delta_{\psi,\eta}^{-it}    \Delta_{\phi,\eta}^{it}
\pi^{\cA'}(\phi)
\equiv (D\psi:D\phi)_{-t}
\pi^{\cA'}(\phi).
\end{equation}
According to \cite{Araki1,Araki2}, if $\pi^\cA(\phi) \ge \pi^\cA(\psi)$, the relative entropy may be defined as 
\ben
S(\psi | \phi) = -\lim_{\alpha \to 0^+} \frac{\langle \psi | \Delta^\alpha_{\phi, \psi} \psi \rangle-1}{\alpha} ,  
\een
otherwise, it is by definition equal to $+\infty$. The relative entropy only depends on the 
functionals $\omega_\psi, \omega_\phi$ but not on the particular choice of vectors that define them.

\subsection{Inclusions of v. Neumann algebras and Petz map}

Now consider a v. Neumann subalgebra $\mathcal{B}$ of $\mathcal{A}$.
It is convenient to take $\mathcal{B}$ to be in a 
standard form $(\mathcal{B}, \mathscr{K}, J_{\mathcal{B}}, \mathscr{P}^\natural_{\mathcal{B}})$.
In this representation $\mathcal{B}$ acts on a (potentially) different Hilbert space $\mathscr{K}$ and to distinguish these representations we define the embedding $\iota: \mathcal{B} \rightarrow \mathcal{A}$ as a $*$-isomorphism of v. Neumann algebras from $\mathcal{B}$ to the range $\iota(\mathcal{B}) \subset \mathcal{A}$.

Normal states $\rho$ on $\mathcal{B}$ are induced from states on $\mathcal{A}$ in the obvious way: $\rho|_{\mathcal{B}} \equiv \rho \circ \iota \equiv \iota^+ \rho$, so $\iota^+(\mathcal{A}_\star) \subset \mathcal{B}_\star$. We adopt the convention that the corresponding support projection will be labelled in the following manner:
\begin{equation}
\pi^{\mathcal{B}}(\rho) \equiv  \pi^{\mathcal{B}}(\rho \circ \iota) \,, \quad \rho \in \mathcal{A}_\star
\end{equation}
and we have 
\begin{equation}
\pi^\cA(\rho) \leq  \iota(\pi^{\mathcal{B}}(\rho)), 
\end{equation}
where for two self-adjoint elements $a,b \in \cA$ we say that $a \le b$ if $a-b$ is a non-negative operator. Given 
$\rho, \sigma \in \cA_\star$, we may define the relative entropy $S_\cA(\rho | \sigma) \equiv S(\rho | \sigma)$ as above, and we put
\ben
\label{eq:SB}
S_\cB(\rho | \sigma) \equiv S(\rho \circ \iota | \sigma \circ \iota).
\een
By monotonicity of the relative entropy \cite{Uhlmann1977}, we have $S_{\cA}(\rho| \sigma ) -  S_{\cB}(\rho| \sigma ) \ge 0$.

Given a faithful state $\sigma \in \mathcal{A}_\star$, an isometry $V_\sigma: \mathscr{K} \rightarrow \mathscr{H}$ can be naturally defined as follows \cite{Petz3,Petz4,Petz1993}:
\begin{equation}
\label{embed}
V_\sigma  b \big| \xi_{\sigma}^{\mathcal{B}} \big>  := \iota( b ) \left| \xi_\sigma^\cA \right> \,, \qquad b\in \mathcal{B} \ ,
\end{equation}
where we use the notation $|\xi_{\sigma}^{\mathcal{B}} \rangle$ for the vector representative of the state $\sigma \circ \iota \in \mathcal{B}_\star $ in the natural cone of the algebra $\mathcal{B}$ and $|\xi_{\sigma}^{\mathcal{A}} \rangle$ for the vector representative of the state $\sigma \in \mathcal{A}_\star $ in the natural cone of the algebra $\mathcal{A}$.
As reviewed in Appendix~\ref{app:Vsigma}, this embedding $V_\sigma$ commutes with the action of $b$,
\begin{equation}
\label{embedcomm}
V_{\sigma} \left(b \left| \chi \right> \right) = \iota(b) V_\sigma \left| \chi \right> \,, \qquad \chi \in \mathscr{K} \,, \qquad b \in \mathcal{B}
\end{equation}
and satisfies $V_\sigma^* \iota(b) V_\sigma^{} = b$ for all $b\in \mathcal{B}$ as well as $V_\sigma(\mathscr{K}) = \pi_{\mathscr{K}} \mathscr{H}$ for some projector $V_{\sigma} V_{\sigma}^* \equiv \pi_{\mathscr{K}} \in \iota(\mathcal{B})'$.

We now recall the concept of approximate sufficiency. First, recall that a linear mapping $\alpha: \cA \to \cB$ is called  ``channel'' if it is completely positive, ultra-weakly continuous and $\alpha(1)=1$, see \cite{Petz1993}. 

\begin{definition}
\label{def:suff}
Following \cite{Petz2,Petz1993} we say that the inclusion $\mathcal{B} \subset \mathcal{A}$ is $\epsilon$-\emph{approximately sufficient} for a set of states $\sS \subset \mathcal{A}_\star$, if there exists a fixed  channel
\begin{equation}
\label{recab}
\alpha : \mathcal{A} \rightarrow \mathcal{B},
\end{equation}
called ``\emph{recovery channel} '', 
for which the recovered state is close to the original state in the sense that
\begin{equation}
\| \rho -\rho \circ \iota \circ \alpha \| \equiv \sup_{ a \in \mathcal{A}: \| a \| \leq 1} \left| \rho(a) - \rho \circ \iota \circ \alpha (a) \right| \leq \epsilon , , \quad \forall \,\, \rho \in \sS. 
\end{equation}
Here we take all $\rho \in \sS$ to be normalized $ \rho(1)=1$. 
\end{definition}

Note that if $\mathcal{A} \subset \mathcal{B}$ is $\epsilon$-sufficient for $\sS$, then $\mathcal{A} \subset \mathcal{B}$ is $\epsilon$-sufficient for the closed convex hull
of states $\overline{ {\rm conv} (\sS)}$.

We would now like to construct an $\alpha$ that works as a recovery map for a set of states that are close in relative entropy under restriction to the sub-algebra.  We take the relative entropy to compare to a fixed state $\sigma \in \mathcal{A}_\star$. That is, we consider the set
\begin{equation}
\label{Rsigma}
\sR^{(\sigma)}_{\delta} = \left\{ \rho \in \mathcal{A}_\star :  \rho(1) = 1\,,\,\, \rho \geq 0 \,, \,\, S_{\cA}(\rho| \sigma ) -  S_{\cB}(\rho| \sigma ) \le \delta \right\}
\end{equation}

The required recovery channel is related to the so-called Petz map, which is defined in the sub-algebra context (and faithful $\sigma$) as (see e.g., \cite{Petz1993}, sec. 8):
\begin{equation}
\alpha_\sigma(\cdot) = J_{\mathcal{B} } V_\sigma^* J_\cA \left( \cdot \right) J_\cA V_\sigma J_{\mathcal{B} }
\end{equation}
It maps operators on $\mathscr{H}$ to operators on $\mathscr{K}$, and furthermore
\begin{equation}
\alpha_\sigma(\mathcal{A}) \subset \mathcal{B}.
\end{equation}
As shown in \cite{Petz1993}, prop. 8.3 this map satisfies the defining properties of a recovery channel used in def.~\ref{def:suff} -- in fact, in the subalgebra context considered here 
it is equal to the generalized conditional expectation introduced even earlier by \cite{AC}. In the non-faithful case there is a slightly more complicated expression that we will discuss below in lem. \ref{lem:proj}.

\subsection{Main theorems}

Given two states $\rho, \sigma \in \cA_\star$, the fidelity is defined as \cite{UhlmannFidelity}:
\begin{equation}
F(\sigma, \rho) \equiv \sup_{u' \in \mathcal{A}': u' {u'}^* =1} | \left<  \xi_{\sigma} | u' \xi_{\rho} \right> | . 
\end{equation}
Some of its properties in our setting are discussed in lem. \ref{lem:fid} below.

One of the two main theorems we would like to establish is:
\begin{theorem}[Faithful case]
\label{thm1}
Montonicity of relative entropy can be strengthened to
\begin{equation}
\label{streng}
 S_{\cA}(\rho| \sigma ) -  S_{\cB}(\rho| \sigma )\geq
 - 2 \int_{-\infty}^{\infty} \ln F( \rho, \rho \circ \iota \circ \alpha_\sigma^t ) p(t) \, \ud t ,
 \end{equation}
where we assume that $\rho,\sigma$ are normal, $\sigma$ is faithful and
where $\alpha_\sigma^t : \mathcal{A} \rightarrow \mathcal{B}$ is the rotated Petz map, defined as
\begin{equation}
\label{petzfaith}
\alpha_{\sigma}^t (a) =\varsigma^{\sigma,\mathcal{B}}_{-t} \left(  J_{\mathcal{B}} V_{\sigma}^*  J_\cA  \varsigma^{\sigma, \cA}_t(a)  J_\cA V_{\sigma} J_{\mathcal{B}}  \right).
\end{equation}
$p(t)$ is the normalized probability density defined by
\begin{equation}
p(t) = \frac{\pi}{\cosh(2\pi t) + 1}. 
\end{equation}
$\varsigma^{\sigma, \cA}_t$ resp. $\varsigma^{\sigma, \mathcal{B}}_t$ are the modular flows of $\sigma$
on $\cA$ resp. of $\sigma \circ \iota$ on $\cB$.
\end{theorem}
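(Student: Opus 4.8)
\emph{Strategy.} The plan is to port the complex-interpolation argument of \cite{Junge} to an arbitrary $\cA$, keeping the Araki--Masuda $L_p$ norms \cite{AM} and the relative modular operators $\Delta_{\sigma,\rho}$ as the central objects, and to extract \eqref{streng} from a strengthened three-lines inequality whose boundary data are fidelities and whose interior derivative is the relative-entropy defect. First I would recast the left-hand side of \eqref{streng} as a derivative: writing the $\cA$- and $\cB$-sandwiched R\'enyi divergences as Araki--Masuda norms of $|\xi_\rho\rangle$ built from $\Delta_{\sigma,\rho}$ and, through the isometry $V_\sigma$ of \eqref{embed}, from its $\cB$-image, the first-law identity (Theorem 3) expresses $S_\cA(\rho|\sigma)-S_\cB(\rho|\sigma)$ as the $\alpha\to 1$ derivative of $\tilde D_\alpha^{\cA}-\tilde D_\alpha^{\cB}$. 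Hence it suffices to control the R\'enyi defect for $\alpha$ slightly above $1$ and then differentiate.

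\emph{The analytic vector.} The core is to build a holomorphic family $|\Psi(z)\rangle$ on the exponent strip $0\le\Re z\le\tfrac12$ by promoting the real modular time $t$ in $\varsigma^{\sigma,\cA}_t,\varsigma^{\sigma,\cB}_t$ to a complex variable and inserting the matching powers of $\Delta_\sigma,\Delta_{\sigma,\rho}$ and their $\cB$-counterparts, intertwined through $V_\sigma,J_\cA,J_\cB$. The identities \eqref{rels}, \eqref{modflowrel}, \eqref{coswitch}, together with $V_\sigma^*\iota(b)V_\sigma=b$ and \eqref{embedcomm}, are what let me commute modular operators across the embedding. I then set $f(z)=\langle\xi_\rho|\Psi(z)\rangle$, engineered so that: (i) by $\|\Delta_{\sigma,\rho}^{z}\xi_\rho\|\le 1$ for $0\le\Re z\le\tfrac12$ (at the endpoint $J\Delta_{\sigma,\rho}^{1/2}\xi_\rho=\pi^\cA(\rho)\xi_\sigma$ from \eqref{modA}, then log-convexity), $f$ is bounded and continuous on the closed strip; (ii) on the boundary carrying the rotated flow, $|f(1+it)|=F(\rho,\rho\circ\iota\circ\alpha_\sigma^{t})$, using the definition \eqref{petzfaith}, the fidelity formula and Uhlmann's theorem \cite{UhlmannFidelity}; (iii) at the interior, the $\alpha\to1$ derivative of $\log|f|$ equals $-\tfrac12\big(S_\cA(\rho|\sigma)-S_\cB(\rho|\sigma)\big)$.

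\emph{Interpolation and the weight $p(t)$.} To $f$ I apply the strengthened Hadamard three-lines (Hirschman) lemma: for $f$ bounded and holomorphic on $\overline{\bS}$,
\begin{equation}
\log|f(\theta)|\le\int_{-\infty}^{\infty}\big[\,\alpha_\theta(t)\,\log|f(it)|+\beta_\theta(t)\,\log|f(1+it)|\,\big]\,\ud t,
\end{equation}
with the explicit Poisson-kernel weights of the strip. Arranging the construction so the $\Re z=0$ line carries only the normalization $\rho(1)=1$ and contributes no fidelity, and taking the endpoint limit that produces the $\alpha\to1$ derivative, collapses $\beta_\theta$ to its limiting form, which in the present modular-time convention is exactly $p(t)=\tfrac{\pi}{\cosh(2\pi t)+1}=\tfrac{\pi}{2\cosh^2(\pi t)}$ (the factor of two in the argument reflecting the width-$\tfrac12$ exponent strip). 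Combining (ii), (iii) gives $-\tfrac12(S_\cA-S_\cB)\le\int p(t)\log F\,\ud t$, i.e. \eqref{streng}, the constant $-2$ coming from this normalization together with $\int p\,\ud t=1$.

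\emph{Main obstacle.} The genuinely new difficulty, absent from the type-I treatment of \cite{Junge}, is to make $|\Psi(z)\rangle$ well defined, norm-bounded and continuous up to $\partial\bS$ for general $\cA$: the operators $\Delta_{\sigma,\rho}^{z}$ are unbounded with nontrivial kernels, $\rho$ need not be faithful, and the support projections $\pi^\cA(\rho),\pi^{\cA'}(\rho)$ must be tracked throughout \eqref{modA}. This is precisely where the regularization of Theorem 4 enters: it replaces $\rho$ by states of finite relative entropy whose vectors lie in the analyticity domains, so that Hirschman's lemma applies verbatim, after which the regulator is removed using the continuity of relative entropy that Theorem 4 guarantees. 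Establishing the boundary identity $|f(1+it)|=F(\rho,\rho\circ\iota\circ\alpha_\sigma^{t})$ rigorously, via \eqref{petzfaith} and the covariance of $V_\sigma$, is the other place demanding care.
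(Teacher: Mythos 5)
Your outline reproduces the architecture of the paper's proof: the same interpolating vector built from $\Delta_{\sigma,\rho;\cA}^{z}$, $V_\sigma$ and the $\cB$-side relative modular operator (eq. \eqref{ourvec}), the Hirschman kernel on a width-$\tfrac12$ strip (lem.~\ref{lem:hirsch}), the identification of the top edge with the rotated Petz map \eqref{petzfaith} via \eqref{inducepetz}, the relative-entropy defect as the derivative at the corner $z=0$, and a regularization to justify that derivative. The one genuine deviation is that you apply the \emph{scalar} Hirschman lemma to $f(z)=\langle\xi_\rho|\Psi(z)\rangle$, whereas the paper interpolates the Araki--Masuda $L_{p_\theta}$ norms of the vector itself (lem.~\ref{lem:p2} shows the two corner derivatives coincide, since $\tfrac{d}{dz}\langle\Gamma(\bar z)|\Gamma(z)\rangle|_0=2\tfrac{d}{dz}\langle\psi|\Gamma(z)\rangle|_0$). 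Your variant is viable for \eqref{streng} and would bypass the $L_p$ machinery at the interior point; the price is that at the top edge you only get the one-sided bound $|f(1/2+it)|\le \|\Gamma(1/2+it)\|_{1,\psi}\le F(\omega_\psi,\omega_\psi\circ\iota\circ\alpha_\sigma^t)$, not the equality you assert — the first step is the trivial member of the supremum \eqref{uhl1} and the second uses the domination \eqref{inducepetz} plus concavity of the fidelity as in \eqref{convexFid}. Fortunately the inequality points the right way, so this is an error of statement, not of logic. (The paper's app.~E sketches a closely related $L_p$-free route and notes the same residual difficulties.)

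The soft spot is the corner derivative. Holomorphy on the open strip plus continuity on the closure does \emph{not} give one-sided differentiability of $f$ at $z=0$, and "the regularization of Theorem 4" is not by itself the fix: thm.~\ref{lem:finites} only supplies continuity of the relative entropy under filtering. What actually extends the analyticity domain to $-1/2<\Re z<1/2$ is the pair of regulators in \eqref{eq:upinterp}: the filtered vector $|\psi_P\rangle$ of \eqref{psifp}, whose majorization $\omega_{\psi_P}\le C\,\omega_\eta$ (eq. \eqref{dom2}) lets the cocycle $\Delta_{\psi_P}^{-z}\Delta_{\eta,\psi_P}^{z}$ continue into $\Re z<0$, \emph{and} the spectral cutoff $\Pi_\Lambda$ on $\ln\Delta_{\psi_P}$, without which $\Delta_{\psi_P}^{z}$ remains unbounded there (lem.~\ref{lem:newgamma}). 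Removing the regulators then requires the filter to be a centred Gaussian so that $S(\psi_P|\eta)\to S(\psi|\eta)$ exactly rather than up to a residual $2\ln\|f\|_1$ (thm.~\ref{lem:finites}(3)), together with lower semicontinuity on the $\cB$ side and continuity of the fidelity on the top edge (lem.~\ref{lem:p1}). If you add these ingredients — and correct the boundary relation to an inequality — your argument closes; also note that thm.~\ref{firstlaw} is only needed in the $L_p$ version of the corner limit and plays no role in your scalar variant.
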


We may extend this theorem to the case where $\sigma$ is non-faithful. The basic idea is contained in the following lemma:
\begin{lemma}
\label{lem:proj}
Consider a sub-algebra $\iota(\mathcal{B}) \subset \mathcal{A}$, of a general v. Neumann algebra, and a normal state $\sigma$ with support
projectors $\pi^\cA(\sigma),\,\,\pi^\mathcal{B}(\sigma)$ and $\pi^{\cA'}(\sigma) \equiv J_\cA \pi^\cA(\sigma) J_\cA, \,\, \pi^{\mathcal{B}'}(\sigma) \equiv J_\mathcal{B} \pi^\mathcal{B}(\sigma) J_\mathcal{B}$. 
Then the following statements hold:
\begin{itemize}
\item[(i)] The projected sub-algebras, are ($\sigma$-finite) v. Neumann sub-algebras,
\begin{equation}
\iota_\pi(\mathcal{B}_\pi) \subset \mathcal{A}_\pi, 
\end{equation}
\begin{equation}
\mathcal{A}_\pi = \pi^\cA(\sigma) \mathcal{A} \pi^{\cA}(\sigma) \pi^{\cA'}(\sigma) \,, \qquad 
\mathcal{B}_\pi = \pi^{\mathcal{B}}(\sigma) \mathcal{B} \pi^{\mathcal{B}}(\sigma) \pi^{\mathcal{B}'}(\sigma) 
\end{equation}
acting respectively on $\mathscr{H}_\pi = \pi^\cA(\sigma) \pi^{\cA'}(\sigma) \mathscr{H}$ and  $\mathscr{K}_\pi = \pi^{\mathcal{B}}(\sigma) \pi^{\mathcal{B}'}(\sigma) \mathscr{K}$.
The projected inclusion is defined as:
\begin{equation}
\iota_\pi (b) \equiv  \Phi^{-1}_\cA \circ \iota \circ \Phi_{\mathcal{B}}(b) \, \qquad b \in \mathcal{B}_\pi,
\end{equation}
where we defined the (ultra weakly continuous) *-isomorphism of v. Neumann algebras 
\begin{subequations}
\begin{align}
\Phi_{\mathcal{B}} &: \mathcal{B}_\pi \rightarrow \pi^{\mathcal{B}}(\sigma) \mathcal{B} \pi^{\mathcal{B}}(\sigma)\, \quad {\rm via} \,\quad \Phi_{\mathcal{B}}( b \pi^{\mathcal{B}'}(\sigma)) = b \\
\label{defPhi}
\Phi_\cA &: \mathcal{A}_\pi \rightarrow \pi^\cA(\sigma) \mathcal{A} \pi^\cA(\sigma)\, \quad {\rm via} \,\quad \Phi_\cA( a \pi^{\cA'}(\sigma)) = a.
\end{align}
\end{subequations} 
The projected algebras are in a standard form. For example the standard form of $\mathcal{A}_\pi$ is $(\mathcal{A}_\pi, \mathscr{H}_\pi, J_\cA, \pi(\sigma) \pi'(\sigma)\mathscr{P}^\natural)$ where $J_\cA$ maps the subspace $\mathscr{H}_\pi$ to itself. 

\item[(ii)]  The relative entropy satisfies
\begin{equation}
S( \rho | \sigma) = S( \rho \circ \Phi | \sigma \circ \Phi ) \,, \qquad S( \rho \circ \iota | \sigma \circ \iota ) = S( \rho \circ \Phi \circ \iota_\pi | \sigma \circ \Phi\circ \iota_\pi ) 
\end{equation}
for all states such that $\pi^\cA(\rho) \leq \pi^\cA(\sigma)$, where $\Phi \equiv \Phi_\cA$. 
\item[(iii)] Consider a channel on the projected algebras: 
\begin{equation}
\alpha_\pi \, : \,\, \mathcal{A}_\pi \rightarrow \mathcal{B}_\pi
\end{equation}
We can construct a new cannel on the algebras of interest $\alpha \, : \, \mathcal{A} \rightarrow \mathcal{B}$ via:
\begin{equation}
 \alpha( a ) \equiv \Phi_{\mathcal{B}} \circ \alpha_\pi \circ \Phi^{-1}_\cA (\pi^\cA(\sigma) a \pi^\cA(\sigma))  
 + \sigma(a) ( 1 -\pi^{\mathcal{B}}(\sigma)).
\end{equation}
 Then for all $\rho \in \mathcal{A}_\star$ with $\pi^\cA(\rho) \leq \pi^\cA(\sigma)$ we have:
\begin{align}
 \rho( a) = \rho( \pi^\cA(\sigma) a \pi^\cA(\sigma) ) \quad \text{and} \quad  \rho \circ \iota \circ \alpha( a) =  \rho \circ \iota \circ \alpha(  \pi^\cA(\sigma) a \pi^\cA(\sigma)  ) \,, \qquad \forall a \in \mathcal{A} 
\end{align}
and
\begin{equation}
F(\rho, \rho \circ \iota \circ \alpha) = F(\rho \circ \Phi, \rho \circ \iota \circ \alpha \circ \Phi)  
= F(\rho\circ \Phi, \rho \circ \Phi \circ \iota_\pi \circ \alpha_\pi )
\end{equation}
Similarly:
\begin{equation}
\label{equalfunc}
\| \rho - \rho \circ \iota \circ \alpha \| = \| \rho\circ \Phi - \rho \circ \Phi \circ \iota_\pi \circ \alpha_\pi \|.
\end{equation}

\item[(iv)] The explicit form of the resulting Petz map coming from the inclusion $\iota_\pi(\mathcal{B}_\pi) \subset \mathcal{A}_\pi$ is:
\begin{equation}
\label{petznonfaith}
\alpha_{\sigma}^t (a)  \equiv \Phi_{\mathcal{B}}\left( \varsigma^{\sigma;\mathcal{B}}_{-t} \left(  J_{\mathcal{B}} (V_{\sigma}^{(\iota_\pi)})^*  J_\cA  \varsigma^{\sigma, \cA}_t(a)  J_\cA V_{\sigma}^{(\iota_\pi)} J_{\mathcal{B}}  \right) \right) + \sigma(a) (1-\pi^{\mathcal{B}}(\sigma)),
\end{equation}
where the embedding $V_{\sigma}^{(\iota_\pi)}$ is defined for the projected inclusion as
\begin{equation}
\label{embedproj}
V_\sigma^{(\iota_\pi)}  \left( b \big| \xi_{\sigma}^{\mathcal{B}} \big> \right)  = \iota_\pi( b ) \left| \xi_\sigma^\cA \right> \,, \qquad b\in \mathcal{B}_\pi ,
\end{equation}
and where $|\xi_\sigma^\cA\rangle$ and $|\xi_{\sigma}^{\mathcal{B}} \rangle$ are now cyclic and separating for $\mathcal{A}_\pi$ and $\mathcal{B}_\pi$ respectively. 

\end{itemize}
\end{lemma}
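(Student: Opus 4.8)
The strategy is to trade the non-faithful state $\sigma$ on $\mathcal{A}$ for a faithful one on a cut-down algebra, prove the structural statements there, and carry them back along the $*$-isomorphisms $\Phi_\cA,\Phi_\mathcal{B}$. The point is that the defect of faithfulness of $\sigma$ is concentrated on the mutually commuting projections $1-\pi^\cA(\sigma)\in\mathcal{A}$ and $1-\pi^{\cA'}(\sigma)\in\mathcal{A}'$, and that $\pi^{\cA'}(\sigma)=J_\cA\pi^\cA(\sigma)J_\cA$, so both defects can be removed at once by passing to $\mathscr{H}_\pi=\pi^\cA(\sigma)\pi^{\cA'}(\sigma)\mathscr{H}$. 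Everything after (i) is then an exercise in the standard invariance properties of relative entropy, fidelity and the predual norm.

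For (i) I would recall that for a projection $e\in\mathcal{A}$ the reduced algebra $e\mathcal{A}e$ is a von Neumann algebra on $e\mathscr{H}$, and that for a projection $e'\in\mathcal{A}'$ the induced algebra $\mathcal{A}e'$ is a von Neumann algebra on $e'\mathscr{H}$; taking $e=\pi^\cA(\sigma)$ and $e'=\pi^{\cA'}(\sigma)$ (which commute) then exhibits $\mathcal{A}_\pi$ and $\mathcal{B}_\pi$ as von Neumann algebras on $\mathscr{H}_\pi$ and $\mathscr{K}_\pi$. To see that $\Phi_\cA$ is a $*$-isomorphism and not merely a homomorphism, the key fact is that induction $a\mapsto a\,\pi^{\cA'}(\sigma)$ is faithful on $\pi^\cA(\sigma)\mathcal{A}\pi^\cA(\sigma)$ because the central supports of $\pi^\cA(\sigma)$ and of $\pi^{\cA'}(\sigma)=J_\cA\pi^\cA(\sigma)J_\cA$ coincide, which is the classical compatibility of $J_\cA$ with the center. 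Since $\sigma$ restricted to $\mathcal{A}_\pi$ has full support it is a faithful normal state, so $\mathcal{A}_\pi$ is $\sigma$-finite, and likewise for $\mathcal{B}_\pi$. For the standard-form claim I would use $J_\cA\pi^\cA(\sigma)J_\cA=\pi^{\cA'}(\sigma)$ to conclude that $J_\cA$ preserves $\mathscr{H}_\pi$, and then check that $\pi^\cA(\sigma)\pi^{\cA'}(\sigma)\mathscr{P}^\natural$ is a self-dual, $J_\cA$-invariant cone in $\mathscr{H}_\pi$ implementing the modular structure of $\mathcal{A}_\pi$. I expect this standard-form verification, together with the central-support bookkeeping, to be the main obstacle.

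For (ii) I would invoke two invariance properties of Araki's relative entropy: it is unchanged under a $*$-isomorphism (apply monotonicity \cite{Uhlmann1977} to the isomorphism and to its inverse), and for two states supported below a projection $e\in\mathcal{A}$ it agrees with the relative entropy computed on $e\mathcal{A}e$. The hypothesis $\pi^\cA(\rho)\le\pi^\cA(\sigma)$ lets both $\rho$ and $\sigma$ factor through $\pi^\cA(\sigma)\mathcal{A}\pi^\cA(\sigma)$, and combining the two properties with $\Phi_\cA$ gives the first identity; for the second I use $\Phi_\cA\circ\iota_\pi=\iota\circ\Phi_\mathcal{B}$ and the support inequality $\pi^\cA(\sigma)\le\iota(\pi^\mathcal{B}(\sigma))$ of the excerpt, which forces $\pi^\mathcal{B}(\rho)\le\pi^\mathcal{B}(\sigma)$, to repeat the argument on the $\mathcal{B}$ side. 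For (iii) I would first check that $\alpha$ is a channel: the compression $a\mapsto\pi^\cA(\sigma)a\pi^\cA(\sigma)$, the isomorphisms and $\alpha_\pi$ are completely positive and normal, and a short computation using $\Phi_\cA^{-1}(\pi^\cA(\sigma))=\pi^\cA(\sigma)\pi^{\cA'}(\sigma)$, unitality of $\alpha_\pi$, and $\Phi_\mathcal{B}(\pi^\mathcal{B}(\sigma)\pi^{\mathcal{B}'}(\sigma))=\pi^\mathcal{B}(\sigma)$ gives $\alpha(1)=\pi^\mathcal{B}(\sigma)+(1-\pi^\mathcal{B}(\sigma))=1$. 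The crucial observation is that $\rho\circ\iota$ annihilates $1-\pi^\mathcal{B}(\sigma)$, since $\rho(\iota(\pi^\mathcal{B}(\sigma)))\ge\rho(\pi^\cA(\sigma))=1$; hence the additive term drops out under $\rho\circ\iota\circ\alpha$, yielding the displayed identities and, after substituting $\iota\circ\Phi_\mathcal{B}=\Phi_\cA\circ\iota_\pi$, the functional identity $\rho\circ\iota\circ\alpha\circ\Phi=\rho\circ\Phi\circ\iota_\pi\circ\alpha_\pi$. The fidelity and norm identities \eqref{equalfunc} then follow because both the fidelity (via lem.~\ref{lem:fid}) and the predual norm are invariant under the $*$-isomorphism $\Phi$ and under reduction for states supported below $\pi^\cA(\sigma)$.

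Finally, (iv) is a specialization: on the projected algebras $\sigma$ is faithful and $|\xi_\sigma^\cA\rangle,|\xi_\sigma^\mathcal{B}\rangle$ are cyclic and separating, so the Petz map $\alpha_\pi$ of the inclusion $\iota_\pi(\mathcal{B}_\pi)\subset\mathcal{A}_\pi$ is given by the faithful formula \eqref{petzfaith} with $V_\sigma$ replaced by the embedding $V_\sigma^{(\iota_\pi)}$ of \eqref{embedproj}. Inserting this $\alpha_\pi$ into the lifting formula of (iii), and using that the modular data of $\sigma$ are already supported on $\mathscr{H}_\pi$ so that $\varsigma^{\sigma,\cA}_t$ and $\Phi_\cA$ intertwine correctly, reproduces \eqref{petznonfaith}.
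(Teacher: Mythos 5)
Your proposal is correct and follows exactly the route the paper intends: the paper's own ``proof'' of lem.~\ref{lem:proj} is a one-line remark that everything follows from standard properties of support projectors and is left to the reader, and your sketch supplies precisely those standard facts (reduced algebras $e\mathcal{A}e$ and induced algebras $\mathcal{A}e'$, the coincidence of the central supports of $\pi^\cA(\sigma)$ and $J_\cA\pi^\cA(\sigma)J_\cA$ which makes $\Phi_\cA$ injective, and the invariance of relative entropy, fidelity and the predual norm under $*$-isomorphisms and under restriction to states supported below $\pi^\cA(\sigma)$). No gaps worth flagging.
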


\begin{proof} 
The proof of this lemma uses standard properties of support projectors and is left to the reader.
\end{proof}

Note that the modular automorphism groups in \eqref{petznonfaith} can be understood as being associated to the non-cyclic and separating vector $|\xi_\sigma^\cA\rangle$ (resp. $|\xi_{\sigma}^{\mathcal{B}} \rangle$) for the original
algebra $\mathcal{A}$ (resp. $\mathcal{B}$), which are however
defined to project to zero away from the $\mathscr{H}_\pi$ (resp. $\mathscr{K}_\pi$) subspace. So, for example $ \varsigma^{\sigma}_{t=0}(a) = \pi^{\cA'}(\sigma) \pi^\cA(\sigma) a \pi^\cA(\sigma)$.
Similarly $V_\sigma^{(\iota_\pi)}$ can be understood in this way, as being defined on the subspaces $\mathscr{K}_\pi$ and projecting to zero away from this subspace via:
\begin{equation}
V_\sigma^{(\iota_\pi)}  \left( \pi^{\mathcal{B}}(\sigma) b \big| \xi_{\sigma}^{\mathcal{B}} \big> + \left| \zeta \right> \right)  = \iota( \pi^{\mathcal{B}}(\sigma))  \iota(b) \left| \xi_\sigma^\cA \right> \,, \quad b\in \mathcal{B} \,, \quad \left| \zeta \right> \in (1-\pi^{\mathcal{B}'}(\sigma)\pi^{\mathcal{B}}(\sigma)) \mathscr{K}
\end{equation}

An obvious corollary is:
\begin{corollary}[Theorem 1 in the non-faithful case]
Theorem \ref{thm1} continues to hold when $\sigma$ is non-faithful but still $\pi^\cA(\rho) \leq \pi^\cA(\sigma)$. The recovery map is now given by \eqref{petznonfaith}.
\end{corollary}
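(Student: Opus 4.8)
The plan is to reduce the non-faithful statement to the faithful Theorem~\ref{thm1} by passing to the projected algebras $\iota_\pi(\mathcal{B}_\pi)\subset\mathcal{A}_\pi$ furnished by Lemma~\ref{lem:proj}. The decisive point is that on these compressed algebras the reference state becomes faithful: $\mathcal{A}_\pi$ acts on $\mathscr{H}_\pi=\pi^\cA(\sigma)\pi^{\cA'}(\sigma)\mathscr{H}$ with $|\xi_\sigma^\cA\rangle$ cyclic and separating, and likewise for $\mathcal{B}_\pi$ on $\mathscr{K}_\pi$ (Lemma~\ref{lem:proj}(iv)). Hence Theorem~\ref{thm1} applies verbatim to the inclusion $\iota_\pi(\mathcal{B}_\pi)\subset\mathcal{A}_\pi$ with faithful reference $\sigma\circ\Phi$ and any state $\rho\circ\Phi$, giving
\begin{equation}
S(\rho\circ\Phi | \sigma\circ\Phi) - S(\rho\circ\Phi\circ\iota_\pi | \sigma\circ\Phi\circ\iota_\pi) \geq -2\int_{-\infty}^{\infty} \ln F\big(\rho\circ\Phi,\, \rho\circ\Phi\circ\iota_\pi\circ\alpha_\pi^t\big)\, p(t)\, \ud t,
\end{equation}
where $\alpha_\pi^t:\mathcal{A}_\pi\to\mathcal{B}_\pi$ is the rotated Petz map of Theorem~\ref{thm1} built from the embedding $V_\sigma^{(\iota_\pi)}$ and the modular flows of the now-faithful $\sigma\circ\Phi$.

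Next I would transport this inequality back to the original algebras. The hypothesis $\pi^\cA(\rho)\le\pi^\cA(\sigma)$ is exactly what licenses the use of parts~(ii) and~(iii) of Lemma~\ref{lem:proj}: part~(ii) yields $S_\cA(\rho|\sigma)=S(\rho\circ\Phi | \sigma\circ\Phi)$ and, recalling $S_\cB(\rho|\sigma)\equiv S(\rho\circ\iota | \sigma\circ\iota)$ from \eqref{eq:SB}, also $S_\cB(\rho|\sigma)=S(\rho\circ\Phi\circ\iota_\pi | \sigma\circ\Phi\circ\iota_\pi)$, so the left-hand side above equals $S_\cA(\rho|\sigma)-S_\cB(\rho|\sigma)$. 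For the right-hand side I choose $\alpha_\pi=\alpha_\pi^t$ in the recipe of part~(iii) and let $\alpha_\sigma^t:\mathcal{A}\to\mathcal{B}$ be the associated channel; part~(iii) then supplies, for each fixed $t$, the fidelity identity $F(\rho,\rho\circ\iota\circ\alpha_\sigma^t)=F(\rho\circ\Phi,\rho\circ\Phi\circ\iota_\pi\circ\alpha_\pi^t)$. Substituting both identities turns the displayed inequality into the claimed \eqref{streng} with recovery map $\alpha_\sigma^t$.

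It remains only to check that the abstract channel produced by part~(iii), fed the rotated Petz map $\alpha_\pi^t$, equals the explicit formula \eqref{petznonfaith}; this is the content of Lemma~\ref{lem:proj}(iv), so no fresh computation is required. The single point that deserves genuine care — and which I expect to be the only real obstacle — is the identification of the modular flows: in \eqref{petznonfaith} the flows $\varsigma^{\sigma,\cA}_t$ and $\varsigma^{\sigma;\mathcal{B}}_{-t}$ are understood as generated by the \emph{non}-cyclic-separating vectors $|\xi_\sigma^\cA\rangle$, $|\xi_\sigma^\mathcal{B}\rangle$ and projecting to zero off $\mathscr{H}_\pi$, $\mathscr{K}_\pi$, whereas $\alpha_\pi^t$ uses the genuine modular flow of the faithful $\sigma\circ\Phi$. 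These coincide under the $*$-isomorphism $\Phi$ because compression by the support projection intertwines modular data: the modular operator $\Delta_\sigma$ restricts on $\mathscr{H}_\pi$ to the modular operator of $\sigma\circ\Phi$ on $\mathcal{A}_\pi$, and $\Phi$ conjugates the corresponding flows. Once this intertwining is recorded the recovery map is identified and the corollary follows.
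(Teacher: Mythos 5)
Your proposal is correct and is precisely the argument the paper intends: the corollary is stated as an immediate consequence of Lemma~\ref{lem:proj}, whose parts (ii), (iii), (iv) supply exactly the relative-entropy equalities, the fidelity identity, and the identification of the recovery map with \eqref{petznonfaith} that you invoke after applying Theorem~\ref{thm1} to the projected (faithful) inclusion. The modular-flow subtlety you flag at the end is the same point the paper addresses in the remark following Lemma~\ref{lem:proj}, resolved in the same way.
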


From this result one can characterize approximately sufficiency using relative entropy:

\begin{theorem}
\label{thm2}
\label{thm2p}
Consider a set of normal states $\sS$ on a general v. Neumann algebra 
$\mathcal{A}$ with a subalgebra $\mathcal{B}$.  
If $\sS$ contains a state $\sigma$ such that for all $\rho \in \sS$ the following condition holds:
\begin{equation}
S(\rho | \sigma) < \infty  \quad \text{and} \quad S_{\cA}(\rho | \sigma) - S_{\mathcal B}( \sigma | \rho ) \leq \delta,
\end{equation}
then there exists a universal recovery channel $\alpha_{\sS}$ such that $\mathcal{A} \subset \mathcal{B}$ is $\epsilon$-approximately sufficient for $\sS$. 
(Here $\delta = -\ln(1-\epsilon^2/4)$.)

The explicit form of the recovery map is:
\begin{equation}
\label{recover}
\alpha_{\sS} : \cA \owns a \mapsto \int_{-\infty}^{\infty} \alpha_\sigma^t(a)  \, p(t) \ud t \in \mathcal{B},
\end{equation}
where $\alpha_\sigma^t$ was given in \eqref{petznonfaith}.
We can make sense of the later integral as a  Lebesgue integral of a weakly measurable function with values in $\mathcal{B}$, thought of as a Banach space.
\end{theorem}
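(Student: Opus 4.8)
The plan is to check that the explicitly given averaged map $\alpha_{\sS}$ is a bona fide channel and then to feed the strengthened monotonicity of Theorem~\ref{thm1} into two applications of Jensen's inequality, closing with the Powers--St\"ormer estimate \eqref{eq:PS}. Throughout, fix $\rho\in\sS$ and abbreviate $\tau_t:=\rho\circ\iota\circ\alpha_\sigma^t$ and $\tau:=\rho\circ\iota\circ\alpha_{\sS}$, both normal states on $\cA$. Note first that $S(\rho|\sigma)<\infty$ forces $\pi^\cA(\rho)\le\pi^\cA(\sigma)$, so the non-faithful form of Theorem~\ref{thm1} (its corollary) is available with the rotated Petz maps \eqref{petznonfaith}.

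I would begin with well-definedness. Each $\alpha_\sigma^t$ is a channel, hence unital, completely positive and contractive, $\|\alpha_\sigma^t(a)\|\le\|a\|$. Because the modular flows $\varsigma^{\sigma,\cA}_t,\varsigma^{\sigma,\cB}_{-t}$ are ultra-weakly continuous in $t$ and the remaining ingredients of \eqref{petznonfaith} are fixed bounded operators, $t\mapsto\alpha_\sigma^t(a)$ is ultra-weakly continuous; hence for each $\omega\in\cB_\star$ the map $t\mapsto\omega(\alpha_\sigma^t(a))$ is continuous and bounded by $\|\omega\|\,\|a\|$. Since $p$ is a probability density, $\omega\mapsto\int_{-\infty}^{\infty}\omega(\alpha_\sigma^t(a))\,p(t)\,\ud t$ is a bounded functional on $\cB_\star$ of norm at most $\|a\|$, i.e. an element of $(\cB_\star)^*=\cB$; this weak-$*$ (Pettis) integral is $\alpha_{\sS}(a)$. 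Testing against positive normal functionals and their matrix amplifications shows $\alpha_{\sS}$ inherits complete positivity; $\int p\,\ud t=1$ gives unitality; and monotone convergence along bounded increasing nets gives normality. Thus $\alpha_{\sS}$ is a channel, and as it depends on $\sigma$ alone it is universal over $\sS$.

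Next I would extract and propagate the fidelity bound. The corollary to Theorem~\ref{thm1} gives
\begin{equation}
\delta\ \ge\ S_\cA(\rho|\sigma)-S_\cB(\rho|\sigma)\ \ge\ -2\int_{-\infty}^{\infty}\ln F(\rho,\tau_t)\,p(t)\,\ud t ,
\end{equation}
and concavity of $\ln$ (Jensen) gives $\int\ln F(\rho,\tau_t)\,p\,\ud t\le\ln\!\int F(\rho,\tau_t)\,p\,\ud t$, whence $\int_{-\infty}^{\infty}F(\rho,\tau_t)\,p(t)\,\ud t\ge e^{-\delta/2}$. Since $\rho\circ\iota$ is linear and ultra-weakly continuous it commutes with the weak-$*$ integral, so $\tau=\int\tau_t\,p(t)\,\ud t$; as the Uhlmann fidelity is jointly concave (lem.~\ref{lem:fid}), Jensen in its second slot yields
\begin{equation}
F(\rho,\tau)\ \ge\ \int_{-\infty}^{\infty}F(\rho,\tau_t)\,p(t)\,\ud t\ \ge\ e^{-\delta/2}.
\end{equation}

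Finally I would convert the fidelity bound into the functional norm. Representing both states in the natural cone, $F(\rho,\tau)=\langle\xi_\rho|\xi_\tau\rangle\in[0,1]$ (lem.~\ref{lem:fid}), and with $\rho(1)=\tau(1)=1$ one has $\|\xi_\rho\mp\xi_\tau\|^2=2\mp2F(\rho,\tau)$; feeding these into \eqref{eq:PS} gives the Fuchs--van de Graaf estimate $\|\rho-\tau\|\le\|\xi_\rho-\xi_\tau\|\,\|\xi_\rho+\xi_\tau\|=2\sqrt{1-F(\rho,\tau)^2}$. With $\delta=-\ln(1-\epsilon^2/4)$ we get $F(\rho,\tau)^2\ge e^{-\delta}=1-\epsilon^2/4$, so $\|\rho-\rho\circ\iota\circ\alpha_{\sS}\|\le2\sqrt{\epsilon^2/4}=\epsilon$ for every $\rho\in\sS$, which is the asserted $\epsilon$-approximate sufficiency. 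I expect the genuine obstacle to lie not in this assembly but in the measure-theoretic bookkeeping it presupposes: that the weak-$*$ integral lands in $\cB$ and defines a \emph{normal} channel, and that the integral legitimately interchanges with $\rho\circ\iota$ and with the concave fidelity functional, both resting on the properties collected in lem.~\ref{lem:fid}; the hard analysis itself is already discharged by Theorem~\ref{thm1}.
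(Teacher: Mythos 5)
Your overall assembly is the same as the paper's: verify that $\alpha_{\sS}$ is a channel, feed Theorem~\ref{thm1} in, pass the average inside the fidelity, and finish with the Fuchs--van-de-Graaf estimate from lem.~\ref{lem:fid}~(3). The channel verification, the reduction to $\pi^\cA(\rho)\le\pi^\cA(\sigma)$, and the final arithmetic $F^2\ge e^{-\delta}=1-\epsilon^2/4\Rightarrow\|\rho-\tau\|\le\epsilon$ all match the paper and are fine.

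The genuine gap is the step you dispose of as ``Jensen in its second slot.'' Joint concavity of the fidelity gives $F(\rho,\sum_i p_i\tau_{t_i})\ge\sum_i p_iF(\rho,\tau_{t_i})$ only for \emph{finite} convex combinations; upgrading this to the weak-$*$ integral $\tau=\int\tau_t\,p(t)\,\ud t$ is exactly the point the paper singles out as nontrivial in the general v.~Neumann setting. The authors state explicitly that they know of no off-the-shelf Jensen inequality for concave functionals of Banach-space-valued integrals applicable here, and that the Riemann-sum approximation used by Junge et al.\ in the type-I case is not evidently available. Their resolution is a bespoke argument: they realize the family of recovered states via the strongly continuous vectors $|\Upsilon_t\rangle=|\Gamma_\psi(1/2+it)\rangle$ inside the Hilbert space $L_2(\mathbb{R};\sH;p(t)\ud t)$, invoke the variational formula \eqref{uhl2} for the fidelity, and use the Michael selection theorem to extract a \emph{norm-continuous} family of near-optimizers $y'_t\in\cA'$ with $\|y'_t\|<1$; this family defines an element of $(\cA\otimes 1)'$ that is admissible in the variational principle for $F_{\cA\otimes 1}$ and thereby witnesses $F(\rho,\tau)\ge\int F(\rho,\tau_t)\,p(t)\,\ud t$ up to an arbitrarily small error. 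You correctly flag in your closing sentence that the interchange of the integral with the concave fidelity functional is the load-bearing measure-theoretic step, but flagging it is not proving it: as written, the central inequality of your argument is asserted rather than established, and filling it in requires either the selection-theorem construction above or some substitute of comparable substance.
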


\begin{remark}
Less powerful antecedents of thms. \ref{thm1}, \ref{thm2} can be found in \cite{Berta,Carlen,Jencova,JencovaPetz,Sutter,Wilde}.
\end{remark}

An example of a set of states that satisfy the assumptions in thm.~\ref{thm2p} is simply $\sS = \sR_{\delta}^{(\sigma)}$ \eqref{Rsigma} for any state $\sigma$. If we were to additionally assume that $\mathcal{A}$ is $\sigma$-finite then we could also pick $\sS$ to be any closed convex set of states such that
\begin{equation}\label{eq:cond}
\rho_{1,2} \in \sS\,,\quad \pi^\cA(\rho_1) \leq \pi^\cA(\rho_2) \implies S_{\cA}(\rho_1|\rho_2)< \infty  \quad \text{and} \quad   S_{\cA}(\rho_1 | \rho_2) - S_{\mathcal B}( \rho_1 | \rho_2 ) \leq \delta.
\end{equation}
To see this, note that the $\sigma$-finite condition imposes that all families of mutual orthogonal projectors in $\mathcal{A}$ are at most countable.  
This is satisfied for v. Neumann algebras that act on a separable Hilbert space, and is equivalent to the assumption that there is a faithful state in $\mathcal{A}_\star$. 
Then \eqref{eq:cond} is sufficient for finding a $\sigma$ that works with thm.~\ref{thm2p} due to the following basic result:

\begin{lemma}\label{lem:2}
Given a closed convex subset of normal states $\sS \subset \mathcal{A}_\star$ for a $\sigma$-finite v. Neumann algebra $\mathcal{A}$ then we can always find a $\sigma \in \sS$ such that:
\begin{equation}
\label{allincl}
\pi^\cA(\rho)  \leq \pi^\cA(\sigma) \,, \qquad \forall \rho \in \sS
\end{equation}
\end{lemma}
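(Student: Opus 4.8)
The plan is to exhibit the required $\sigma$ as a countable convex combination of states in $\sS$, using $\sigma$-finiteness to reduce the join of all support projections to a countable subfamily. Recall that $\sigma$-finiteness of $\mathcal{A}$ is equivalent to the existence of a faithful normal state; I would fix such an $\omega \in \mathcal{A}_\star$ at the outset and use its order-continuity (normality) throughout.

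First I would set $P \equiv \bigvee_{\rho \in \sS} \pi^\cA(\rho)$, the supremum in $\mathcal{A}$ of all support projections of states in $\sS$; this is the natural candidate for $\pi^\cA(\sigma)$. The crucial step is to realize $P$ as the supremum of a \emph{countable} subfamily. The finite joins $f_F \equiv \bigvee_{\rho \in F} \pi^\cA(\rho)$, indexed by finite subsets $F \subset \sS$ ordered by inclusion, form a bounded increasing net with supremum $P$, so by normality $\omega(P) = \sup_F \omega(f_F)$. I would choose finite subsets $F_n$ with $\omega(f_{F_n}) \to \omega(P)$, let $J \equiv \bigcup_n F_n$ (a countable set, which I enumerate as $\{\rho_n\}_{n \in \mathbb{N}}$), and put $g \equiv \bigvee_n \pi^\cA(\rho_n)$. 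Then $g \le P$ while $\omega(g) \ge \sup_n \omega(f_{F_n}) = \omega(P)$, so $\omega(P - g) = 0$; faithfulness of $\omega$ then forces $g = P$. Thus $P = \bigvee_n \pi^\cA(\rho_n)$ for the sequence $\{\rho_n\} \subset \sS$.

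Next I would define $\sigma \equiv \sum_n c_n \rho_n$ with fixed weights $c_n > 0$, $\sum_n c_n = 1$. The series converges absolutely in the Banach space $\mathcal{A}_\star$ since $\|\rho_n\| = 1$, and the normalized partial sums $\sigma_N \equiv \left( \sum_{n \le N} c_n \right)^{-1} \sum_{n \le N} c_n \rho_n$ lie in $\sS$ by convexity and converge in norm to $\sigma$; as $\sS$ is closed, $\sigma \in \sS$. Finally I would identify $\pi^\cA(\sigma)$ with $P$: for any projection $q$, the quantity $\sigma(q) = \sum_n c_n \rho_n(q)$ is a sum of non-negative terms with all $c_n > 0$, so $\sigma(q) = 0$ iff $\rho_n(q) = 0$ for every $n$, i.e. iff $q \le 1 - \pi^\cA(\rho_n)$ for every $n$, i.e. iff $q \le 1 - P$. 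Hence the largest projection annihilated by $\sigma$ is $1 - P$, giving $\pi^\cA(\sigma) = P \ge \pi^\cA(\rho)$ for all $\rho \in \sS$, which is exactly \eqref{allincl}.

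The only genuinely delicate point is the countability reduction of the second paragraph, which is precisely where the $\sigma$-finiteness hypothesis is consumed; without it the supremum $P$ need not be approximated by countably many states and the convex-combination construction would not produce an element of $\sS$. The remaining ingredients — norm convergence of the series in the predual, membership in the closed convex set $\sS$, and the computation that the support of a strictly positive convex combination is the join of the individual supports — are routine.
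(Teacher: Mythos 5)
Your proof is correct, and it takes a genuinely different route from the paper's. The paper applies Zorn's lemma to the collection of difference projectors $\{\pi^\cA(\rho_i)-\pi^\cA(\rho_j) : \pi^\cA(\rho_j)\le\pi^\cA(\rho_i)\}$, extracts a maximal mutually orthogonal family (countable by $\sigma$-finiteness), forms $\sigma=\sum_n 2^{-n}\rho_{i_n}$ from the associated states, and then argues by contradiction: if some $\pi^\cA(\rho_k)\not\le\pi^\cA(\sigma)$, convexity would produce a new difference projector orthogonal to the whole family, violating maximality. You instead target the supremum $P=\bigvee_{\rho\in\sS}\pi^\cA(\rho)$ directly, and use the other standard characterization of $\sigma$-finiteness --- the existence of a faithful normal state $\omega$ --- together with normality along the increasing net of finite joins to extract a countable subfamily whose supports already join up to $P$; the final identification $\pi^\cA(\sigma)=P$ is then a direct computation with no maximality argument. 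Both proofs consume $\sigma$-finiteness exactly once to get countability and both conclude with the same norm-convergent convex combination lying in the closed convex set $\sS$. Your version is arguably cleaner: the verification that the support of a strictly positive countable convex combination is the join of the individual supports is more transparent than the paper's contradiction step, and it sidesteps having to check that the difference of two nested projections is again a projection. The paper's version avoids invoking the faithful-state characterization, working only with the countability of orthogonal families, but at the cost of a slightly more delicate maximality argument.
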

\begin{proof}
Given in app. \ref{lemma2}.
\end{proof}

\section{Proof of main theorems}

Our eventual goal in this section is to prove our main results, thm.s \ref{thm1} and \ref{thm2}. As discussed above, without loss of generality we can take $\sigma$ to be faithful and so we will assume this from now on. 

The proof is divided into several 
steps. In subsec. \ref{subsec:5.1}, we first fix some notation and recall basic facts about the vectors that we are dealing with. 
In subsec. \ref{subsec:5.2}, we introduce the non-commutative $L_p$-space by Araki and Masuda \cite{AM} and explain its -- in principle well-known -- relation to the fidelity. We make certain minor modifications to the standard setup and prove a simple but important intermediate result which we call a ``first law'', thm. \ref{firstlaw}. In subsec. \ref{subsec:5.3}, we motivate the definition of certain interpolating vectors that will be of main interest in the following subsections and in subsec. \ref{subsec:5.4} we prove some of their basic properties. 
Subsec. \ref{subsec:5.5}
is the most technical section. It introduces certain regularized (``filtered'') versions of our interpolating vectors and their properties. Our definition of filtered vectors involves a certain cutoff, $P$, that is defined in terms of relative modular operators. A quite general result of independent interest is that the relative entropy behaves continuously 
as this cutoff is removed, thm. \ref{lem:finites}. Armed with this technology, we can then complete the proofs in 
subsec. \ref{subsec:5.6} using an interpolation result for Araki-Masuda $L_p$ spaces, lem. \ref{lem:hirsch}.

\subsection{Isometries $V_\psi$ for general states, notation}
\label{subsec:5.1}

Since the two states $\sigma,\rho$ play a central role in thm.~\ref{thm1} we will use a special notation for the vectors that represent these states in their respective natural cones:
\begin{equation}
\label{eq:notation}
\left| \eta_\cA \right> \equiv \left| \xi_\sigma^\cA \right> \,, \quad \left| \eta_{\mathcal{B}} \right> \equiv \big| \xi_{\sigma}^{\mathcal{B}} \big>\,, \qquad 
\left| \psi_\cA \right> \equiv \left| \xi_\rho^\cA \right> \,, \quad \left| \psi_{\mathcal{B}} \right> \equiv \big| \xi_{\rho}^{\mathcal{B}} \big>\,
\end{equation}
where $|\eta_\cA \rangle \in \mathscr{H}\,\,( |\eta_{\mathcal{B}} \rangle \in \mathscr{K})$ are cyclic and separating for $\mathcal{A} \,\,(\mathcal{B})$. 

We will also choose to label various objects, such as support projectors, and the modular operators discussed below, for the most part with the vector rather than the linear functional as we did in Section~\ref{sec:above}.  This will be convenient since we will occasionally have to work with vectors that do not necessarily live in the natural cone.  
For example, given a $\ket{\chi} \in \mathscr{H}$ we define:
\begin{equation}
\pi^{\cA'}(\chi) \equiv \pi^{\cA'}( \omega_\chi' ) \,, \qquad \pi^\cA(\chi) \equiv \pi^\cA( \omega_\chi )
\end{equation}
where $\omega_\chi' \in \mathcal{A}'_\star$ is the induced linear functional of $\ket{\chi} \in \mathscr{H}$ on the commutant. 
For vectors $\ket{\xi}$ in the natural cone we have a symmetry between the support projectors $\pi^{\cA}(\xi)= J_\cA \pi^\cA(\xi)J_\cA$. We use similar notation for objects associated to the algebras $\mathcal{B}$. When the only algebra 
in question is $\cA$, we write 
\ben
\pi(\chi) \equiv \pi^{\cA}(\chi), \quad \pi'(\chi) \equiv \pi^{\cA'}(\chi).
\een
We have already recalled that a general vector $|\chi \rangle \in \sH$ is related
to a unique vector in the natural cone inducing the same linear functional on $\mathcal{A}$. More precisely, 
there is a partial isometry in $v'_\chi \in \mathcal{A}'$ such that 
\begin{equation}
\label{genvec}
\left| \chi \right> = {v'_{\chi}}^* \left| \xi_\chi \right>\, , \qquad   v'_{\chi} {v'_{\chi}}^* =\pi^{\cA'}(\chi) 
 \,, \qquad {v'_{\chi}}^* v'_{\chi} = \pi^{\cA'}(\chi)
\end{equation}
Now consider a vector $|\psi_\cA\rangle = |\xi^\cA_\psi\rangle \in \mathscr{P}^\natural_\cA$  and define a corresponding vector in $\mathscr{K}$ using $
|\psi_{\mathcal{B}}\rangle \equiv \xi^{\mathcal{B}}_{\psi} \in \mathscr{P}^\natural_{\mathcal{B}}$. 
The  vector $V_\eta \left| \psi_{\mathcal{B}} \right> \in \mathscr{H}$ induces the same linear functional on $\iota(\mathcal{B})$ as $\left| \psi_\cA \right>$,
where we use exchangeably the notation $V_\eta = V_\sigma$ for the embedding \eqref{embed}. Thus there exists a partial isometry $u_{\psi;\eta}'$in $\iota(\mathcal{B})'$, with implied initial and final support, relating the two vectors 
\begin{equation}
\label{Vetaother}
V_\eta \left| \psi_{\mathcal{B}} \right> ={u_{\psi,\eta}' }^* \left| \psi_\cA \right>.
\end{equation}
Combining this with \eqref{embedcomm} we have for $b \in \mathcal{B}$
\begin{equation}
V_\eta b \left| \psi_{\mathcal{B}} \right>  = \iota(b)  {u_{\psi,\eta}' }^*  \left| \psi_\cA \right>.
\end{equation}
Since this notation is cumbersome we will simply define a new isometry 
$V_\psi :  \mathscr{K} \rightarrow \mathscr{H}$ that is defined with reference to $\ket{\psi}$
\begin{equation}
V_\psi \equiv u_{\psi,\eta}' V_\eta\, .  
\end{equation}
It will also be convenient to have $V_\chi$ defined for states $|\chi\rangle \in \sH$ that are not necessarily 
in the natural cone. In that case, we extend this definition further:
\begin{equation}
\label{Vchi}
\quad V_\chi \equiv {v'_\chi}^* u_{\xi_\chi;\xi_\eta}' V_\eta   \, .
\end{equation}
These satisfy
\begin{equation}
\label{Vpsiprop}
V_\chi b \left| \xi^{\mathcal B}_\chi \right> = \iota(b) \left| \chi \right> .
\end{equation}

\subsection{$L_p$ spaces, fidelity and relative entropy}
\label{subsec:5.2}

In this part we introduce various quantum information measures that will be useful to characterize sufficiency. We have already seen the importance of relative entropy and the fidelity. What we need are quantities interpolating between them. These will be provided by the non-commutative $L_p$ norm associated with a v. Neumann algebra, with reference to a state/vector. There exist different definitions of such norms/spaces in the literature; here we basically follow the version by Araki and Masuda \cite{AM}, suitably generalized to non-faithful states. Such a generalization was considered up to a certain extent by \cite{Berta2}, see also \cite{JencovaLp1} for related work.

\begin{definition} \cite{AM} Let $\mathcal M$ be a v. Neumann algebra in standard form acting on a Hilbert space $\sH$. For $1 \leq p \leq 2$ the Araki-Masuda $L_p({\mathcal M}, \psi)$ norms, with reference to a fixed vector $\ket{\psi} \in \mathscr{H}$, are defined by\footnote{The Araki-Masuda norms were originally defined assuming a faithful normal reference state. 
For the most part we will only ever need the definition of the norm \eqref{AMdef} for vectors in the Hilbert space, along with some simple consequences of this variational formula. Thus we will not need the full machinery developed by \cite{AM}, except at some crucial steps in the interpolation argument below that we will highlight. When this is the case we will apply their results for a \emph{faithful state} and prove that one can extrapolate to the case at hand. 
}:
\begin{equation}
\label{AMdef}
\left\| \zeta \right\|_{p,\psi}^{\mathcal{M}}
= \inf_{ \chi \in \mathscr{H}:  \| \chi \| = 1, \pi^{\mathcal{M}}(\chi) \geq \pi^{\mathcal{M}}(\zeta) }
\| \Delta_{\chi,\psi}^{1/2-1/p} \zeta  \| 
\end{equation}
where the definition above only depends on the functional $\omega_\psi$ but not the choice of vector representative, $|\psi\rangle$. 
\end{definition}

\begin{remark}
1)
The norm is always finite for this range of $p$. We will use the $L_p$ norms mostly for the commutant algebra $\cA'$ of $\cA$. Then,
\begin{equation}
\label{projequal}
 \left\| \zeta \right\|_{p,\psi}^{\mathcal{A}'}=  \left\| \pi^\cA(\psi) \zeta\right\|_{p,\psi}^{\mathcal{A}'},
\end{equation}
due to the (possibily) restricted support of the relative modular operator. 

2) For $1\ge\alpha\ge1/2$, the quantity 
$\frac{1}{\alpha-1} \ln\| \eta \|_{2\alpha,\psi}^{2\alpha}$ is sometimes called the ``sandwiched Renyi entropy'' (between $\ket{\eta}, \ket{\psi}$). 
It is in general different from the ``usual'' Renyi-Petz entropy, $\frac{1}{\alpha-1} \ln \langle \psi | \Delta_{\psi, \eta}^{1-\alpha} \psi \rangle$. 
Both quantities, as well as the $L_p$ norms, can be defined or more general values of the parameters but are not needed here.
\end{remark}

When $p=2$, the $L_p$ norm becomes the projected Hilbert space norm:
\begin{equation}
\label{eq:h2}
\left\|  \zeta  \right\|_{2,\psi}^{\mathcal{A}'} = \left\| \pi^\cA(\psi) \zeta  \right\|.
\end{equation}
Taking a derivative at $p=2$ will give the relative entropy comparing $\zeta$ with $\psi$ 
as linear functionals on $\mathcal{A}$, see below. 

At $p=1$ we have the following lemma:
\begin{lemma}
\label{lem:fid}
\begin{enumerate}
\item
At $p=1$ the Araki-Masuda norm \eqref{AMdef} relative to $\cA'$ becomes the fidelity
\begin{equation}
\label{uhl1}
\left\|  \phi  \right\|_{1,\psi} = F(\omega_\psi,\omega_\phi) \equiv \sup_{u' \in \mathcal{A}': (u')^* u' = 1} |\left< \psi \right| u' \left| \phi \right>|,
\end{equation}
where $\omega_\phi,\omega_\psi \in \mathcal{A}$ are the induced linear functionals for $\ket{\phi},\ket{\psi}$, respectively. 

\item
The fidelity may also be written as
\begin{equation}
\label{uhl2}
F(\omega_\psi,\omega_\phi) = \sup_{x' \in \mathcal{A}': \| x'\| \leq 1} |\left< \psi \right| x' \left| \phi \right>|.
\end{equation}

\item 
It is related to the linear functional norm (Fuchs-van-der-Graff inequalities) by
\begin{equation}
1 -  F(\omega_\psi,\omega_\phi)  \leq  \frac{1}{2} \| \omega_\psi - \omega_\phi \| \leq   \sqrt{  1 - F(\omega_\psi,\omega_\phi) ^2 }.
\end{equation}
\end{enumerate}
\end{lemma}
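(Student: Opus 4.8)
The plan is to handle the three assertions in order, treating part~1 (the identification of the $L_1$ norm with the fidelity) as the substantive step and parts~2 and~3 as comparatively standard consequences. For part~1, I would first convert the commutant relative modular operator appearing in \eqref{AMdef} at $p=1$ into an algebra relative modular operator using the self-duality relation $\Delta'_{\chi,\psi}=\Delta_{\psi,\chi}^{-1}$ from \eqref{rels}, so that
\[
\left(\left\| \phi \right\|_{1,\psi}^{\cA'}\right)^{2}
=\inf_{\chi:\ \|\chi\|=1,\ \pi^{\cA'}(\chi)\ge \pi^{\cA'}(\phi)}\ \langle \phi | \Delta_{\psi,\chi} | \phi \rangle .
\]
The inequality $F(\omega_\psi,\omega_\phi)\le \|\phi\|_{1,\psi}^{\cA'}$ I would then get from a Cauchy--Schwarz estimate: for any admissible $\chi$ and any isometry $u'\in\cA'$ with ${u'}^{*}u'=1$, I insert $\Delta_{\psi,\chi}^{1/2}\Delta_{\psi,\chi}^{-1/2}$ into the pairing $\langle\psi|u'|\phi\rangle$ and split it, one factor producing $\langle\phi|\Delta_{\psi,\chi}|\phi\rangle^{1/2}=\|(\Delta'_{\chi,\psi})^{-1/2}\phi\|$ and the other being bounded by $\|\chi\|=1$ together with $u'{u'}^{*}\le 1$; taking the supremum over $u'$ and the infimum over $\chi$ yields the bound. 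For the matching inequality $\|\phi\|_{1,\psi}^{\cA'}\le F$ I would exhibit a (near-)optimal $\chi$: in the tracial model the minimiser of $\langle\phi|\Delta_{\psi,\chi}|\phi\rangle$ is the normalised ``square root'' state $\rho_\chi\propto(\Phi^{*}\rho_\psi\Phi)^{1/2}$, giving $\big(\operatorname{Tr}|\sqrt{\rho_\psi}\sqrt{\rho_\phi}|\big)^{2}=F^{2}$, and in general I would invoke the Araki--Masuda identification \cite{AM} for a faithful reference state and extrapolate via the support-projection reduction, as flagged in the footnote to the definition.

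For part~2, the bound $\sup_{{u'}^{*}u'=1}\le \sup_{\|x'\|\le 1}$ is immediate since isometries are contractions. For the reverse, I would use that $x'\mapsto \langle\psi|x'|\phi\rangle$ is a bounded linear functional on $\cA'$ together with the Russo--Dye theorem, which identifies the closed unit ball of $\cA'$ with the norm-closed convex hull of its unitaries; linearity then gives $\sup_{\|x'\|\le 1}|\langle\psi|x'|\phi\rangle|\le \sup_{u'\ \mathrm{unitary}}|\langle\psi|u'|\phi\rangle|\le F$, since unitaries satisfy ${u'}^{*}u'=1$.

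For part~3, I would pass to the natural-cone representatives $|\xi_\psi\rangle,|\xi_\phi\rangle$ (the fidelity depending only on the functionals) and set $c\equiv\langle\xi_\psi|\xi_\phi\rangle\ge 0$, so that $\|\xi_\psi\mp\xi_\phi\|^{2}=2\mp 2c$ and, taking $u'=1$ in \eqref{uhl1}, $c\le F$. The left inequality then follows from the lower Powers--St\"ormer bound in \eqref{eq:PS}:
\[
\tfrac12\|\omega_\psi-\omega_\phi\|\ \ge\ \tfrac12\|\xi_\psi-\xi_\phi\|^{2}=1-c\ \ge\ 1-F .
\]
For the right inequality I would instead choose vectors achieving the fidelity: taking $u'\in\cA'$ with ${u'}^{*}u'=1$ nearly optimal in \eqref{uhl1} and setting $|\phi'\rangle=u'|\xi_\phi\rangle$, one has $\omega_{\phi'}=\omega_\phi$ on $\cA$ (because ${u'}^{*}a u'=a$ for $a\in\cA$) and $|\langle\xi_\psi|\phi'\rangle|\to F$; monotonicity of the functional norm under restriction from $B(\sH)$ to $\cA$ and the pure-state identity $\big\||\xi_\psi\rangle\langle\xi_\psi|-|\phi'\rangle\langle\phi'|\big\|_{1}=2\sqrt{1-|\langle\xi_\psi|\phi'\rangle|^{2}}$ then give $\tfrac12\|\omega_\psi-\omega_\phi\|\le \sqrt{1-F^{2}}$ in the limit.

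I expect the main obstacle to be the reverse inequality in part~1, namely producing a minimising $\chi$ that saturates the Cauchy--Schwarz bound. The naive minimiser suggested by the tracial computation and the splitting of the unbounded operators $\Delta_{\psi,\chi}^{\pm 1/2}$ require careful control of domains and of the non-faithful support projections; this is precisely where I rely on the Araki--Masuda machinery for faithful states and on the extrapolation strategy, with the relations \eqref{rels} used to keep the commutant/algebra and support bookkeeping consistent.
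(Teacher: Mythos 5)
Your parts (2) and (3) are correct and essentially coincide with the paper's own argument: part (2) is the paper's explicit decomposition of a contraction into an average of unitaries, repackaged as Russo--Dye, and part (3) uses the same two ingredients as the paper (the lower Powers--St\"ormer bound \eqref{eq:PS} together with $\langle\xi_\psi|\xi_\phi\rangle\le F$ for the left inequality, and the two-dimensional pure-state computation after choosing a near-optimal isometry for the right one). Likewise the direction $F(\omega_\psi,\omega_\phi)\le\|\phi\|_{1,\psi}$ in part (1) via Cauchy--Schwarz is sound modulo routine domain and support bookkeeping.

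The gap is in the reverse inequality $\|\phi\|_{1,\psi}\le F(\omega_\psi,\omega_\phi)$, which you correctly identify as the crux but then dispose of by citing \cite{AM}, lem.~5.3 for a faithful reference state plus an unspecified ``extrapolation via the support-projection reduction.'' That citation requires $|\psi\rangle$ cyclic and separating, and the stated reason this proof appears in the paper's appendix at all is that the non-faithful case has not appeared elsewhere. The paper does \emph{not} extrapolate from the faithful case here: it argues directly, by polar-decomposing the functional $\langle\psi|\cdot|\phi\rangle$ on $\cA'$ as $\langle\xi|\cdot\, u'|\xi\rangle$ with $|\xi\rangle$ in the natural cone and $u'$ a partial isometry, deriving $u'\Delta'_{\xi,\psi}|\psi\rangle=\pi(\psi)|\phi\rangle$, exhibiting the explicit minimiser $|\chi\rangle=u'|\xi\rangle/\|\xi\|$, and justifying the required identity $(\Delta'_{\chi,\psi})^{-1/2}u'\Delta'_{\xi,\psi}|\psi\rangle=(\Delta'_{\chi,\xi})^{-1}Ju'^*J|\chi\rangle$ by an analytic-continuation argument in a strip; this yields $\|\phi\|_{1,\psi}=\|\xi\|^{2}=F$. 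Your alternative route --- perturbing to a faithful $\psi_\epsilon$ and passing to the limit --- is not free: it needs a faithful normal state to exist (i.e.\ $\sigma$-finiteness, which the lemma does not assume) and, more seriously, continuity of the generalized $L_1$ norm in the \emph{reference} vector $\psi$, which is itself a nontrivial fact (the paper only establishes a version of it in app.~D, for a $\sigma$-finite algebra, using a variational formula imported from the companion paper). Without supplying either the direct construction or these continuity and existence inputs, the hard half of part (1) remains unproven.
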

\begin{proof}
While these results are standard, we include the proof in the app.~\ref{app:lem:fid} because we also treat the non-faithful case for the generalized Araki-Masuda norm in \eqref{AMdef} which has not explicitly appeared elsewhere as far as we are aware. Note that an argument conditional on other -- unproven in the non-faithful case -- properties of Araki-Masuda norms was given in \cite{Berta2}.
\end{proof}

We will also need the following result that is potentially of independent interest. 
\begin{theorem}[First Law for Renyi Relative Entropy]
\label{firstlaw}
Consider a one parameter family of vectors $\ket{\zeta_\lambda} \in \mathscr{H}$ for $\lambda \geq 0$, which are normalized $ \|  \zeta_\lambda  \| =1$ and satisfy
\begin{equation}
\label{vanish}
\lim_{\lambda \rightarrow 0^+} \frac{\left\| \zeta_\lambda -\psi  \right\|^2}{\lambda} = 0 ,
\end{equation}
where $|\psi \rangle = |\zeta_0\rangle$. 
Then:
\begin{itemize}
\item[1)] The Petz-Renyi relative entropy satisfies:
\begin{equation}
\label{firstpetz}
\lim_{\lambda \rightarrow 0^+}  \frac{1}{\lambda} \ln \left< \zeta_\lambda \right|  \Delta_{\psi, \zeta_\lambda}^{x(\lambda)} \left|  \zeta_\lambda \right>
= 0  \,, \qquad 0 \leq x(\lambda) \leq 1 -\epsilon,
\end{equation}
where $\epsilon > 0$ and there is no other constraint on $x(\lambda)$. 
\item[2)] The sandwiched Renyi relative entropy satisfies:
\begin{equation}
\label{firstsand}
\lim_{\lambda \rightarrow 0^+}  \frac{1}{\lambda} \ln \left\| \zeta_\lambda \right\|^{\mathcal{A}'}_{p(\lambda),\psi}
= 0  \,, \qquad 1 \leq p(\lambda) \leq 2,
\end{equation}
with no other constraint on how the function $p(\lambda)$ behaves under the limit. 
\end{itemize}
\end{theorem}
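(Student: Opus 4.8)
The plan is to prove both statements by showing that, uniformly in the Rényi parameter, the logarithm in question is bounded in modulus by a fixed multiple of $\|\zeta_\lambda-\psi\|^2$; dividing by $\lambda$ and invoking the hypothesis \eqref{vanish} then yields the claimed limits. The whole argument hinges on producing a \emph{quadratic} rather than merely linear dependence on $\|\zeta_\lambda-\psi\|$ --- a linear bound would only give $o(\sqrt\lambda)$ and be useless after division by $\lambda$. This quadratic gain comes from the geometry of the natural cone: for normalized $|\psi\rangle,|\zeta_\lambda\rangle\in\mathscr{P}^\natural$ the overlap is real and
\begin{equation}
\langle\psi|\zeta_\lambda\rangle=1-\tfrac12\|\zeta_\lambda-\psi\|^2 .
\end{equation}
I take $\psi$ faithful throughout (the relevant case for Thm.~\ref{thm1}), so $\pi^\cA(\psi)=1$.

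For part 1, set $g(x)=\ln\langle\zeta_\lambda|\Delta_{\psi,\zeta_\lambda}^{x}|\zeta_\lambda\rangle=\ln\int t^{x}\,d\mu(t)$, where $\mu$ is the spectral measure of the positive operator $\Delta_{\psi,\zeta_\lambda}$ in the state $\zeta_\lambda$; as the logarithm of a moment-generating function, $g$ is convex on $[0,1]$. I would first record three anchor values. Since $\pi^\cA(\psi)=1$, the support projection of $\Delta_{\psi,\zeta_\lambda}$ fixes $\zeta_\lambda$, so $g(0)=0$. The defining relation \eqref{modA} gives $\Delta_{\psi,\zeta_\lambda}^{1/2}|\zeta_\lambda\rangle=\pi^{\cA'}(\zeta_\lambda)|\psi\rangle$, whence $g(\tfrac12)=\ln\langle\zeta_\lambda|\pi^{\cA'}(\zeta_\lambda)|\psi\rangle=\ln\langle\psi|\zeta_\lambda\rangle$ and $g(1)=\ln\|\pi^{\cA'}(\zeta_\lambda)\psi\|^2\le0$. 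In particular $\zeta_\lambda\in\mathrm{dom}\,\Delta_{\psi,\zeta_\lambda}^{1/2}$, so $g$ is finite on $[0,1]$ regardless of whether the relative entropy is finite. An upper bound now follows from Jensen's inequality applied to the concave map $t\mapsto t^{x}$ ($0\le x\le1$): $g(x)\le x\,g(1)\le0$. A matching lower bound follows from convexity through the points $(0,0)$ and $(\tfrac12,g(\tfrac12))$ together with $g(1)\le0$, which gives $g(x)\ge 2\,g(\tfrac12)$ for all $x\in[0,1]$. Combining, $|g(x)|\le -2g(\tfrac12)=-2\ln(1-\tfrac12\|\zeta_\lambda-\psi\|^2)\le C\|\zeta_\lambda-\psi\|^2$ for small $\lambda$, uniformly in $x\in[0,1-\epsilon]$; dividing by $\lambda$ proves \eqref{firstpetz}.

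For part 2, set $G(p)=\ln\|\zeta_\lambda\|_{p,\psi}^{\cA'}$. The endpoints are controlled directly: $G(2)=\ln\|\pi^\cA(\psi)\zeta_\lambda\|=0$ by \eqref{eq:h2}, and $G(1)=\ln F(\omega_\psi,\omega_{\zeta_\lambda})$ by Lemma~\ref{lem:fid}. Choosing $x'=1$ in the supremum \eqref{uhl2} yields the robust lower bound $F(\omega_\psi,\omega_{\zeta_\lambda})\ge|\langle\psi|\zeta_\lambda\rangle|\ge\Re\langle\psi|\zeta_\lambda\rangle=1-\tfrac12\|\zeta_\lambda-\psi\|^2$, while $F\le1$; note this step uses the \emph{actual} vectors and hence needs no natural-cone hypothesis. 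The remaining task is to fill in the interior $1<p<2$. Here I would invoke the monotonicity $\|\zeta_\lambda\|_{p,\psi}^{\cA'}\le\|\zeta_\lambda\|_{q,\psi}^{\cA'}$ for $1\le p\le q\le2$ --- the non-commutative analogue of the monotonicity of $L^{p}$ norms over a probability space, the reference state $\psi$ being normalized. This sandwiches $F=\|\zeta_\lambda\|_{1,\psi}^{\cA'}\le\|\zeta_\lambda\|_{p,\psi}^{\cA'}\le\|\zeta_\lambda\|_{2,\psi}^{\cA'}=1$, i.e. $\ln F\le G(p)\le0$, so that $|G(p)|\le-\ln F\le C\|\zeta_\lambda-\psi\|^2=o(\lambda)$ uniformly in $p$. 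Dividing by $\lambda$ proves \eqref{firstsand}.

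The main obstacle is the monotonicity of the generalized Araki--Masuda norm \eqref{AMdef} in the \emph{non-faithful} setting: convexity of $1/p\mapsto G(p)$ alone gives only the upper bound $G(p)\le0$ and permits an arbitrarily deep dip in the interior (its slope at $p=2$ is the relative entropy, which may be infinite), so a genuinely global structural input is needed for the lower bound. Following the strategy announced in the footnote to \eqref{AMdef}, I would first establish this monotonicity for a faithful reference state, where the full Araki--Masuda machinery applies, and then extrapolate to the case at hand through the support/projection bookkeeping of Lemma~\ref{lem:proj}, controlling \eqref{AMdef} via \eqref{eq:PS}. A secondary point to treat carefully is exactly the quadratic-versus-linear issue flagged above: for the Petz quantity the anchor $g(\tfrac12)$ is genuinely the natural-cone overlap, so one must work with cone representatives there, whereas the fidelity lower bound in part 2 survives for arbitrary representatives thanks to its supremum form.
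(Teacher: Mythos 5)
Your route is genuinely different from the paper's. The paper first proves an auxiliary estimate (lem.~\ref{lem:est}) by applying Harnack's inequality to the non-negative harmonic functions $\Re(1-\langle\psi|\Delta^z_{\psi,\zeta}\zeta\rangle)$ and $\Re(1-\langle\zeta|\Delta^z_{\psi,\zeta}\zeta\rangle)$, obtaining $0\le \Re(1-\langle\zeta|\Delta^z_{\psi,\zeta}\zeta\rangle)\le C_K\|\zeta-\psi\|^2$ uniformly on compacta of $\{0\le\Re z<1\}$ \emph{for arbitrary unit vectors}, and then sandwiches the Araki--Masuda norm between two such Petz quantities via the Araki--Lieb--Thirring inequality \eqref{petzAM}. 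You replace Harnack by convexity of $g$ anchored at $x=0,\tfrac12,1$, and ALT by monotonicity of the norm in $p$; this is elegant where it works (and would even cover $x\in[0,1]$ rather than $[0,1-\epsilon]$), but both substitutions leave genuine gaps.

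First, your part-1 anchor fails outside the natural cone, which is the generality the theorem asserts and the applications require (in lem.~\ref{lem:p2} neither $\zeta_\theta$ nor the reference $\psi_P$ is a cone vector, and $\psi_P$ is not faithful). For a general $|\zeta_\lambda\rangle$, \eqref{modA} gives $\Delta^{1/2}_{\psi,\zeta_\lambda}|\zeta_\lambda\rangle=J\pi^{\cA}(\zeta_\lambda)|\psi\rangle$, so $e^{g(1/2)}$ is the overlap $\langle\xi_\psi|\xi_{\zeta_\lambda}\rangle$ of the \emph{cone representatives}, not $\langle\psi|\zeta_\lambda\rangle$. Passing to cone representatives through \eqref{eq:PS} converts the hypothesis only into $1-\langle\xi_\psi|\xi_{\zeta_\lambda}\rangle=\tfrac12\|\xi_\psi-\xi_{\zeta_\lambda}\|^2\le\tfrac12\|\omega_\psi-\omega_{\zeta_\lambda}\|=O(\|\psi-\zeta_\lambda\|)=o(\sqrt\lambda)$; the quadratic gain you correctly identify as essential is exactly what is lost, and the limit after dividing by $\lambda$ is no longer controlled. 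You flag the cone issue as a ``secondary point,'' but it is the crux: a quadratic bound on $1-\langle\zeta|\Delta^{1/2}_{\psi,\zeta}\zeta\rangle$ for arbitrary representatives is precisely what the Harnack lemma supplies and what your argument does not. Second, your part 2 rests entirely on the unproven monotonicity $\|\zeta\|^{\cA'}_{p,\psi}\le\|\zeta\|^{\cA'}_{q,\psi}$ for the generalized (possibly non-faithful reference) Araki--Masuda norm; you correctly name this as the main obstacle, but without it the interior lower bound is missing, since convexity in $1/p$ gives nothing. The paper sidesteps this entirely: \eqref{petzAM} bounds $(\|\zeta\|^{\cA'}_{p,\psi})^2$ from below by $\langle\zeta|\Delta^{1-p/2}_{\psi,\zeta}\zeta\rangle^{2/p}$ with $1-p/2\in[0,\tfrac12]$, and the required smallness is again the arbitrary-vector quadratic estimate from Harnack. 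Your endpoint computation $F(\omega_\psi,\omega_{\zeta_\lambda})\ge\Re\langle\psi|\zeta_\lambda\rangle=1-\tfrac12\|\zeta_\lambda-\psi\|^2$ is correct and representative-independent, but it controls only $p=1$. As it stands, the proposal establishes part 1 only for natural-cone vectors and leaves part 2 conditional on an unestablished norm inequality.
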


In order to prove this, we first prove the following lemma:

\begin{lemma} 
\label{lem:est} Given two normalized vectors $\ket{\psi}, \ket{\zeta} \in \sH$, we have:
\begin{itemize}
\item[1)] For compact subsets $K$ of the complex strip $\{0 \leq {\rm Re} z < 1\}$, there exists a constant $C_K$ such that:
\begin{equation}
0 \leq {\rm Re}  \left(  1-  \left< \zeta \right| \Delta_{\psi,\zeta}^z \left| \zeta \right> \right) \leq C_K \left\| \zeta-\psi \right\|^2
\end{equation}
for all $z\in K$. $C_K$ is independent of $\ket{\psi}, \ket{\zeta}$. 

\item[2)] We also have for $1 \leq p \leq 2$:
\begin{equation}
\label{petzAM}
0 \leq   1 - \left< \zeta \right| \Delta_{\psi,\zeta}^{2/p-1} \left| \zeta \right> \leq 1 - \left( \left\|   \zeta  \right\|^{\mathcal{A}'}_{p,\psi} \right)^2 \leq 1 - \left(\left< \zeta \right| \Delta_{\psi,\zeta}^{1-p/2} \left| \zeta \right>  \right)^{2/p},
\end{equation}
and we have the elementary bound:
\begin{equation}
1 - \left(\left< \zeta \right| \Delta_{\psi,\zeta}^{1-p/2} \left| \zeta \right>  \right)^{2/p} \leq \frac{2}{p} \left( 1 - \left< \zeta \right| \Delta_{\psi,\zeta}^{1-p/2} \left| \zeta \right>  \right) .
\end{equation}
\end{itemize}
\end{lemma}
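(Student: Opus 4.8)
The plan is to reduce both parts to spectral calculus for the relative modular operator $\Delta_{\psi,\zeta}$, evaluated in the state $\zeta$, and to control everything through the two endpoints $z=0,1$ (which are clean) plus one genuinely modular‑theoretic input. First I would set $f(z):=\langle\zeta|\Delta_{\psi,\zeta}^{z}|\zeta\rangle=\int_{(0,\infty)}\lambda^{z}\,\ud\mu(\lambda)$, where $\mu$ is the finite positive spectral measure of $\Delta_{\psi,\zeta}$ in $\zeta$, supported in $(0,\infty)$ because of the convention that $\Delta_{\psi,\zeta}^{z}=0$ on $1-\pi(\psi)\pi(\zeta)$. Using the defining relation $J\Delta_{\psi,\zeta}^{1/2}\zeta=\pi(\zeta)\psi$ together with $\pi(\zeta)\zeta=\zeta$, one reads off the total mass $f(0)=\|\pi(\psi)\zeta\|^{2}$ and the first moment $f(1)=\|\pi(\zeta)\psi\|^{2}$, hence the two clean endpoint estimates
\[
1-f(0)=\|(1-\pi(\psi))(\zeta-\psi)\|^{2}\le\|\zeta-\psi\|^{2},\qquad 1-f(1)=\|(1-\pi(\zeta))(\psi-\zeta)\|^{2}\le\|\zeta-\psi\|^{2},
\]
where I used $(1-\pi(\psi))\psi=0$ and $(1-\pi(\zeta))\zeta=0$. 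Peeling off the possible defect at $\lambda=0$ gives $1-f(z)=\|(1-\pi(\psi))\zeta\|^{2}+\int_{(0,\infty)}(1-\lambda^{z})\,\ud\mu$, so the first summand is already controlled and the task becomes bounding $\mathrm{Re}\int_{(0,\infty)}(1-\lambda^{z})\,\ud\mu$.

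The lower bound $\mathrm{Re}(1-f(z))\ge0$ is immediate from $\mathrm{Re}\,\lambda^{z}=\lambda^{\mathrm{Re}\,z}\cos(\mathrm{Im}(z)\ln\lambda)\le\lambda^{\mathrm{Re}\,z}$ together with log‑convexity of $x\mapsto f(x)$ (Hölder), which yields $f(\mathrm{Re}\,z)\le f(0)^{1-\mathrm{Re}\,z}f(1)^{\mathrm{Re}\,z}\le1$. For the upper bound on a compact $K$ I would use the subordination identity $1-\lambda^{z}=\tfrac{\sin\pi z}{\pi}\int_{0}^{\infty}\tfrac{(1-\lambda)\,t^{z}}{(1+t)(\lambda+t)}\,\ud t$, valid for $0<\mathrm{Re}\,z<1$, to write $\int(1-\lambda^{z})\,\ud\mu=\tfrac{\sin\pi z}{\pi}\int_{0}^{\infty}\tfrac{t^{z}}{1+t}\,R(t)\,\ud t$ with the resolvent quantity $R(t)=\langle\zeta|(1-\Delta_{\psi,\zeta})(\Delta_{\psi,\zeta}+t)^{-1}|\zeta\rangle$, and then bound $R(t)$ by $\|\zeta-\psi\|^{2}$ times a $t$‑integrable weight, relating $(\Delta_{\psi,\zeta}+t)^{-1}\zeta$ to $\psi-\zeta$ via $\Delta_{\psi,\zeta}^{1/2}\zeta=J\pi(\zeta)\psi$. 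Uniformity of the constant over $K\cap\{\mathrm{Re}\,z\ge\epsilon_{0}\}$ then follows because $|\sin\pi z|$ is bounded and $\int_{0}^{\infty}t^{\mathrm{Re}\,z}/(1+t)\,\ud t=\pi/\sin(\pi\,\mathrm{Re}\,z)$ is bounded there.

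Since the subordination weight degenerates as $\mathrm{Re}\,z\to0$, the boundary line $\mathrm{Re}\,z=0$ must be treated separately. There I would use the Connes cocycle relation \eqref{coswitch} together with $\Delta_{\zeta}^{it}\zeta=\zeta$ to write $\Delta_{\psi,\zeta}^{iy}\zeta=(D\psi:D\zeta)_{y}\zeta$, so that, because $\Delta_{\psi,\zeta}^{iy}$ is unitary on its support, $1-\mathrm{Re}\,f(iy)=\tfrac12\|\Delta_{\psi,\zeta}^{iy}\zeta-\zeta\|^{2}+\tfrac12(1-f(0))$, reducing the estimate to a continuity bound $\|((D\psi:D\zeta)_{y}-1)\zeta\|\le C(|y|)\,\|\zeta-\psi\|$ for the cocycle, with $C$ locally bounded in $y$, and finally to $C_{K}:=\sup_{z\in K}(\dots)$ via compactness. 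I expect this to be the main obstacle of Part 1, and it reflects a real phenomenon: spectral data of $\mu$ alone do not suffice, because the relevant spread of $\mu$ away from $\lambda=1$ (morally $\langle\zeta|(\ln\Delta_{\psi,\zeta})^{2}|\zeta\rangle$) is \emph{not} controlled by $\|\zeta-\psi\|^{2}$ in general; one must exploit the oscillatory cancellation stored in the modular flow / cocycle, and the delicate point is to keep the constant uniform in $(\psi,\zeta)$ and uniform all the way up to $\mathrm{Re}\,z=0$.

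For Part 2 I would unwind the Araki–Masuda definition \eqref{AMdef} for $\mathcal{A}'$, using \eqref{rels} in the form $(\Delta'_{\chi,\psi})^{1/2-1/p}=\Delta_{\psi,\chi}^{1/p-1/2}$, so that $\|\zeta\|^{\mathcal A'}_{p,\psi}=\inf_{\chi}\|\Delta_{\psi,\chi}^{1/p-1/2}\zeta\|$ over admissible $\chi$. The leftmost bound $0\le1-\langle\zeta|\Delta_{\psi,\zeta}^{2/p-1}|\zeta\rangle$ is again $f(2/p-1)\le1$ by log‑convexity. The first nontrivial inequality comes from the trial choice $\chi=\zeta$ in the infimum, giving $(\|\zeta\|^{\mathcal A'}_{p,\psi})^{2}\le\|\Delta_{\psi,\zeta}^{1/p-1/2}\zeta\|^{2}=\langle\zeta|\Delta_{\psi,\zeta}^{2/p-1}|\zeta\rangle$, i.e. $1-(\|\zeta\|^{\mathcal A'}_{p,\psi})^{2}\ge1-\langle\zeta|\Delta_{\psi,\zeta}^{2/p-1}|\zeta\rangle$. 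The remaining inequality is equivalent to the lower bound $\|\zeta\|^{\mathcal A'}_{p,\psi}\ge\langle\zeta|\Delta_{\psi,\zeta}^{1-p/2}|\zeta\rangle^{1/p}$, which is the comparison ``sandwiched $\ge$ Petz Rényi''; I would obtain it from the Araki–Masuda interpolation/Hölder theory \cite{AM}, the only subtlety being the non‑faithful generalization, handled exactly as in lem.~\ref{lem:fid} by approximating with a faithful reference state and extrapolating, as flagged in the footnote to \eqref{AMdef}. The final elementary estimate $1-a^{2/p}\le\tfrac{2}{p}(1-a)$ for $a\in[0,1]$ is Bernoulli's inequality, since $2/p\ge1$.
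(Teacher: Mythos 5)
Your Part 2 is essentially fine: the leftmost inequality is the standard three‑lines/H\"older bound $|\langle\zeta|\Delta_{\psi,\zeta}^{x}|\zeta\rangle|\le 1$, the middle inequality is correctly obtained from the trial choice $\chi=\zeta$ in the Araki--Masuda infimum via $(\Delta'_{\chi,\psi})^{1/2-1/p}=\Delta_{\psi,\chi}^{1/p-1/2}$, the last inequality is the Araki--Lieb--Thirring inequality (the paper simply cites \cite{Berta2}, thm.~12, for the whole chain, so your level of detail is comparable), and the final estimate is the tangent‑line inequality for the convex function $a\mapsto a^{2/p}$.

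Part 1, however, has a genuine gap, and it sits exactly where you flagged ``the main obstacle.'' Your subordination identity $1-\lambda^{z}=\frac{\sin\pi z}{\pi}\int_0^\infty\frac{(1-\lambda)t^{z}}{(1+t)(\lambda+t)}\,\ud t$ yields, after the $t$-integration, a constant of order $|\sin\pi z|/\sin(\pi\,{\rm Re}\,z)$, which for $z=\epsilon+iy$ with $y\neq 0$ fixed blows up like $\sinh(\pi|y|)/(\pi\epsilon)$ as $\epsilon\to 0^{+}$. Since a compact $K\subset\{0\le{\rm Re}\,z<1\}$ may contain points with arbitrarily small positive real part and imaginary part bounded away from zero, your two regimes (${\rm Re}\,z\ge\epsilon_0$ via subordination, ${\rm Re}\,z=0$ via the cocycle) do not cover $K$, and no uniform $C_K$ results. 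Moreover, the boundary case is reduced to the continuity bound $\|((D\psi:D\zeta)_{y}-1)\zeta\|\le C(|y|)\,\|\zeta-\psi\|$ for the Connes cocycle, which you assert but do not prove; this is not a standard fact --- it is essentially equivalent to the statement of the lemma at ${\rm Re}\,z=0$, so invoking it is circular. (A further unproven step is the bound on $R(t)$ by an integrable weight times $\|\zeta-\psi\|^{2}$ uniformly in $(\psi,\zeta)$: note $R(t)/(1+t)$ is the relative‑entropy integrand, which is certainly not controlled by $\|\zeta-\psi\|^{2}$ without the extra $t$-weight, so this needs a real argument.) The paper avoids all of this with one move: it introduces the auxiliary function $h_1(z)={\rm Re}\,(1-\langle\psi|\Delta_{\psi,\zeta}^{z}|\zeta\rangle)$, which is nonnegative and harmonic on the \emph{wider} strip $\{-1/2<{\rm Re}\,z<1/2\}$, whose interior contains the line ${\rm Re}\,z=0$, and whose value at the interior point $z=0$ equals $\tfrac12\|\psi-\zeta\|^{2}$. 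Harnack's inequality then gives a bound on compacts with a constant independent of $(\psi,\zeta)$, a Cauchy--Schwarz step transfers the bound to $h_2(z)={\rm Re}\,(1-\langle\zeta|\Delta_{\psi,\zeta}^{z}|\zeta\rangle)$ on the overlap of the two strips, and a second application of Harnack on $\{0<{\rm Re}\,z<1\}$ finishes the right half of $K$. If you want to salvage your route, you would need an independent proof of the cocycle continuity estimate, and the natural way to obtain it is precisely this Harnack argument.
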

\begin{proof}
(1) This is demonstrated by an application of Harnack's inequality (see e.g. \cite{evans}, sec. 2, thm. 11) which applies to any $h(z)$ that is harmonic and non negative in some connected open set $O$: for all compact subsets $K \subset O$ there exists a constant $1 \leq C(K,O) < \infty$ such that:
\begin{equation}
h(z) \leq C(K,O) h(w)\,, \qquad \forall z,w \in K,
\end{equation} 
where notably this constant is independent of the particular $h$ satisfying the assumptions. 

We work with the real part of two holomorphic functions in two strips:
\begin{subequations}
\begin{align}
h_1(z) &= {\rm Re} (1 - \left< \psi \right| \Delta_{\psi,\zeta}^z \left| \zeta \right>) \,, \,\qquad  O_1 = \{ z \in \mathbb{C}: -1/2 < {\rm Re }(z)  < 1/2 \} \\
 h_2(z) &= {\rm Re} ( 1 - \left< \zeta \right| \Delta_{\psi,\zeta}^z \left| \zeta \right>) \,, \,\qquad  O_2 = \{ z \in \mathbb{C}: 0 < {\rm Re }(z)  < 1 \}.
\end{align}
\end{subequations}
These functions are continuous on the closure of the above strips 
and they are non-negative since for normalized vectors $|\left< \psi \right| \Delta_{\psi,\zeta}^z \left| \zeta \right>|,|\left< \zeta \right| \Delta_{\psi,\zeta}^z \left| \zeta \right>| \leq 1$ by an easy application of the Hadamard three lines theorem -- these facts are standard results of Tomita-Takesaki
theory for the relative modular operators. There is no need for any of the vectors to be in the natural cone.

We can thus apply Harnack's inequality. Using the fact that:
\begin{equation}
h_1(0) = \frac{1}{2} \| \psi- \zeta\|^2,
\end{equation}
and picking the compact subset $K_1 \subset O_1$ with $0 \in K_1$ we have:
\begin{equation}
0\leq h_1(z) \leq \frac{1}{2} C(K_1,O_1)  \| \psi-\zeta \|^2 \qquad \forall \,\, z \in K_1.
\end{equation}
We have to relate this to $h_2(z)$ which is what we are most concerned with. We can relate the two functions using the Cauchy-Schwarz inequality
where the two defining strips overlap, $0\leq {\rm Re} z \leq 1/2$:
\begin{align} \nonumber
& {\rm Re} \left(  - \left< \zeta \right|  \Delta_{\psi,\zeta}^z \left| \zeta \right> +\left< \psi \right| \Delta_{\psi,\zeta}^z \left| \zeta \right> + \left< \zeta \right| \left. \psi \right> -1 \right) \leq 
\left| \left( \left| \zeta \right> - \left| \psi \right>,   \Delta_{\psi,\zeta}^z \left| \zeta \right>  - \left| \psi \right> \right) \right|\\ 
 & \qquad \qquad \leq \left\| \zeta-\psi\right\| \left(  {\rm Re} \left( 1 +  \left< \zeta \right|  \Delta_{\psi,\zeta}^{z+\bar z} \left| \zeta \right>   - 2 \left< \psi \right|  \Delta_{\psi,\zeta}^z \left| \zeta \right> \right)  \right)^{1/2}   \nonumber \\ &
 \qquad \qquad \leq \left\| \zeta-\psi \right\| \left(  {\rm Re} \left( 2   - 2 \left< \psi \right|  \Delta_{\psi,\zeta}^z \left| \zeta \right> \right)  \right)^{1/2}  ,
\end{align} 
which translates to:
\begin{equation}
h_2(z) 
\leq \frac{1}{2} \left( \| \zeta-\psi \| + \sqrt{ 2 h_1(z)} \right)^2,
\end{equation}
so that
\begin{equation}
\label{bdL1}
0\leq h_2(z) \leq \frac{1}{2} \left( 1 + \sqrt{C(K_1,O_1)} \right)^2   \| \psi-\zeta \|^2 \qquad \forall \,\, z \in K_1 \cap \overline{O}_2.
\end{equation}

We can split the compact set 
$K$ in the statement of the lemma into two compact pieces $K_1 = K \cap \{ z\in \mathbb{C}: 0 \leq {\rm Re} z \leq 1/4 \}$
and $K_2 = K \cap \{ z\in \mathbb{C}: 1/4 \leq {\rm Re} z \leq 1 \}$.
These satisfy $K_i \subset O_i \,\,: \, i=1,2$. Repeatedly applying Harnack's inequality as above gives the following upper bound for $C_K$:
\begin{equation}
  \frac{1}{2} {\rm max} \left\{ \left( 1 + \sqrt{C(\{0\} \cup K_1,O_1)} \right)^2 ,\,\, C(\{ \tfrac{1}{4} \} \cup K_2,O_2) \left( 1 + \sqrt{C(\{0, \tfrac{1}{4}\},O_1)} \right)^2 \right\},
\end{equation}
where it was necessary to add the points $\{ 0,1/4 \}$ since they may not have been in the original $K$.

(2) This result is basically the well-known Araki-Lieb-Thirring inequality \cite{Araki3}, for a proof in the v. Neumann algebra setting see \cite{Berta2}, thm. 12, for $L_p$ norms based on a not necessarily cyclic and separating vector $\ket{\psi}$.
\end{proof}

\begin{proof}[Proof of Theorem~\ref{firstlaw}] (1) is a consequence of lem.~\ref{lem:est} (1): We can take $K= [0,1-\epsilon]$ which satisfies the assumptions
of this lemma so:
\begin{equation}
0 \leq  \lim_{\lambda \rightarrow 0^+} \frac{\left(1-\left< \zeta_\lambda \right|  \Delta_{\psi, \zeta_\lambda}^{x(\lambda)} \left| \zeta_\lambda \right> \right)}{\lambda} 
\leq \lim_{\lambda \rightarrow 0^+}  C_K \frac{\|  \zeta_\lambda -\psi \|^2}{\lambda} =0.
\end{equation}
Then using differentiability of $\ln(x)$ at $x=1$ and the chain rule we show \eqref{firstpetz}.

(2) Here we need lem.~\ref{lem:est} (1) with $K=[0,1/2]$. Applying lem.~\ref{lem:est} (2):
\begin{align}
0 \leq  \lim_{\lambda \rightarrow 0^+} \frac{\left(1-\left\|  \zeta_\lambda  \right\|^{\mathcal{A}'}_{p(\lambda),\psi} \right)}{\lambda} 
& \leq \lim_{\lambda \rightarrow 0^+} \frac{2}{p(\lambda)} \left( 1 - \left< \zeta_\lambda \right| \Delta_{\psi,\zeta_\lambda}^{1-p(\lambda)/2} \left| \zeta_\lambda \right>  \right)  \nonumber \\ & \leq
\lim_{\lambda \rightarrow 0^+}  2 C_K \frac{\| \zeta_\lambda -\psi \|^2}{\lambda} =0.
\end{align}
Again differentiability of $\ln(x)$ at $x=1$ and the chain rule gives \eqref{firstsand}.
\end{proof}

\subsection{Exact recoverability/sufficiency} 
\label{subsec:5.3}

This section is meant as an informal summary of some of the results given in \cite{Petz3,Petz4}, defining the exact notion of recoverability or sufficiency. 
We will focus only on the properties associated to sufficiency that we make contact with in this paper, and we will also treat only the case of faithful linear functionals and drop all support projectors here. 

By definition, the quantum channel $\iota: \mathcal{B} \to \cA$ is exactly reversible for at least two fixed states $\rho, \sigma$ if there exists a recovery channel $\alpha:\cA \to \mathcal{B}$ such that:
\begin{equation}
\rho \circ \iota \circ \alpha (a) = \rho(a)\,, \qquad \forall a \in \cA,
\end{equation}
and similarly for $\sigma$.
Since the relative entropy is monotonous \cite{Uhlmann1977} under both $\alpha, \iota$, we must have 
$S_\cA(\rho,\sigma)=S_\cB(\rho,\sigma)$, see \eqref{eq:SB} for our notation. 
Representing $\sigma, \rho$ by vectors in the natural cone as in \eqref{eq:notation} and using 
a standard integral representation of the relative entropy based on the spectral theorem and the elementary identity $(x,y>0)$
\ben
\ln y - \ln x = \int_0^\infty \left( \frac{1}{x+\beta} - \frac{1}{y+\beta} \right) \ud \beta \ , 
\een
we get that
\begin{align}
S_{\cA}(\rho | \sigma) -S_{\mathcal B}( \rho | \sigma)   
= \int_0^\infty  \left< \psi_{\mathcal{B}} \right| \left( V_\psi^* \frac{1}{\beta + \Delta_{\eta_\cA,\psi_\cA}}  V_\psi- \frac{1}{\beta + \Delta_{\eta_{\mathcal{B}},\psi_{\mathcal{B}}}} \right) \left| \psi_{\mathcal{B}} \right> \ud \beta
\end{align}
vanishes. Known properties of the modular operators imply that the integrand is positive \cite{Petz3,Petz4,Petz1993}. 
Therefore, 
\begin{equation}
V_\psi^* \frac{1}{\beta + \Delta_{\eta_\cA,\psi_\cA}}  \left| \psi_\cA  \right> =  \frac{1}{\beta + \Delta_{\eta_{\mathcal{B}},\psi_{\mathcal{B}}}} \left| \psi_{\mathcal{B}}  \right>
\end{equation}
for all $\beta > 0$,
which can be integrated against a specific kernel that we will not write to arrive at a statement about the 
relative modular flow:
\begin{equation}
V_\psi^* \Delta_{\eta_\cA,\psi_\cA}^{it} \left| \psi_\cA \right>  =  \Delta_{\eta_{\mathcal{B}},\psi_{\mathcal{B}}}^{it} \left| \psi_{\mathcal{B}}  \right>
\, \quad \implies \, \quad \left| \psi_\cA \right> = \Delta_{\eta_\cA,\psi_\cA}^{-it}   V_\psi \Delta_{\eta_{\mathcal{B}},\psi_{\mathcal{B}}}^{it} \left| \psi_{\mathcal{B}}  \right>.
\end{equation}
Further manipulations give a derivation that the Petz map is a perfect recovery channel, although we will not go through this. Here we simply note that it is a reasonable guess at this point that for the approximate version of recoverability, one must require that  $\left| \psi_\cA \right>$ must be close to $\Delta_{\eta_\cA,\psi_\cA}^{-it}   V_\psi \Delta_{\eta_{\mathcal{B}},\psi_{\mathcal{B}}}^{it} \left| \psi_{\mathcal{B}}  \right>$ in some metric.
We will use the non-commutative Araki-Masuda $L_p$ norms to provide such a metric.  

\subsection{Interpolating vector}
\label{subsec:5.4}

Motivated by the above discussion we consider the following vector in $\mathscr{H}$:
\begin{equation}
\label{ourvec}
\left| \Gamma_\psi(z) \right> = \Delta_{\eta_\cA,\psi_\cA}^z   V_\psi  \Delta_{\eta_{\mathcal{B}},\psi_{\mathcal{B}}}^{-z} \left| \psi_{\mathcal{B}} \right> , 
\end{equation}
defined at first for purely imaginary $z$, and assuming at first that $|\psi\rangle, |\eta\rangle$ are in the natural cone 
(of $\cA$), see \eqref{eq:notation} for our notation.

\begin{remark}
The vector defined here is similar in spirit but does not quite coincide with the interpolating vector considered by 
\cite{Junge}. It seems possible to consider other vectors instead, and we briefly comment on this in app. \ref{alterapproach}.
\end{remark}

Our first result will be an analytic continuation of the vector \eqref{ourvec} into a strip: 
\begin{theorem}
\label{thm:gamma}
\begin{enumerate}
\item
There is a vector-valued function $\left| \Gamma_\psi(z) \right>$ that is holomorphic in the strip $\bS_{1/2}=\{0 < {\rm Re }(z) < 1/2\}$, weakly continuous in the closure of the strip and has the following explicit form at the top and bottom edges:
\begin{subequations}
\begin{align}
\label{Gt}
\left| \Gamma_\psi(1/2 + i t) \right>  &=   \Delta_{\eta_\cA,\psi_\cA}^{it}  J_\cA    V_\eta J_{\mathcal{B}} \Delta_{\eta_{\mathcal{B}},\psi_{\mathcal{B}}}^{-it}   \left| \psi_{\mathcal{B}} \right> \, \, \\
\label{Gb}
\left| \Gamma_\psi(i t) \right> & =   \Delta_{\eta_\cA,\psi_\cA}^{it}  V_\psi \Delta_{\eta_{\mathcal{B}},\psi_{\mathcal{B}}}^{-it} \left| \psi_{\mathcal{B}} \right> .
\end{align}
\end{subequations}
The norm of the vector $|\Gamma_\psi(z)\rangle$ is bounded by $1$ everywhere in the closure of $\bS_{1/2}$, and $\big| \Gamma_\psi(0)\big> = \left| \psi_\cA \right>$. 

\item
On the top edge of the strip $\bS_{1/2}$ this vector induces the the following state on $\mathcal{A}$:
\begin{equation}
\label{inducepetz}
\left( \left| \Gamma_\psi(1/2 + i t) \right> , a_+ \left| \Gamma_\psi(1/2 + i t) \right>  \right) \leq
\left< \psi \right| \iota( \alpha_\eta^t(a_+)) \left| \psi \right> = \omega_\psi \circ \iota \circ \alpha_\eta^t(a_+) ,
\end{equation}
where $a_+$ is any non-negative self-adjoint element  in $\mathcal{A}$,
and where $\alpha_\eta^t$ is the rotated Petz map \eqref{petzfaith} for the state $\sigma$ induced by $|\eta\rangle$.
\end{enumerate}
\end{theorem}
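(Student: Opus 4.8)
\emph{Strategy.} The statement splits into (1) constructing the holomorphic family and identifying its two edges, and (2) an operator inequality on the top edge; I treat them in turn, the first being by far the more delicate. For (1), the bottom edge \eqref{Gb} is just the original definition \eqref{ourvec} restricted to $z=it$, where $\Delta_{\eta_\cA,\psi_\cA}^{it}$ and $\Delta_{\eta_{\mathcal{B}},\psi_{\mathcal{B}}}^{-it}$ are partial isometries and $V_\psi$ is an isometry, so $|\Gamma_\psi(it)\rangle$ is well defined with norm $\le 1$, and $t\mapsto|\Gamma_\psi(it)\rangle$ is strongly continuous by continuity of the modular flows. To continue into $\bS_{1/2}$ I would read \eqref{ourvec} from right to left as a product of three $z$-dependent pieces. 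The crucial input is that $\sigma$ (hence $|\eta\rangle$) is faithful: this makes $\Delta_{\eta_{\mathcal{B}},\psi_{\mathcal{B}}}^{-1}$ bounded on the relevant support, so that the Petz--R\'enyi quantity $\langle\psi_{\mathcal{B}}|\Delta_{\eta_{\mathcal{B}},\psi_{\mathcal{B}}}^{-2\,\mathrm{Re}\,z}|\psi_{\mathcal{B}}\rangle$ is finite for $0\le\mathrm{Re}\,z\le 1/2$. By the standard analyticity lemma for vectors $\Delta^{-z}|\psi_{\mathcal{B}}\rangle$ (finiteness of this quadratic form at the two edges forces holomorphy in between with strongly continuous boundary values), the map $z\mapsto V_\psi\Delta_{\eta_{\mathcal{B}},\psi_{\mathcal{B}}}^{-z}|\psi_{\mathcal{B}}\rangle$ is a bounded, $\mathscr{H}$-valued holomorphic function on $\bS_{1/2}$. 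The remaining unbounded factor $\Delta_{\eta_\cA,\psi_\cA}^{z}$ I would control weakly: pairing with a dense set of vectors $\Phi$ that are entire analytic for $\Delta_{\eta_\cA,\psi_\cA}$, the function $\langle\Phi|\Gamma_\psi(z)\rangle=\langle\Delta_{\eta_\cA,\psi_\cA}^{\bar z}\Phi\,|\,V_\psi\Delta_{\eta_{\mathcal{B}},\psi_{\mathcal{B}}}^{-z}\psi_{\mathcal{B}}\rangle$ is a pairing of two holomorphic vector-valued functions, hence holomorphic; a uniform norm bound (below) then upgrades weak to genuine holomorphy.

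The heart of (1) is the top edge \eqref{Gt}. At $t=0$ this is the operator identity $\Delta_{\eta_\cA,\psi_\cA}^{1/2}V_\psi\Delta_{\eta_{\mathcal{B}},\psi_{\mathcal{B}}}^{-1/2}=J_\cA V_\eta J_{\mathcal{B}}$ on the appropriate domain, which I would prove using the Tomita--Takesaki relation $J_\cA\Delta_{\eta_\cA,\psi_\cA}^{1/2}(a|\psi_\cA\rangle+|\chi\rangle)=\pi^\cA(\psi_\cA)a^*|\eta_\cA\rangle$ of \eqref{modA}, its analogue for $\mathcal{B}$, and the defining/intertwining properties \eqref{embed}, \eqref{Vpsiprop} of $V_\eta,V_\psi$; the general $t$ case then follows by commuting the purely imaginary factors $\Delta^{\pm it}$ through this $t=0$ identity. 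I expect \emph{this algebraic identity to be the main obstacle}: in the type-III setting the density-matrix manipulations that make it transparent in finite dimensions are unavailable, so it must be established at the level of matrix elements and cores, tracking supports and reusing the KMS/analyticity bookkeeping set up for the edges. Once both edges are in hand, the norm bound $\||\Gamma_\psi(z)\rangle\|\le 1$ on $\overline{\bS_{1/2}}$ follows from a Phragm\'en--Lindel\"of/three-lines argument applied to the bounded holomorphic function $z\mapsto\langle\Phi|\Gamma_\psi(z)\rangle$, since on each edge the explicit formulas \eqref{Gb}, \eqref{Gt} are products of partial isometries, the (anti)unitaries $J$, and the isometries $V_\psi,V_\eta$, all of operator norm $\le1$; and $|\Gamma_\psi(0)\rangle=|\psi_\cA\rangle$ is read off directly from \eqref{ourvec}.

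For (2), I insert \eqref{Gt} into the quadratic form. Writing $|\Gamma_\psi(1/2+it)\rangle=\Delta_{\eta_\cA,\psi_\cA}^{it}|W\rangle$ with $|W\rangle=J_\cA V_\eta J_{\mathcal{B}}\Delta_{\eta_{\mathcal{B}},\psi_{\mathcal{B}}}^{-it}|\psi_{\mathcal{B}}\rangle$ and using \eqref{modflowrel} to conjugate $a_+$ through $\Delta_{\eta_\cA,\psi_\cA}^{\pm it}$, the inner factor becomes $\varsigma^{-t}_{\eta_\cA}(a_+)\,\pi^{\cA'}(\psi_\cA)$. Since $\varsigma^{-t}_{\eta_\cA}(a_+)\in\cA$ is non-negative, the natural-cone symmetry $J_\cA\pi^{\cA'}(\psi_\cA)J_\cA=\pi^\cA(\psi_\cA)$ turns the $J_\cA(\cdot)J_\cA$ conjugation into $m\,p$ with $m:=J_\cA\varsigma^{-t}_{\eta_\cA}(a_+)J_\cA\in\cA'_+$ and $p:=\pi^\cA(\psi_\cA)\in\cA$. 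Because $m$ and $p$ commute, $m\,p=m^{1/2}p\,m^{1/2}\le m$, which is \emph{precisely where the inequality} (rather than an equality) in \eqref{inducepetz} enters; dropping $p$ and recognising $J_{\mathcal{B}}V_\eta^*\,m\,V_\eta J_{\mathcal{B}}=J_{\mathcal{B}}V_\eta^*J_\cA\varsigma^{-t}_{\eta_\cA}(a_+)J_\cA V_\eta J_{\mathcal{B}}=\alpha_\eta(\varsigma^{-t}_{\eta_\cA}(a_+))$ reproduces the unrotated Petz map. Finally, conjugating back through $\Delta_{\eta_{\mathcal{B}},\psi_{\mathcal{B}}}^{\pm it}$ via \eqref{modflowrel} reassembles the rotated Petz map $\alpha_\eta^t$ of \eqref{petzfaith} (the sign of $t$ being fixed by those conventions), and since $|\psi_{\mathcal{B}}\rangle\in\mathscr{P}^\natural_{\mathcal{B}}$ with $\langle\psi_{\mathcal{B}}|b|\psi_{\mathcal{B}}\rangle=\omega_\psi(\iota(b))$ for $b\in\mathcal{B}$, the right-hand side is exactly $\omega_\psi\circ\iota\circ\alpha_\eta^t(a_+)$, as claimed.
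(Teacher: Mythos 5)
Your part (2) is essentially the paper's own argument: conjugate $a_+$ through the modular flows via \eqref{modflowrel}, use positivity together with $\pi^{\cA'}(\psi)\leq 1$ to obtain the inequality, and use $V_\eta^*\cA' V_\eta\subset\mathcal{B}'$ to reassemble the rotated Petz map; that half is fine. The gap is in part (1), and it sits exactly at the step you call ``the crucial input''. Faithfulness of $\sigma$ does \emph{not} make $\Delta_{\eta_{\mathcal{B}},\psi_{\mathcal{B}}}^{-1}$ bounded on its support, and the quantity $\langle\psi_{\mathcal{B}}|\Delta_{\eta_{\mathcal{B}},\psi_{\mathcal{B}}}^{-2\Re z}|\psi_{\mathcal{B}}\rangle$ is generically infinite for $\Re z>0$: faithfulness gives full support but controls nothing about the spectrum near zero. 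In the matrix-algebra model of sec.~\ref{finitealgebra} this quantity at $\Re z=1/2$ is $\tr(\sigma_{\mathcal B}^{-1}\rho_{\mathcal B}^{2})$, a $\chi^2$-type divergence that blows up as soon as $\sigma^{-1}$ is unbounded and has no meaning in type III. Tomita--Takesaki theory only guarantees $|\psi_{\mathcal{B}}\rangle\in\mathrm{dom}(\Delta_{\eta_{\mathcal{B}},\psi_{\mathcal{B}}}^{w})$ for $0\leq\Re w\leq 1/2$; the \emph{negative} powers require a domination condition of the form $\omega_{\psi}\leq c\,\omega_\eta$ as in \eqref{dom2}, which is precisely what the filtering construction of subsec.~\ref{subsec:5.5} (lem.~\ref{lem:f}, lem.~\ref{lem:newgamma}) exists to manufacture and is not available for the bare $|\psi\rangle$. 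Consequently your continuation of \eqref{ourvec} upward from the bottom edge does not get off the ground, and your key operator identity $\Delta_{\eta_\cA,\psi_\cA}^{1/2}V_\psi\Delta_{\eta_{\mathcal{B}},\psi_{\mathcal{B}}}^{-1/2}=J_\cA V_\eta J_{\mathcal{B}}$ is already domain-problematic when applied to $|\psi_{\mathcal{B}}\rangle$.

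The paper runs the construction in the opposite direction precisely to dodge this. It defines the matrix element
\begin{equation*}
g(z)=\left(\Delta^{\bar z-1/2}_{\eta,\psi;\cA}a'\left|\eta_\cA\right>,\;J_\cA V_\eta J_{\mathcal{B}}\,\Delta_{\eta,\psi;\mathcal{B}}^{-z+1/2}\left|\psi_{\mathcal B}\right>\right),\qquad a'\in\cA',
\end{equation*}
in which the half-unit shift keeps every modular power acting on a vector in the Tomita--Takesaki-safe range: the value at $\Re z=1/2$ is then \emph{manifestly} the pairing with \eqref{Gt} (a product of isometries), and the real work is the cocycle computation \eqref{manip} showing that the boundary value at $\Re z=0$ reproduces \eqref{Gb} --- i.e.\ the bottom edge is the nontrivial identification, not the top. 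Phragm\'en--Lindel\"of on the edge bounds plus density of $\cA'|\eta_\cA\rangle$ then delivers the vector, the norm bound $\leq 1$, and weak continuity. To salvage your route you would have to first establish the domination $\omega_\psi\leq c\,\omega_\eta$, which fails in general; this is exactly why the extension to the \emph{lower} strip in lem.~\ref{lem:newgamma} is only carried out for the filtered vectors $|\psi_P\rangle$.
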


\begin{remark}\label{rem:1}
1) A variant of this theorem holds when $|\psi\rangle$ is replaced by a unit vector $|\chi\rangle$ that is not necessarily 
in the natural cone. In this case, we should define
\begin{equation}
\label{comp}
\left| \Gamma_\chi(z) \right> = v'_{\chi} \left| \Gamma_{\xi_\chi} (z) \right>
\end{equation}
with $v'_\chi$ as in \eqref{genvec}. The limiting values \eqref{Gb}, \eqref{Gt} 
at the boundaries of the strip are then readily computed using
\eqref{Vetaother}. In particular, \eqref{Gb} takes the same form as before 
as seen using \eqref{Vchi}, \eqref{Vpsiprop}, which 
also implies $\big| \Gamma_\chi(0)\big> = \left| \chi \right>$. \eqref{inducepetz} follows from \eqref{eq:change}.

2) The proof shows that we would have equality in (2) if $\pi^{\cA'}(\psi) = 1$, i.e. if $|\psi\rangle$ is cyclic for $\cA$. 
\end{remark}

\begin{proof} 
Let us use in this proof the shorthands $\Delta_{\eta_{\mathcal{B}},\psi_{\mathcal{B}}}
=\Delta_{\eta, \psi;\mathcal{B}}$ and $\Delta_{\eta_{\mathcal{A}},\psi_{\mathcal{A}}}
=\Delta_{\eta, \psi;\mathcal{A}}$.

(1) Given an $a'\in \mathcal{A}'$, consider the  function:
\begin{equation}
g(z) = \left( \Delta^{\bar z-1/2}_{\eta,\psi;\cA} a' \left| \eta_\cA \right>,
J_\cA V_\eta J_{\mathcal{B}} \Delta_{\eta, \psi;\mathcal{B}}^{-z+1/2} \left| \psi_\mathcal{B} \right> \right) ,
\end{equation}
which using Tomita-Takesaki theory is analytic in the strip $\bS_{1/2}$,  continuous in the closure, and bounded by:
\begin{equation}
\label{weakbound}
|g(z)| \leq \max_{0\leq \theta \leq 1/2} \| (\Delta_{\eta,\psi;\cA}')^\theta a' \eta_\cA \| \,  \| (\Delta_{\eta, \psi;\mathcal{B}} )^\theta \psi_{\mathcal{B}} \|,
\end{equation}
where $\theta = {\rm Re}(1/2-z)$. The maximum is achieved by continuity and compactness of the interval. This bound is however not uniform over vectors $a' \left| \eta_\cA \right> \in \mathscr{H}$ with norm $1$. For this, we need to use the Phragmen-Lindel\" off theorem. Our function has the following form at the edges of the strip $(t \in \mathbb{R})$:
\begin{subequations}
\begin{align}
\label{geasy}
g(1/2+it) =& \left( a' \left| \eta_\cA \right>, \left| \Gamma_\psi(1/2+ i t) \right> \right) \\
g(it) =& \left( a' \left| \eta_\cA \right>, \left| \Gamma_\psi(i t) \right> \right) ,
\label{ghard}
\end{align}
\end{subequations}
where we made use of the expressions/definitions in \eqref{Gt} and \eqref{Gb} respectively. The first equation above is rather trivial but the second equation requires some lines of algebra:
\begin{align}
\label{manip}
g(it) & = \left( \Delta^{-it-1/2}_{\eta,\psi;\cA} a' \left| \eta_\cA\right>,
J_\cA V_\eta \Delta_{\psi, \eta;\mathcal{B}}^{-it} \pi^{\mathcal{B}}(\psi)  \left| \eta_\mathcal{B} \right> \right) \nonumber \\
& =  \left( \Delta^{-it-1/2}_{\eta,\psi;\cA} a' \left| \eta_\cA \right>,
J_\cA V_\eta b \left| \eta_\mathcal{B} \right> \right) \,, \qquad b = \Delta_{\psi, \eta;\mathcal{B}}^{-it} \pi^{\mathcal{B}}(\psi)   \Delta_{\eta;\mathcal{B}}^{it}
\in \mathcal{B} \nonumber \\
& =  \left( \Delta^{-it-1/2}_{\eta,\psi;\cA} a' \left| \eta_\cA\right>,
J_\cA  \iota(b) \left| \eta_\cA \right> \right) = 
\left( \Delta^{-it}_{\eta,\psi;\cA} a' \left| \eta_\cA\right>,  \iota(b)^* \left| \psi_\cA \right> \right) \nonumber \\
&
 = \left( \Delta^{-it}_{\eta,\psi;\cA} a' \left| \eta_\cA\right>,    V_\psi b^* \left| \psi_{\mathcal{B}} \right>\right)   = 
  \left( \Delta^{-it}_{\eta,\psi;\cA} a' \left| \eta_\cA\right>,    V_\psi  \Delta_{\eta, \psi;\mathcal{B}}^{-it} \left| \psi_{\mathcal{B}} \right>\right) ,
\end{align}
where in the first line we used \eqref{modA}, in the second we inserted $ \Delta_{\eta;\mathcal{B}}^{it}$ for free giving rise to $b$ which is
in $\mathcal{A}$ from the last equation in \eqref{rels}, we used \eqref{embed} in the third line after which we passed $\Delta_{\eta,\psi;\cA}^{-1/2}$ to the right which is allowed since this vector is now in the domain of this operator. We used \eqref{Vpsiprop} in line four and finally $b$ can be rewritten as:
\begin{equation}
\pi^{\mathcal{B}'}(\psi) b =  \Delta_{\psi;\mathcal{B}}^{-it} \pi^{\mathcal{B}}(\psi)   \Delta_{\eta, \psi;\mathcal{B}}^{it}
\end{equation}
using \eqref{coswitch}. This finally leads to \eqref{ghard}. Since both expressions in \eqref{Gt} and \eqref{Gb} involve products of partial isometries we have the following bound on the edges of the strip:
\begin{equation}
|g(it)|, |g(1/2+it)| \leq \| a'  \eta_\cA  \|,
\end{equation}which then extends inside the strip via the Phragmen-Lindel\" of theorem. That theorem also requires the (weaker) bound we derived in \eqref{weakbound}  and it applies inside the closure of the strip. Since $\mathcal{A}' \left| \eta_\cA \right>$ is dense in the Hilbert space we can extend the definition of $g(z)$ to the full Hilbert space, at which point it is a continuous anti-linear functional on all vectors, weakly (hence strongly) holomorphic in $\bS_{1/2}$. This then defines a vector in $\mathscr{H}$ which is then our definition of \eqref{ourvec} on the strip $\bS_{1/2}$. The bound on the norm of this vector follows also from Phragmen-Lindel\" of theorem. 
For the continuity statements we further need the limit of $g(z)$, as $a' \left| \eta \right>$ approaches an arbitrary vector, to be uniform in $z$. This follows easily from the uniform boundedness of $g(z)$ and the Banach-Steinhaus principle.  

(2) The final property \eqref{inducepetz} follows from a short calculation:
\begin{align}
&\left( J_\cA    V_\eta J_{\mathcal{B}} \Delta_{\eta, \psi;\mathcal{B}}^{-it}   \left| \psi_{\mathcal{B}} \right>
,   \Delta_{\eta,\chi;\cA}^{-it}  a_+ \Delta_{\eta,\chi;\cA}^{it}  J_\cA    V_\eta J_{\mathcal{B}} \Delta_{\eta, \psi;\mathcal{B}}^{-it}   \left| \chi_{\mathcal{B}} \right> \right) \nonumber\\
& \quad = \left( J_\cA    V_\eta J_{\mathcal{B}} \Delta_{\eta, \psi;\mathcal{B}}^{-it}   \left| \psi_{\mathcal{B}} \right>
,  \varsigma_\eta^t  ( a_+) \pi^{\cA'}(\psi)  J_\cA    V_\eta J_{\mathcal{B}} \Delta_{\eta, \psi;\mathcal{B}}^{-it}   \left| \chi_{\mathcal{B}} \right> \right)\nonumber \\
& \quad = \left( J_\cA    V_\eta J_{\mathcal{B}} \Delta_{\eta, \psi ;\mathcal{B}}^{-it}   \left| \psi_{\mathcal{B}} \right>
,     {\varsigma_{\eta;\cA}^t  ( a_+)}^{1/2}   \pi^{\cA'}(\psi)   {\varsigma_{\eta;\cA}^t  ( a_+)}^{1/2} J_\cA   V_\eta J_{\mathcal{B}} \Delta_{\eta, \psi;\mathcal{B}}^{-it}   \left| \psi_{\mathcal{B}} \right> \right) \nonumber \\
& \quad \leq \left(     \left| \psi_{\mathcal{B}} \right>
,  \Delta_{\eta, \psi;\mathcal{B}}^{it}   J_{\mathcal{B}}  \left( V_\eta^* J_\cA \varsigma_{\eta, \cA}^t  ( a^+) J_\cA  V_\eta\right) J_{\mathcal{B}} \Delta_{\eta, \psi;\mathcal{B}}^{-it}   \left| \psi_{\mathcal{B}} \right> \right)\nonumber \\
& \quad = \left(     \left| \psi_{\mathcal{B}} \right>
,   \alpha^t_{\eta;\cA}(a_+) \left| \psi_{\mathcal{B}} \right> \right)
= \omega_\psi \circ \iota \circ \alpha_\eta^t(a_+),
\end{align}
where we used \eqref{modflowrel} in the second line, the positivity of $\varsigma_\eta^t  ( a_+)$ in the third line, the bound $\pi^{\cA'}(\psi)  \leq 1$
in the fourth line, the fact that $V_\eta^* \mathcal{A}' V_\eta \subset \mathcal{B}'$ (see \eqref{VAprime}) and again \eqref{modflowrel} for the $\mathcal{B}$ algebra in the fifth line.
\end{proof}

\subsection{Strengthened monotonicity}
\label{subsec:5.5}

\subsubsection{Basic strategy}

We will apply interpolation theory to the vector $\big| \Gamma_\psi(z) \big>$, following the basic strategy of \cite{Junge}. By thm. \ref{thm:gamma} (2) we get the rotated Petz recovered state on the top of the strip at $z= 1/2 + it$, so we need to interpolate to the $L_{1}(\mathcal{A}', \psi)$ norm there where it becomes the fidelity by lem. \ref{lem:fid} (1). Close to $z=0$ we will need to approach the $p=2$ norm (the $\pi(\psi)$ projected Hilbert space norm) by \eqref{eq:h2} where we will show that we can extract the difference in relative entropy. A generalized sum rule, using sub-harmonic analysis, relates the $z=0$ limit to an integral over the fidelities of the $z=1/2+it$ vector. 

Extracting the relative entropy difference is the most difficult part of the proof and requires some modifications to the basic strategy. We proceed by extending the domain of holomorphy to a larger strip so that we can take derivatives at $z=0$ easily. This requires defining a class of states with filtered spectrum for the relative modular operator. We then approach the original state as a limit.  After a continuity argument, we show that this is sufficient to prove a strengthened monotonicity statement for all states with finite $\mathcal{B}$ relative entropy.

\subsubsection{Filtering and continuity}

Our first task will be to extend $\ket{\Gamma_\psi(z)}$ holomorphically into the larger strip $\{-1/2 < {\rm Re} z \leq 1/2\}$. This might not be possible for general $|\psi\rangle$, so to make progress we work with vectors that have approximately bounded spectral support for the relative modular operator $\Delta_{\eta,\psi}$. Thus we now introduce a \emph{filtering} procedure that produces from $|\psi\rangle$ a vector $|\psi_P\rangle$ with approximately bounded spectral support.

For convenience, we work with $|\eta\rangle, |\psi\rangle \in \sH$ in the natural cone, and consider a related vector $|\psi_{P}\rangle$ (which is not  in the natural cone of $\cA$), defined by:
\begin{equation}
\label{psifp}
\left| \psi_{P} \right> = \int_{-\infty}^{\infty}  f_P(t)  \Delta_{\eta,\psi}^{it} \left| \psi \right> \ud t
= \tilde{f}_P(\ln \Delta_{\eta,\psi} ) \left| \psi \right>,
\end{equation}
where $\tilde{f}_P$ is the Fourier transform of a certain function $f_P$ and provides a kind of damping. All modular operators and support projections in this subsection refer to $\cA$, and since we only consider one algebra in this subsection, we drop the subscripts to lighten the notation. Note that $\ln \Delta_{\eta,\psi}$ is defined on $\pi'(\psi)\pi(\psi) \mathscr{H}$ since $\Delta_{\eta,\psi}$ is only invertible there. Away from this subspace the operator acts as $0$. 

We take $f_P$ to have the following properties, motivated by the desire to prove nice continuity statements as $P \rightarrow \infty$. Since we want to think of $P$ as a cutoff, we take $f_P$ to be a scaling function:
\begin{equation}
\label{fscale}
f_P(t) = P f( t P) ,
\end{equation}
and now specify properties of $f(t)$. (Note that the Fourier transform satisfies $\tilde{f}_P(p) = \tilde{f}( p/P)$.)

\begin{definition} \label{defsmooth} We call the function $f$ in \eqref{fscale} a \emph{smooth filtering function} if it satisfies the following properties.
\begin{itemize}
\item[(A)] The Fourier transform of $f$ 
\begin{equation}
\tilde{f}(p) = \int_{-\infty}^{\infty}  e^{ - i t p} f(t) \ud t
\end{equation}
exists as a real and non-negative Schwarz-space function.
This implies that the original function $f$ is Schwarz and has finite $L_1(\mathbb{R})$ norm,  $\| f \|_1 < \infty$.

\item[(B)] $f(t)$ has an analytic continuation to the upper complex half plane such that  the $L_1(\mathbb{R})$ norm 
of the shifted function has $\| f( {\cdot} + i\theta) \|_1 < \infty$
for $0 < \theta < \infty$. 
\end{itemize}
\end{definition}

Note that the Fourier transform of the shifted function satisfies:
\begin{equation}
\widetilde{  f( {\cdot} + i\theta) }(p) = \tilde{f}(p) e^{- \theta p} .
\end{equation}
Examples of such smooth filtering functions include Gaussians as well as the Fourier transform of smooth functions $\tilde{f}$ with compact support. The norms satisfy:
\begin{equation}
\| \tilde{f}_P \|_\infty= \| \tilde{f} \|_\infty \geq \tilde{f}(0) \,, \qquad \|f_P\|_{1} = \|f \|_{1} \geq \|\tilde{f} \|_{\infty} ,
\end{equation}
where the later inequality is well-known as the Hausdorff-Young inequality.

We now establish some properties of the resulting vector $|\psi_P\rangle$:

\begin{lemma}
\label{lem:f}
The filtered vector $\ket{\psi_{P}}$ defined in \eqref{psifp} based on a smooth filtering function $f$, has the following properties:
\begin{enumerate}

\item  $\lim_{P \rightarrow \infty} \left|\psi_{P} \right> = \tilde{f}(0) \left| \psi \right>$ strongly. 
\item  There exists $a_P \in \mathcal{A}$ such that
$
 \left|\psi_{P} \right> = a_{P} \left| \psi \right>,
$
and
 \begin{equation}
 \label{bdafp}
 \| a_{P} \| \leq  \| f \|_{1} ,
 \end{equation}
such that $a_P \pi(\psi) = a_P$.

\item The induced linear functional on $\cA'$ is dominated by
\begin{equation}
\label{linbd}
\omega'_{\psi_{P} }  \leq \|f\|_1^2 \ \omega'_{\psi}, 
\end{equation}
and $\pi'(\psi_P) \leq \pi'(\psi)$. 

\item There exists $a_P' \in \mathcal{A}'$ such that
$
\left| \psi_P \right> = a_P' \left| \eta \right> ,
$
and
$
\| a_P' \|  \leq \| f( {\cdot} + iP/2) \|_1.
$

\item The induced linear functional on the algebra $\mathcal{A}$ is dominated by
\begin{equation}
\label{dom2}
\omega_{\psi_P} \leq \| f( {\cdot} + iP/2) \|_1^2 \ \omega_{\eta}. 
\end{equation}
\end{enumerate}
We need property (A) of def.~\ref{defsmooth} for (1-3) and property (B) for (4-5). 
\end{lemma}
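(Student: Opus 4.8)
The plan is to split the five parts according to which hypothesis they use: parts (1)--(3) rely only on property (A) and follow from the spectral calculus together with the $\cA$-side Connes cocycle, while parts (4)--(5) rely on property (B) and are obtained by trading the analyticity of the filtering function for a half-unit imaginary shift that moves $|\psi\rangle$ onto $|\eta\rangle$. Two structural identities do most of the work. First, since $\Delta_\psi^{-it}|\psi\rangle=|\psi\rangle$ and $\pi'(\psi)|\psi\rangle=|\psi\rangle$, specializing \eqref{coswitch} gives $\Delta_{\eta,\psi}^{it}|\psi\rangle=u_t|\psi\rangle$, where $u_t:=(D\eta:D\psi)_t\in\cA$ is a partial isometry with $\|u_t\|\le1$ and initial projection $\pi(\psi)$, so $u_t\pi(\psi)=u_t$. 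Second, setting $a=1$ in \eqref{modA} and using $J_\cA|\eta\rangle=|\eta\rangle$ together with the natural-cone symmetry $J_\cA\pi(\psi)J_\cA=\pi'(\psi)$ yields $\Delta_{\eta,\psi}^{1/2}|\psi\rangle=\pi'(\psi)|\eta\rangle$. Throughout, $|\psi\rangle$ lies in the support $\pi(\psi)\pi'(\psi)\sH$ of $\Delta_{\eta,\psi}$, so every functional-calculus manipulation below takes place on that support.

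For (1), I write $|\psi_P\rangle=\tilde f_P(\ln\Delta_{\eta,\psi})|\psi\rangle=\tilde f(\tfrac1P\ln\Delta_{\eta,\psi})|\psi\rangle$ from \eqref{psifp} and \eqref{fscale}; since $\tilde f$ is bounded and continuous, dominated convergence against the finite spectral measure $\ud\langle\psi|E_\lambda|\psi\rangle$ of $\ln\Delta_{\eta,\psi}$ gives $\|\tilde f(\tfrac1P\ln\Delta_{\eta,\psi})|\psi\rangle-\tilde f(0)|\psi\rangle\|\to0$. For (2) I insert $\Delta_{\eta,\psi}^{it}|\psi\rangle=u_t|\psi\rangle$ into \eqref{psifp} and pull the integral through, so $a_P:=\int f_P(t)\,u_t\,\ud t\in\cA$ (a weak-$*$ integral of a bounded, ultraweakly continuous $\cA$-valued function against an $L_1$ weight, hence in the ultraweakly closed $\cA$); then $\|a_P\|\le\|f_P\|_1=\|f\|_1$, which is \eqref{bdafp}, and $a_P\pi(\psi)=a_P$ because each $u_t$ has this property. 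Part (3) is then immediate: for $a'\ge0$ in $\cA'$, the commutation $[a_P,a']=0$ gives $\langle\psi_P|a'|\psi_P\rangle=\langle\psi|a'\,a_P^*a_P|\psi\rangle\le\|a_P\|^2\langle\psi|a'|\psi\rangle$, which is \eqref{linbd}, and $\omega'_{\psi_P}(1-\pi'(\psi))\le\|f\|_1^2\,\omega'_\psi(1-\pi'(\psi))=0$ forces $\pi'(\psi_P)\le\pi'(\psi)$.

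The substantive step is (4). The idea is to use the second structural identity to replace $|\psi\rangle$ by $\pi'(\psi)|\eta\rangle$ at the cost of a factor $\Delta_{\eta,\psi}^{-1/2}$, and then absorb that factor into the filter by shifting the contour. Since $\pi'(\psi)|\eta\rangle=\Delta_{\eta,\psi}^{1/2}|\psi\rangle$ lies in the range of $\Delta_{\eta,\psi}^{1/2}$, I may write $|\psi\rangle=\Delta_{\eta,\psi}^{-1/2}\pi'(\psi)|\eta\rangle$ on the support, so that $|\psi_P\rangle=\bigl[\tilde f_P(\cdot)e^{-\cdot/2}\bigr](\ln\Delta_{\eta,\psi})\,\pi'(\psi)|\eta\rangle$. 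The one place property (B) enters is in checking that $g_P:=\tilde f_P(\cdot)e^{-\cdot/2}$ is bounded: the shift identity $\widetilde{f(\cdot+i\theta)}(p)=\tilde f(p)e^{-\theta p}$, rescaled to $f_P$ with $\theta=\tfrac12$, shows $g_P=\widetilde{f_P(\cdot+i/2)}$ is the Fourier transform of the $L_1$ function $f_P(\cdot+i/2)$, whence $\|g_P\|_\infty\le\|f_P(\cdot+i/2)\|_1=\|f(\cdot+iP/2)\|_1$. Then $g_P(\ln\Delta_{\eta,\psi})=\int f_P(t+i/2)\,\Delta_{\eta,\psi}^{it}\,\ud t$, and applying \eqref{modA} to $a=u_t$ converts each integrand vector into a commutant vector, $\Delta_{\eta,\psi}^{it}\pi'(\psi)|\eta\rangle=\Delta_{\eta,\psi}^{1/2}u_t|\psi\rangle=J_\cA\pi(\psi)u_t^*|\eta\rangle=w_t'|\eta\rangle$ with $w_t':=J_\cA\pi(\psi)u_t^*J_\cA\in\cA'$ and $\|w_t'\|\le1$. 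Thus $a_P':=\int f_P(t+i/2)\,w_t'\,\ud t\in\cA'$ satisfies $|\psi_P\rangle=a_P'|\eta\rangle$ and $\|a_P'\|\le\|f(\cdot+iP/2)\|_1$. Part (5) then mirrors (3): for $a\ge0$ in $\cA$, $\langle\psi_P|a|\psi_P\rangle=\langle\eta|a\,a_P'^*a_P'|\eta\rangle\le\|a_P'\|^2\langle\eta|a|\eta\rangle$, giving \eqref{dom2}.

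I expect the main obstacle to be the rigorous justification of the two analytic steps in (4): first, that the formal product $\tilde f_P(\cdot)e^{-\cdot/2}$ genuinely equals the Fourier transform of the shifted $L_1$ function $f_P(\cdot+i/2)$ and is therefore bounded, which is exactly where property (B) of def.~\ref{defsmooth} is indispensable, since a merely Schwarz $\tilde f_P$ need not control the growth of $e^{-p/2}$ as $p\to-\infty$; and second, that the bounded Borel functional-calculus identity $g_P(\ln\Delta_{\eta,\psi})=\int f_P(t+i/2)\,\Delta_{\eta,\psi}^{it}\,\ud t$ holds. Once these are secured, the remaining work is the routine commutation and support bookkeeping already used in (2)--(3).
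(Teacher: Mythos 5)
Your proof is correct and follows essentially the same route as the paper's: dominated convergence against the spectral measure for (1), the Connes-cocycle representation $\Delta_{\eta,\psi}^{it}\ket{\psi}=u_t\ket{\psi}$ with $u_t\in\cA$ for (2)--(3), and the $i/2$ shift permitted by property (B) to trade $\ket{\psi}$ for $\pi'(\psi)\ket{\eta}$ and land in $\cA'$ for (4)--(5). The only cosmetic difference is that in (4) you package the shift as a bounded functional-calculus factorization $\tilde f_P(L)=\widetilde{f_P(\cdot+i/2)}(L)\,e^{L/2}$ with $L=\ln\Delta_{\eta,\psi}$, whereas the paper deforms the contour of the vector-valued integrand directly; your version arguably isolates more cleanly that the analyticity is only needed for the scalar function $f$.
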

\begin{proof}
(1) Letting $E_{\eta,\psi}(\ud \lambda)$ be the spectral resolution of $\ln \Delta_{\eta,\psi}$, we have
\begin{equation}
\|  \psi_P  - \tilde{f}(0) \psi  \|^2
= \int_{\mathbb R} \left(\tilde{f}(\lambda /P) - \tilde{f}(0) \right)^2 
\left< \psi | E_{\eta,\psi}(\ud \lambda) \psi \right>.
\end{equation}
We can take the pointwise limit $P \rightarrow \infty$ using dominated convergence (since $\tilde{f}$ is a bounded function); this immediately gives the statement.  

(2) We insert $ \Delta_{\psi}^{-it} \pi'(\psi)$ next to $\ket{\psi}$ in the first expression of \eqref{psifp} and find
\begin{equation}
a_{P} = \int_{\mathbb R} f_P(t) \Delta_{\eta}^{it} \Delta_{\psi,\eta}^{-it} \ud t . 
\end{equation}
This is an integral over Connes-cocycles, hence defines an element of $\mathcal{A}$. The operator norm is bounded by 
\begin{equation}
\| a_{P} \| \leq  \int_{\mathbb R} | f_P(t) | \|  \Delta_{\eta}^{it} \Delta_{\psi,\eta}^{-it}  \| \, \ud t
\leq \int_{\mathbb R} | f_P(t) | \, \ud t
= \| f_P \|_1 = \|f \|_1 , 
\end{equation}
since $\Delta_{\eta}^{it} \Delta_{\psi,\eta}^{-it} $ are isometries. 

(3) We establish this via
\begin{equation}
\left< \psi_{P} \right| a_+' \left| \psi_{P} \right>
= \left< \psi \right|  {a_+'}^{1/2} a_{P}^* a_{P} {a_+'}^{1/2} \left|  \psi \right>
\leq \| a_{P} \|^2 \left< \psi \right| a_+' \left| \psi \right> \, , \qquad a_+' \in \mathcal{A}'\,, a_+' \geq 0,
\end{equation}
which gives \eqref{linbd} after using the bound \eqref{bdafp}. The bound on the support projectors follows since $\pi'(\psi_P)$ is the smallest projector $\pi' \in \mathcal{A}'$ that satisfies $\omega_{\psi_P}'( 1 - \pi') = 0$. But $\pi' = \pi'(\psi)$ satisfies this since $0\leq \omega_{\psi_P}'( 1 - \pi'(\psi)) \leq \| f \|_1^2 \omega_\psi'( 1 - \pi'(\psi))  =0$.

(4) Note that $\Delta_{\psi,\eta}^{1/2} \left| \eta \right>=J|\psi\rangle=|\psi\rangle$ since $|\psi\rangle$ is in the natural cone. 
Then, shifting the integration contour as is legal by def.~\ref{defsmooth} (A), 
\begin{align}
\left| \psi_P \right> &= \int_{-\infty}^{\infty}   f_P(t) \Delta_{\eta,\psi}^{it} \Delta_{\psi,\eta}^{1/2} \left| \eta \right>\, \ud t= \int_{-\infty}^{\infty}  f_P(t+i/2) \Delta_{\eta,\psi}^{it} \Delta_{\eta}^{-it} \left| \eta \right>  \, \ud t. 
\end{align}
Note that $\Delta_{\eta,\psi}^{it} \Delta_{\eta}^{-it} $ is a Connes-cocycle for $\cA'$, and hence an element of 
$\cA'$. Now define
\begin{equation}
a_P' =  \int_{-\infty}^{\infty} f_P(t+i/2) \Delta_{\eta,\psi}^{it} \Delta_{\eta}^{-it} \, \ud t \in \mathcal{A}' . 
\end{equation}
Since the Connes-cocycle is isometric, the norm of $a_P'$ may be bounded by
\begin{equation}
\| a_P' \| \leq  \int_{-\infty}^{\infty}  |f_P(t+i/2)| \, \|  \Delta_{\eta,\psi}^{it} \Delta_{\eta}^{-it} \| \, \ud t 
=  \int_{-\infty}^{\infty}  |f(t+ iP/2)| \, \ud t= \| f( {\cdot} + iP/2) \|_1 . 
\end{equation} 

(5) We have
$
\left< \psi_P \right| a_+ \left| \psi_P \right>
= \left< \eta \right| a_+^{1/2} {a_P'}^* a_P' a_+^{1/2} \left| \eta \right>
\leq \|  a_P' \|^2 \left< \eta \right| a_+ \left| \eta \right>, 
$
which gives the statement in view of (4).
\end{proof}

We would now like to see how the relative entropy between $|\eta\rangle, |\psi_P\rangle$ behaves in the limit $P \to \infty$. We will find the conditions on $f$ for which the relative entropy converges to that between $|\eta\rangle, |\psi\rangle$  as $P \rightarrow \infty$. 

\begin{theorem}
\label{lem:finites}
Suppose $|\psi\rangle, |\eta\rangle$ are states on a v. Neumann algebra $\cA$, assumed to be in the natural cone, and suppose $|\psi_P\rangle$ is given by \eqref{psifp} with scaling function \eqref{fscale} satisfying property (A) of def.~\ref{defsmooth} and $\tilde{f}(0)=1$. Then:

\begin{enumerate}
\item
$S(\psi_{P}|\eta) < \infty$. 

\item 
We have
\begin{equation}
\label{lowbd}
- 2 \ln\left(  \|f \|_1  \right) 
+ \limsup_{P \rightarrow \infty} S(\psi_{P}|\eta) 
\leq S(\psi|\eta)  \leq \liminf_{P \rightarrow \infty} S(\psi_{P}|\eta) .
\end{equation}

\item
The relative entropy behaves continuously for $P \rightarrow \infty$,
\begin{equation}
\label{contrel}
\lim_{P \rightarrow \infty} S(\psi_{P}|\eta) = S(\psi|\eta) ,
\end{equation}
iff the Fourier transform of the scaling function, $\tilde{f}(t)$, is a Gaussian centered at the origin. 
\end{enumerate}
\end{theorem}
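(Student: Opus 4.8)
The plan is to reduce all three parts to one exact identity that splits $S(\psi_P|\eta)$ into a spectral piece and a commutant remainder. Write $g_P(\lambda)=\tilde f(\lambda/P)$, which is real by property (A), so that $|\psi_P\rangle = g_P(\ln\Delta_{\eta,\psi})|\psi\rangle$ and $Z_P:=\|\psi_P\|^2=\int g_P(\lambda)^2\,d\mu(\lambda)$, where $d\mu(\lambda)=\langle\psi|E_{\eta,\psi}(d\lambda)\psi\rangle$ is the spectral measure of $\ln\Delta_{\eta,\psi}$ in the state $\psi$ (a probability measure with $-\int\lambda\,d\mu = S(\psi|\eta)$). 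Passing to the normalized vector via the scaling $\Delta_{\eta,c\psi_P}=|c|^{-2}\Delta_{\eta,\psi_P}$, and using that $\Delta_{\eta,\psi_P}$ and $\Delta_{\eta,\psi}$ have the same left index $\eta$ and hence differ only by a commutant Connes cocycle (so that $\ln\Delta_{\eta,\psi_P}-\ln\Delta_{\eta,\psi}$ is affiliated with $\cA'$), I would establish
\[ S(\psi_P|\eta) = -\ln Z_P - \frac{1}{Z_P}\int \lambda\,g_P(\lambda)^2\,d\mu(\lambda) + S_{\cA'}(\psi_P|\psi), \]
where $S_{\cA'}(\psi_P|\psi)\geq 0$ is the relative entropy of $\psi_P$ with respect to $\psi$ computed on the commutant $\cA'$.

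Parts (1) and (2) then follow quickly. For (1), the spectral piece is finite because $\lambda\,\tilde f(\lambda/P)^2$ is bounded (Schwarz decay of $\tilde f$) against the probability measure $d\mu$, while Lemma~\ref{lem:f}(3) gives $\omega'_{\psi_P}\leq\|f\|_1^2\,\omega'_\psi$, hence $S_{\cA'}(\psi_P|\psi)\leq 2\ln\|f\|_1-\ln Z_P<\infty$. For (2), since $g_P\to\tilde f(0)=1$ pointwise with $0\le g_P^2\le\|\tilde f\|_\infty^2$, dominated convergence (Fatou when $S(\psi|\eta)=\infty$) gives $Z_P\to 1$ and $-Z_P^{-1}\int\lambda\,g_P^2\,d\mu\to S(\psi|\eta)$. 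Dropping the nonnegative remainder yields the upper inequality $S(\psi|\eta)\le\liminf_P S(\psi_P|\eta)$ of \eqref{lowbd} (equivalently, lower semicontinuity of relative entropy along the strong limit $\psi_P\to\psi$ of Lemma~\ref{lem:f}(1)), and the bound $S_{\cA'}\le 2\ln\|f\|_1-\ln Z_P$ yields the lower inequality $\limsup_P S(\psi_P|\eta)\le S(\psi|\eta)+2\ln\|f\|_1$.

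The identity also shows that \eqref{contrel} holds if and only if $\lim_P S_{\cA'}(\psi_P|\psi)=0$, so all of part (3) concerns when the commutant remainder vanishes. Here one uses that $|\psi_P\rangle=\int f_P(t)\,(D\eta:D\psi)_t|\psi\rangle\,dt$ is a smearing of $\psi$ along the relative modular flow, so $S_{\cA'}(\psi_P|\psi)$ quantifies the disturbance this $\cA$-valued smearing inflicts on the commutant state. I would analyze the remainder through the coherences of $\psi$ between the spectral data of $\rho_\eta$ and of $\rho^\cA_\psi$: the $(k,l)$ coherence couples only to a fixed modular-energy difference $\nu_{kl}$ and is weighted by $g_P(\lambda)\,g_P(\lambda-\nu_{kl})$, and the limit $P\to\infty$ is fed entirely by the spectral tails where $\lambda\sim P$ and $\tilde f(\lambda/P)$ is of order one. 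Evaluating these tail contributions should show that the limiting disturbance vanishes for every $\psi$ precisely when $\ln\tilde f$ is quadratic, i.e. $\tilde f(p)=e^{-cp^2}$: the ``if'' direction exploits the heat-semigroup structure $g_P^2=e^{-2c\lambda^2/P^2}$, and the ``only if'' direction exhibits, for any non-quadratic $\ln\tilde f$, a state whose tail coherences leave a strictly positive residual.

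The main obstacle is exactly this last step. Because relative entropy is only lower semicontinuous and is genuinely discontinuous in infinite dimensions---in finite dimensions the remainder $S_{\cA'}(\psi_P|\psi)$ always tends to zero, so every admissible $f$ would work---one cannot pass to the limit inside $S_{\cA'}$, and the entire effect is concentrated in the unbounded tails of the spectrum of $\ln\Delta_{\eta,\psi}$. Isolating there the sharp Gaussian criterion, rather than the crude super-exponential decay already hinted at by property (B), is the delicate part; establishing the master identity rigorously (domains of the unbounded operators, and the cocycle difference affiliated with $\cA'$) is the other technical point, though a comparatively routine one.
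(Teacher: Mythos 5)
Your parts (1) and (2) are essentially sound in outcome, but they rest on a ``master identity'' whose rigorous status you underestimate. In a general v.~Neumann algebra there are no density matrices, and the claim that $\ln\Delta_{\eta,\psi_P}-\ln\Delta_{\eta,\psi}$ is affiliated with $\cA'$ in a way that lets $\langle\psi_P|\ln\Delta_{\eta,\psi_P}|\psi_P\rangle$ split additively into a spectral piece plus $S_{\cA'}(\psi_P|\psi)$ is precisely the kind of relative-Hamiltonian statement that requires boundedness hypotheses unavailable here; it is not a ``comparatively routine'' domain check. The paper's proof is organized so as to never need this identity: from the majorization \eqref{linbd} it obtains Araki's resolvent inequality $(\Delta'_{\psi_P,\eta}+\beta)^{-1}\ge(\|f\|_1^2\,\Delta'_{\psi,\eta}+\beta)^{-1}$ and inserts it into the integral representation of the relative entropy, which yields directly the one-sided bound
\begin{equation*}
S(\psi_P|\eta)\;\le\; 2\ln(\|f\|_1)\,\langle\psi_P|\psi_P\rangle \;-\;\big\langle\psi\big|\ln\Delta_{\eta,\psi}\,\big(\tilde f_P(\ln\Delta_{\eta,\psi})\big)^2\big|\psi\big\rangle .
\end{equation*}
Combined with dominated convergence on the spectral integral and lower semicontinuity of $S$ (both of which you also invoke), this is all that (1) and (2) require. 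So for these two parts your route is a technically harder way of producing the same two inequalities, and you would need to either prove the identity (nontrivial) or retreat to the inequality form, at which point you have reproduced the paper's argument.

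The genuine gap is part (3). You correctly reduce the question to the behavior of a commutant remainder, but you then miss the mechanism that closes the argument and replace it with an unexecuted program. The ``if'' direction needs no tail or coherence analysis at all: for a centered Gaussian $\tilde f$ the function $f$ is positive, hence $\|f\|_1=\int f=\tilde f(0)=1$, and the sandwich \eqref{lowbd} from part (2) already forces \eqref{contrel}; your proposed ``heat-semigroup'' exploitation of $g_P^2=e^{-2c\lambda^2/P^2}$ is solving a problem you have already solved. The ``only if'' direction is where the content lies, and your plan --- ``exhibit, for any non-quadratic $\ln\tilde f$, a state whose tail coherences leave a strictly positive residual'' --- is a statement of intent, not a proof: nothing in your write-up identifies which property of $f$ decides whether the residual survives, nor how one would construct the bad state. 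The paper's criterion is the saturation of the chain $\|f\|_1\ge\|\tilde f\|_\infty\ge\tilde f(0)=1$, with the extremizers pinned down by Lieb's characterization of Hausdorff--Young maximizers; whether or not one finds that endpoint argument fully satisfying, it at least isolates a concrete, checkable condition on $f$, which your proposal never does. As written, part (3) of your proposal does not constitute a proof of either direction of the equivalence.
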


\begin{remark}
The above statements hold even if $ S(\psi|\eta)  = \infty$ with limits understood as living on the compactified real line.
So for example in this case \eqref{lowbd} or \eqref{contrel} implies that 
$\lim_{P \rightarrow \infty} S(\psi_{P}|\eta)  = \infty$. 
\end{remark}

\begin{proof}
(1) In view of \eqref{linbd}, \cite{Araki2}, thm. 3.6, eq. (3.7), applied to the algebra $\cA'$, gives:
\begin{equation}
\label{ara}
\frac{1}{\Delta'_{\psi_{P},\eta}+ \beta} \geq \frac{1}{\| f \|_1^2 \Delta'_{\psi,\eta}+ \beta} 
\end{equation}
for all $\beta > 0$.\footnote{When applying \cite{Araki2}, thm. 3.6, eq. (3.7) to the commutant $\mathcal{A}'$ using \eqref{modA},
where one switches $\mathcal{A} \leftrightarrow \mathcal{A}'$ as well as any support projectors $\pi \leftrightarrow \pi'$. 
Note further that \cite{Araki2}, thm. 3.6 refers to the natural cone but the specific representative of the linear functional does not affect the modular operators above since $\Delta'_{\xi'_{\psi_{P}},\eta} = \Delta_{\psi_{P},\eta}'$ 
using the notation \eqref{genvec} (now for the commutant).}
The following type of integral representation for the relative entropy is well-known, see e.g. \cite{Petz1993}:
\begin{equation}
\label{intsrel}
S(\psi_{P} | \eta) = \int_0^{\infty}  \left< \psi_{P} \right| \left( - \frac{1}{ \Delta_{\eta,\psi_{P}}^{-1} + \beta} + \frac{1}{\beta+1} \right) \left| \psi_{P} \right> \ud \beta, 
\end{equation}
and integral converges iff the relative entropy is finite.
The bound in \eqref{ara} can be used to bound \eqref{intsrel} from 
above due to the first equation in \eqref{rels} and this gives:
\begin{align}
\nonumber
S(\psi_{P} | \eta) & \leq 2 \ln (\| f \|_1)  \left< \psi_{P} \right| \left. \psi_{P} \right>  + \int_0^{\infty}  \left< \psi_{P} \right| \left( - \frac{1}{ \Delta_{\eta,\psi}^{-1} + \beta} + \frac{1}{\beta+1} \right) \left| \psi_{P} \right> \ud \beta
 \\ 
 & = 2 \ln (\| f \|_1) \left< \psi_{P} \right| \left. \psi_{P} \right>   - \left< \psi \right| \ln \Delta_{\eta,\psi} \left( \tilde{f}_P(\ln \Delta_{\eta,\psi}) \right)^2 \left| \psi \right>    .
 \label{rhsent}
\end{align}
Using the spectral decomposition of $\ln \Delta_{\eta,\psi}$, we can write
\begin{equation}
\label{intrhs}
 \left< \psi \right| \ln \Delta_{\eta,\psi} \left( \tilde{f}_P(\ln \Delta_{\eta,\psi}) \right)^2 \left| \psi \right>   
 = - \int_{- \infty}^{\infty} p \left( \tilde{f}( p/P)\right)^2 \left< \psi \right| E_{\eta, \psi}(\ud p ) \left| \psi \right>.
\end{equation}
This integral converges because $p \tilde{f}( p/P)^2$ is uniformly bounded, by the Schwartz condition in def.~\ref{defsmooth} (A).
Thus the right hand side of \eqref{rhsent} is finite and so we have shown (1). 

(2) Let us continue by first assuming that $S(\psi|\eta) < \infty$. 
Strong convergence of $\psi_P$, lem. \ref{lem:f} (1), guarantees that $\lim_P \left< \psi_{P} \right| \left. \psi_{P} \right>= 1$ since $\tilde{f}(0)= 1$. Now the integral on the right hand
side of \eqref{intrhs} can be split into two parts:
\begin{align}
\lim_{P \rightarrow \infty} \int_{0}^{\pm \infty} |p| \left( \tilde{f}( p/P)\right)^2  \left< \psi \right| E_{\eta,\psi}(\ud p) \left| \psi \right>
=  \int_{0}^{\pm \infty} |p| \left< \psi \right| E_{\eta,\psi}(\ud p) \left| \psi \right>,
\end{align}
where we have applied the dominated convergence theorem to each term using the facts that $\tilde{f}_P(p)$ is bounded  and that the relative entropy is finite.
Taking the lim sup on both sides of \eqref{rhsent} gives the first inequality in \eqref{lowbd}. Lower semi-continuity of relative entropy \cite{Araki2} gives the second inequality. 

If instead $S(\psi|\eta) = \infty$, then we find from lower semi-continuity:
\begin{equation}
\limsup_{P \rightarrow \infty} S(\psi_{P}|\eta) \geq \liminf_{P \rightarrow \infty} S(\psi_{P}|\eta) = \infty, 
\end{equation}
thus the limit must exist on the extended positive real line where it is infinite. This shows (2).

(3) 
Note that  $\|f \|_1 \geq \| \tilde{f}\|_\infty \geq \tilde{f}(0)=1$ so we get the continuity in \eqref{contrel} iff the Hausdorff-Young inequality is saturated and $ \tilde{f}(0)= \| \tilde{f}\|_\infty$.
It was shown by Lieb \cite{Lieb} that the only functions that saturate the Hausdorff-Young bound are in fact Gaussians.  The condition $ \tilde{f}(0)= \| \tilde{f}\|_\infty$ then simply
means the Gaussian $\tilde{f}$ must be centered at the origin. 
\end{proof}

\subsubsection{Updated interpolating vector}

We now consider again our interpolating vector \eqref{ourvec}. With the intention to 
extend the domain of holomorphy, we consider the filtered states $|\psi_P\rangle$ 
instead of $|\psi\rangle$. Although $|\psi_P\rangle$ is not in the natural cone, 
we can still define $\big| \Gamma_{\psi_P}(z) \big>$ in view of rem. \ref{rem:1} (1). This will however by itself not be sufficient: 
It turns out that we also have to apply a projector $\Pi_\Lambda$ to our vectors, so we consider
\begin{equation}\label{eq:upinterp}
 \Pi_\Lambda 
  \big| \Gamma_{\psi_P}(z) \big> \,, \qquad  \Pi_\Lambda  \equiv \int_{-\Lambda}^\Lambda E_{\psi_P}(\ud \lambda) ,
\end{equation}
where $E_{\psi_P}(\ud \lambda)$ is the spectral decomposition of $\ln \Delta_{\psi_P}$, so that $\lim_{\Lambda \to \infty}\Pi_\Lambda = \pi( \psi_P) \pi'(\psi_P)$ in the strong sense. We intend to send the regulators $\Lambda, P \rightarrow \infty$, and in that process we will 
tune $\tilde{f}(0)$ to maintain $\| \left| \psi_{P} \right> \| = 1$, and  require (A) and (B) of def.~\ref{defsmooth}. With those changes in place, we claim the following updated version of thm.~\ref{thm:gamma}. 
\begin{lemma}
\label{lem:newgamma}
\begin{enumerate}
\item
The vector valued function
$
z \mapsto
\Pi_\Lambda  
\big| \Gamma_{\psi_{P}}(z) \big> 
$
can be continued analytically to the extended strip $-1/2 < {\rm Re} z < 1/2$. It is bounded and weakly continuous in the closure. 

\item
Its norm is bounded above by $1$ in the closed upper half strip  $\{0 \leq {\rm Re} z \leq 1/2\}$ and we have the following
estimate in the lower half strip $\{-1/2 \leq {\rm Re} z \leq 0\}$:
\begin{equation}
\label{newbound}
\big\| \Pi_\Lambda  \Gamma_{\psi_{P}}(z)  \big\|  \leq \left( \| f({\cdot} + iP/2) \|_1 e^{\Lambda/2} \right)^{-2{\rm Re} z}.
\end{equation}

\item 
We have
\begin{subequations}
\begin{align}
\label{firstderiv}
&\frac{\ud}{\ud z}  \left( \Pi_\Lambda  \big| \Gamma_{\psi_{P}}(\bar z) \big>, \Pi_\Lambda  \big| \Gamma_{{\psi}_{P}}(z) \big> \right) \bigg|_{z=0} = 2  \frac{\ud}{\ud z} \big< {\psi}_{P} \big| \Gamma_{{\psi}_{P}}(z) \big>\bigg|_{z=0}   \\
& = - 2 \left( S_{\cA}(\psi_{P}| \eta) - S_{\mathcal B}(\psi_{P}| \eta) \right).
\label{secondderiv}
\end{align}
\end{subequations}
\end{enumerate}
\end{lemma}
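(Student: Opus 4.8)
The plan is to treat the three assertions in turn, in each case reducing matters to the scalar holomorphic functions used in the proof of thm.~\ref{thm:gamma}, and then feeding in the two genuinely new ingredients at our disposal: the spectral cutoff $\Pi_\Lambda$, which commutes with $\Delta_{\psi_P}$ and bounds $\Delta_{\psi_P}^z$ in operator norm by $e^{\Lambda|\mathrm{Re}\,z|}$ on its range, and the filtering estimates of lem.~\ref{lem:f} -- in particular the representation $|\psi_P\rangle = a_P'|\eta\rangle$ with $a_P'\in\cA'$ and $\|a_P'\|\le\|f({\cdot}+iP/2)\|_1$ from lem.~\ref{lem:f}(4), which uses property (B) of def.~\ref{defsmooth}. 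Throughout I write $\Delta_{\eta,\psi_P;\cA}$, $\Delta_{\eta,\psi_P;\cB}$ for the relative modular operators built from the $\cA$- resp. $\cB$-cone representatives, and $|(\psi_P)_\cB\rangle$ for the latter; I keep $\|\psi_P\|=1$ by tuning $\tilde f(0)$.

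For (1), I would copy the strategy of thm.~\ref{thm:gamma}: for $a'\in\cA'$ introduce $g(z) = \left( a'|\eta_\cA\rangle, \Pi_\Lambda|\Gamma_{\psi_P}(z)\rangle \right)$, which by thm.~\ref{thm:gamma} (applied through rem.~\ref{rem:1}(1) to the non-cone vector $\psi_P$) is already holomorphic in $\bS_{1/2}$ and bounded on its closure. The only new point is the continuation across $\mathrm{Re}\,z=0$ into $-1/2<\mathrm{Re}\,z\le0$. Here I would rewrite $\Delta_{\eta,\psi_P;\cA}^{z}$ through the Connes cocycle relating it to $\Delta_{\psi_P;\cA}^{z}$ (cf.~\eqref{coswitch},\eqref{rels}), so that the cutoff acts as $\Pi_\Lambda\Delta_{\psi_P;\cA}^{z}$, an entire family of norm $\le e^{\Lambda|\mathrm{Re}\,z|}$, while the cocycle factor together with the $\cB$-side vector $\Delta_{\eta,\psi_P;\cB}^{-z}|(\psi_P)_\cB\rangle$ remains holomorphic and bounded by virtue of property (B). Weak holomorphy of $g$ on the enlarged strip, density of $\cA'|\eta_\cA\rangle$, and the Banach--Steinhaus/Phragmen--Lindel\"of argument of thm.~\ref{thm:gamma} then upgrade this to a weakly (hence strongly) holomorphic vector-valued function, weakly continuous on the closure.

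For (2), the bound $\le1$ on $\{0\le\mathrm{Re}\,z\le1/2\}$ is immediate from $\|\Pi_\Lambda\|\le1$ and the norm bound of thm.~\ref{thm:gamma}(1) (using $\|\psi_P\|=1$ and that $v'_{\psi_P}$ in rem.~\ref{rem:1}(1) is a partial isometry). The estimate \eqref{newbound} is a Hadamard three-lines statement for $\Pi_\Lambda|\Gamma_{\psi_P}(z)\rangle$ between $\mathrm{Re}\,z=0$, where the norm is $\le1$, and $\mathrm{Re}\,z=-1/2$, where I must prove $\|\Pi_\Lambda\Gamma_{\psi_P}(-\tfrac12+it)\|\le\|f({\cdot}+iP/2)\|_1\,e^{\Lambda/2}$; since $-2\,\mathrm{Re}\,z$ runs from $0$ to $1$, convexity of $\log\sup_t\|\cdot\|$ interpolates exactly to \eqref{newbound}. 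The left-edge bound is obtained by the same cocycle rewriting as in (1): it exposes a factor $\Pi_\Lambda\Delta_{\psi_P;\cA}^{-1/2}$, contributing the $e^{\Lambda/2}$, acting on a vector built from $|\psi_P\rangle=a_P'|\eta\rangle$, whose $\cA'$-norm is controlled by $\|f({\cdot}+iP/2)\|_1$ via lem.~\ref{lem:f}(4). \textbf{This left-edge estimate is the main obstacle}, since it is the one place where the two regulators must cooperate to dominate the genuinely unbounded operator $\Delta^{-1/2}$, and where the analytic continuation of the cocycle must be made rigorous.

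For (3), first note that $\Delta_{\psi_P;\cA}|\psi_P\rangle=|\psi_P\rangle$, so $\ln\Delta_{\psi_P;\cA}|\psi_P\rangle=0$ and hence $\Pi_\Lambda|\psi_P\rangle=|\psi_P\rangle$ for every $\Lambda>0$; with $|\Gamma_{\psi_P}(0)\rangle=|\psi_P\rangle$ (rem.~\ref{rem:1}(1)) this gives $\Pi_\Lambda|\Gamma_{\psi_P}(0)\rangle=|\psi_P\rangle$. Writing $\widetilde\Gamma(z)=\Pi_\Lambda|\Gamma_{\psi_P}(z)\rangle$ and $h(z)=\langle\psi_P|\Gamma_{\psi_P}(z)\rangle$, the function $F(z)=(\widetilde\Gamma(\bar z),\widetilde\Gamma(z))$ is jointly holomorphic in $z$ and $\bar z$, and its two boundary contributions at $z=0$ are complex conjugates, giving the general identity $F'(0)=2\,\mathrm{Re}\,h'(0)$ (the cutoff drops out of $h$ because $\Pi_\Lambda|\psi_P\rangle=|\psi_P\rangle$). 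To evaluate $h'(0)$ I would differentiate $h(z)=\langle\psi_P|\Delta_{\eta,\psi_P;\cA}^{z}V_{\psi_P}\Delta_{\eta,\psi_P;\cB}^{-z}|(\psi_P)_\cB\rangle$ termwise using $\tfrac{d}{dz}\Delta^{z}|_{0}=\ln\Delta$ together with $V_{\psi_P}|(\psi_P)_\cB\rangle=|\psi_P\rangle$ and $V_{\psi_P}^*|\psi_P\rangle=|(\psi_P)_\cB\rangle$ (from \eqref{Vpsiprop} and the isometry property), obtaining
\[
h'(0)=\langle\psi_P|\ln\Delta_{\eta,\psi_P;\cA}|\psi_P\rangle-\langle(\psi_P)_\cB|\ln\Delta_{\eta,\psi_P;\cB}|(\psi_P)_\cB\rangle=-\bigl(S_\cA(\psi_P|\eta)-S_\cB(\psi_P|\eta)\bigr),
\]
where the last step is the Araki formula $S(\psi|\eta)=-\langle\psi|\ln\Delta_{\eta,\psi}|\psi\rangle$. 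The termwise differentiation is legitimate because $\widetilde\Gamma'(0)$ exists by part (1) ($z=0$ being interior to the enlarged strip) and because $S_\cA(\psi_P|\eta)<\infty$ by thm.~\ref{lem:finites}(1) (with $S_\cB\le S_\cA$ by monotonicity), so that $|\psi_P\rangle$ and $|(\psi_P)_\cB\rangle$ lie in the relevant form domains of $\ln\Delta$. Since $h'(0)$ is thereby real, the identity $F'(0)=2\,\mathrm{Re}\,h'(0)$ upgrades to \eqref{firstderiv}, and the displayed value gives \eqref{secondderiv}.
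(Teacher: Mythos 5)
Your plan is correct and follows essentially the same route as the paper's proof: the same factorization $\Pi_\Lambda\Delta_{\psi_P}^{z}\,\bigl(\Delta_{\psi_P}^{-z}\Delta_{\eta,\psi_P}^{z}\bigr)\,V_{\psi_P}\Delta_{\eta,\psi_P;\cB}^{-z}\big|\psi_{P\,\cB}\big>$, the same use of the majorization \eqref{dom2} from lem.~\ref{lem:f} (property (B)) to continue the Connes cocycle into the lower half-strip with the stated norm bound, and the same evaluation of the derivative at $z=0$ resting on the finiteness of $S(\psi_P|\eta)$ from thm.~\ref{lem:finites}(1). The only steps the paper makes explicit that you elide are the edge-of-the-wedge (Morera) argument gluing the lower-strip continuation to the upper-strip one of thm.~\ref{thm:gamma} across ${\rm Re}\,z=0$, and the Cauchy--Schwarz estimate controlling the cross term when splitting $h'(0)$ into the two relative-entropy contributions; both are standard and your cited ingredients suffice to carry them out.
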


\begin{proof}
In order for the proof to run  in parallel with that of thm.~\ref{thm:gamma}, we
consider instead of $| \psi_P \rangle$ the corresponding vector $|\xi_{\psi_P}\rangle$ in the natural 
cone of $\cA$. By  rem. \ref{rem:1} (1),  and transformation formulas such as $\Delta_{{\psi}_P}^z = {v'_{\psi_P}}^* \Delta_{\xi_{{\psi}_P}}^z {v'_{\psi_P}}$ (which give corresponding transformation formulas for $\Pi_\Lambda$), we 
find that $\Pi_{\Lambda, \psi_P} \big| \Gamma_{{\psi}_{P}}(z) \big> = {v'_{\psi_P}} \Pi_{\Lambda, \xi_{\psi_P}} \big| \Gamma_{\xi_{\psi_P}}(z) \big>$. The partial isometry ${v'_{\psi_P}}$ is evidently of no consequence for the claims 
made in this lemma. By abuse of notation, we can assume without loss of generality for the rest of this proof that $|\psi_P\rangle$ is in the natural cone. 

(1) Then, as in the proof of thm.~\ref{thm:gamma}, we also use the 
shorthand $\Delta_{\eta_{\mathcal{B}},\psi_{P \,\mathcal{B}}}
=\Delta_{\eta, \psi_P;\mathcal{B}}$ etc. With these notations understood, let us
write out 
\begin{equation}
\label{rewrite}
\Pi_\Lambda \big| \Gamma_{{\psi}_{P}}(z) \big> = 
 \left(\Pi_\Lambda \Delta_{{\psi}_P \, \cA}^{z} \right) \left(  \Delta_{{\psi}_P \, \cA}^{-z} \Delta_{\eta,\psi_P;\cA}^z \right)  V_{\psi_P}\Delta_{\eta, \psi_P;\mathcal{B}}^{-z} \big| \psi_{P \, \mathcal{B}}\big>,
\end{equation}
which is initially defined only for purely imaginary $z$.
We now consider the bracketed operator above: $ \Delta_{{\psi}_P}^{-z} \Delta_{\eta,\psi_P}^z$. It is well known that the majorization condition \eqref{dom2} ensures that this operator
has an analytic continuation to the strip $-1/2 < {\rm Re} z  < 0$. For completeness we give this argument here using a similar approach as in the proof of thm.~\ref{thm:gamma}. 

Thus, we define, dropping temporarily the subscript $\cA$ as all quantities refer to this algebra:
\begin{equation}
G(z) = \left( c^* \big| \psi_P \big> + \left| \zeta \right> , \Delta_{\psi_P}^{-z} \Delta_{\eta,\psi_P}^z d' \big| \eta \big>   \right)
= \left( \Delta_{\psi_P}^{-\bar z} c^* \big| \psi_P \big> ,   \Delta_{\eta,\psi_P}^z d' \left| \eta \right> \right),
\end{equation}
where: $c \in \mathcal{A}, \,\, d' \in \mathcal{A}'$ and $\left| \zeta \right> \in (1-\pi'(\psi_P) ) \mathscr{H}$.
This function is holomorphic in the lower strip $\{-1/2 < {\rm Re} z  < 0\}$ and is continuous in the closure due to Tomita-Takesaki theory.  As in the proof of thm.~\ref{thm:gamma} we can easily derive an upper bound on $|G(z)|$ that is not uniform with $c,d'$.  We can then improve this to a uniform bound using the Phragm\'{e}n-Lindel\"{o}f theorem by checking the top and bottom edges of the strip.
At the top we have:
\begin{align} 
| G(it)  | & \leq  \big\| c^* \big| \psi_P \big> \big\|  \left\| d' \left|  \eta \right> \right\| ,
\end{align}
and at the bottom we need the following calculation:
\begin{align}
G(-1/2+it) &=   \left( \Delta_{\psi_P}^{it} \Delta_{\psi_P}^{1/2} c^* \big| \psi_P \big>,   \Delta_{\eta,\psi_P}^{it} \Delta_{\eta,\psi_P}^{-1/2} d' \left| \eta \right> \right) \nonumber \\
 &=  \left( \Delta_{\psi_P}^{it} \Delta_{\psi_P}^{1/2} c^* \big| \psi_P \big>,   \Delta_{\eta,\psi_P}^{it} J {d'}^* J
 \Delta_{\psi_P}^{-it}  \big| \psi_P \big> \right) \nonumber \\
 &=  \left( \Delta_{\psi_P}^{it} \Delta_{\psi_P}^{1/2} c^* \big| \psi_P \big>,   \Delta_{\eta}^{it} J {d'}^* J
 \Delta_{\psi_P,\eta}^{-it}  \big| \psi_P \big> \right) \nonumber \\
 & = \left( \Delta_{\psi_P}^{it} c^* \big| \psi_P \big>,  J \left( \Delta_{\eta}^{it} J {d'}^* J
 \Delta_{\psi_P,\eta}^{-it} \right)^* J  \big| \psi_P \big> \right) \nonumber
 \\
  & = \left( \Delta_{\psi_P}^{it} c^* \big| \psi_P \big>,  \Delta_{\eta,\psi_P}^{it} d'  \Delta_{\eta}^{it}    \big| \psi_P \big> \right) .  
\end{align}
Consequently,
\begin{align}
| G(-1/2+it)  | & \leq  \big\| \pi(\psi_P ) c^* \big| \psi_P \big> \big\|  \big\|  \pi'(\psi_P ) d'  \Delta_{\eta}^{it}    \big| \psi_P \big> \big\|
\leq  \big\| c^* \big| \psi_P \big> \big\|  \big\|{\varsigma'}_{\eta}^{t}(d')    \big| \psi_P \big> \big\| \nonumber \\
& \leq \| f({\cdot} +iP)  \|_1  \big\| c^* \big| \psi_P \big> \big\|  \left\| {\varsigma'}_{\eta}^{t} (d')  \big| \eta \big> \right\|
= \| f({\cdot} +iP) \|_1  \big\| c^* \big| \psi_P \big> \big\|  \left\| d'    \big| \eta \big> \right\|
\end{align}
where in the first line we dropped the support projectors and defined modular flow on $\cA'$,  ${\varsigma'}_{\eta}^{t}(d')= \Delta_{\eta}^{-it} d' \Delta_{\eta}^{it}$. In the second line we finally used the majorization condition \eqref{dom2} that is true for these filtered states.
These bounds at the edges of the strip, and the weaker bound derived earlier, can be extended into the full strip such that
$
 G(z) \| f({\cdot} +iP)  \|_1^{2z}
$
 is holomorphic and bounded by $1$ everywhere for $ -1/2 \leq {\rm Re}(z) \leq 0$.  Since $c^* \big| \psi_P \big> + \left| \zeta \right>$ and  $d' \big| \eta \big> $
 for all $c \in \mathcal{A}$ and $d' \in \mathcal{A}'$
 are dense, we can extend the definition of the operator $ \Delta_{\psi_P}^{-z} \Delta_{\eta,\psi_P}^z$ to the entire Hilbert space where it remains
 bounded,
 \begin{equation}
 \| \Delta_{\psi_P}^{-z} \Delta_{\eta,\psi_P}^z \| \leq \| f({\cdot} +iP)  \|_1^{-2{\rm Re}z}.
 \end{equation}
 Since the limit on $G(z)$ as $c^* \big| \psi_P \big>$ and  $d' \big| \eta \big> $ approaches two general vectors in the Hilbert space and is uniform in $z$, we get the same
 continuity statement for $ \Delta_{\psi_P}^{-z} \Delta_{\eta,\psi_P}^z$ in the weak operator topology. We also get holomorphy for this operator in the interior of the strip. 
 Note that since $ \Delta_{\psi_P}^{-z} \Delta_{\eta,\psi_P}^z =  (D\psi_P:D\eta)_{-iz} \pi'(\psi_P)$ for the Connes-cocycle $(D\psi_P:D\eta)_{-iz} \in \mathcal{A}$ holds along $z=it$ for real $t$, it continues to take this form in the lower strip.

Now let us turn to the first bracketed operator in \eqref{rewrite}, $\Pi_\Lambda  \Delta_{\psi_P}^{z}$, which is a holomorphic operator (and thus continuous in the strong operator topology) in the entire strip due to the projection on a bounded support of the spectrum of $\ln  \Delta_{{\psi_P}}$. In fact,
the operator norm satisfies $\| \Pi_\Lambda  \Delta_{{\psi_P}}^{z}\| \leq e^{-\Lambda {\rm Re} z}$ for ${\rm Re } z \leq 0$.
Finally we analyze the following vector appearing in \eqref{rewrite}, 
$
 \Delta_{\eta, \psi_P;\mathcal{B}}^{-z} \big| \psi_{P\, \mathcal{B}} \big> 
$
which is holomorphic in $\{-1/2 < {\rm Re} z < 0\}$ and strongly continuous in the closure of this region due to Tomita-Takesaki theory. This vector is also norm bounded by $1$. 

At this stage, we can combine the above holomorphy statements in \eqref{rewrite} 
showing that this vector is analytic in the lower strip $\{ -1/2 < {\rm Re} z < 0\}$. 
For the continuity statement in $z$, note that
an operator that is uniformly bounded and continuous in the weak operator topology such as $ \Delta_{\psi_P,\eta}^{-z} \Delta_{\eta}^z$, 
acting on a strongly continuous vector $ \Delta_{\eta,\psi_{P};\mathcal{B}}^{-z}  \big| \psi_{P \, \mathcal{B}}\big>$ 
gives a weakly continuous vector. 
Similarly, an operator that is continuous in the strong operator topology $ \Pi_\Lambda \Delta_{\psi_P}^{z} $ acting on a weakly continuous vector -- the output of the last statement -- gives a weakly continuous vector. 

Now we use the vector-valued edge of the wedge theorem (see e.g. \cite{Sanders}, app. A), in conjunction
with thm. \ref{thm:gamma}, which already establishes an analytic extension to the upper strip $0< {\rm Re} z < 1/2$.
We thereby extend $\Pi_\Lambda \big| \Gamma_{\psi_{P}}(z) \big>$ holomorphically to the full strip $-1/2 < {\rm Re} z < 1/2$.  

(2) The bound \eqref{newbound} follows by combining the operator norm bounds above. 

(3) Holomorphy at $z=0$ allows us to take the derivative in \eqref{firstderiv} on the bra and ket separately and it is easy to see that they give the same contribution. The equality in \eqref{firstderiv} also relies on $\Pi_\Lambda \big| \psi_{P} \big> =  \big| \psi_{P} \big>$.
The second line \eqref{secondderiv} follows by working with the right hand side of in \eqref{firstderiv} and taking the derivative as a limit along $z = i t$ for $t \rightarrow 0$. This
gives:
\begin{align}
\label{firstlimit}
&\lim_{t \rightarrow 0} \left( \big< {\psi}_{P \, \cA} \big| \Delta_{\eta,{\psi}_{P};\cA}^{it}  V_{\psi_P} 
\Delta_{\eta,\psi_P;\mathcal{B}}^{-it} \big| {\psi}_{P \, \mathcal{B}} \big> -1 \right)/(it) \nonumber \\
=& 
\lim_{t \rightarrow 0} 
\left( \big< {\psi}_{P \, \cA} \big| \Delta_{\eta,{\psi}_{P};\cA}^{it} \big| {\psi}_{P \, \cA} \big> - 1  
\right)/(it) + 
\lim_{t \rightarrow 0}
 \left( 
\big<  {\psi}_{P \, \mathcal{B} } \big| \Delta_{\eta, \psi_P;\mathcal{B}}^{-it} \big| \psi_{P \, \mathcal{B}} \big> - 1  
\right)/(it) \nonumber \\
=&- S_{\cA}(\psi_{P}| \eta) + S_{\mathcal B}(\psi_{P}| \eta),
\end{align}
where the later limits can be shown to exist when the $\psi_P$ relative entropies are finite, as is indeed the case by thm. \ref{lem:finites} (1), see \cite{Petz1993}, thm. 5.7. The first equality in \eqref{firstlimit} can be shown more explicitly by subtracting the two sides and observing that this is an inner product on two vectors. After applying the Cauchy-Schwarz inequality, one again uses the finiteness of $\psi_P$ relative entropy, by thm. \ref{lem:finites} (1), to show that this difference vanishes in the limit:
\begin{align}
& \lim_{t \rightarrow 0} \frac{ \left| \left(\Delta_{\eta,{\psi}_{P};\cA}^{-it}   \big| {\psi}_{P \, \cA} \big> -   
\big| {\psi}_{P \, \cA} \big> , \,\,  V_{\psi_P} \Delta_{\eta, \psi_P;\mathcal{B}}^{-it} \big| 
{\psi}_{P \, \mathcal{B}} \big> -  \big| {\psi}_{P \, \cA} \big> \right) \right|^2}{t^2} \nonumber \\
\le & \lim_{t \rightarrow 0}  \frac{2 {\rm Re} \left( 1 -  \big< {\psi}_{P \, \cA} \big| \Delta_{\eta,{\psi}_{P}; \cA}^{it} \big| {\psi}_{P \, \cA} \big>\right)}{t} \frac{2 {\rm Re} \left(1 - \big< ( {\psi}_{P \mathcal{B} } \big| \Delta_{\eta, \psi_P;\mathcal{B}}^{-it} \big|  {\psi}_{P \, \mathcal{B} } \big>\right)}{t}   = 0. 
\end{align}
\end{proof}

\subsubsection{$L_p$ norms of updated interpolating vector}

We now study $L_p$ norms of the updated interpolating vector \eqref{eq:upinterp} and its 
limits as $P,\Lambda \to \infty$, $z\to 0$ and $p \to 1$ or $p \to 2$. First we consider $p=1$.

\begin{lemma}
\label{lem:p1} 
\begin{enumerate}
\item
The $L_1(\cA',\psi_P)$-norm of   \eqref{eq:upinterp} for $z = 1/2 + i t$ satifsfies:
\begin{equation}
\lim_{\Lambda \rightarrow \infty} \left\| \Pi_\Lambda  \Gamma_{{\psi}_{P}}(1/2+ i t) \right\|_{1,{\psi}_{P}}^{\mathcal{A}'} =
\left\|  \Gamma_{{\psi}_{P}}(1/2+ i t) \right\|_{1,{\psi}_{P}}^{\mathcal{A}'} 
\leq  F(\omega_{\psi_{P}}, \omega_{\psi_{P}} \circ \iota \circ \alpha_{\eta}^t ) 
\end{equation}
where $ \alpha_{\eta}^t $ is the rotated Petz map defined in \eqref{petzfaith}. 

\item
We have
\begin{equation}
\label{limP}
\lim_{P \rightarrow \infty}  F(\omega_{\psi_{P}}, \omega_{\psi_{P}} \circ \iota \circ \alpha_{\eta}^t ) =
 F(\omega_{\psi}, \omega_{\psi} \circ \iota \circ\alpha_{\eta}^t ) 
\end{equation}
\end{enumerate}
\end{lemma}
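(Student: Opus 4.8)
The plan is to reduce both assertions to statements about the fidelity via Lemma~\ref{lem:fid}(1), which identifies $\|\cdot\|_{1,\psi_P}^{\cA'}$ with $F(\omega_{\psi_P},\cdot\,)$, and then to exploit the variational formula \eqref{uhl2}. The latter makes the fidelity uniformly continuous in its vector arguments: for any contraction $x'\in\cA'$ one has $|\langle\psi_P|x'|\zeta\rangle-\langle\psi_P|x'|\zeta'\rangle|\le\|\psi_P\|\,\|\zeta-\zeta'\|=\|\zeta-\zeta'\|$, so that $|F(\omega_{\psi_P},\omega_\zeta)-F(\omega_{\psi_P},\omega_{\zeta'})|\le\|\zeta-\zeta'\|$. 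This single estimate drives all three limit/continuity claims.

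For the equality in (1) I would first note that $\Pi_\Lambda\to\pi(\psi_P)\pi'(\psi_P)$ strongly, hence $\Pi_\Lambda\,\Gamma_{\psi_P}(1/2+it)\to\pi(\psi_P)\pi'(\psi_P)\,\Gamma_{\psi_P}(1/2+it)$ strongly. The uniform estimate then gives $\lim_\Lambda\|\Pi_\Lambda\Gamma_{\psi_P}(1/2+it)\|_{1,\psi_P}^{\cA'}=F\big(\omega_{\psi_P},\omega_{\pi(\psi_P)\pi'(\psi_P)\Gamma_{\psi_P}(1/2+it)}\big)$. It then remains to observe that $F(\omega_{\psi_P},\cdot)$ is insensitive to applying the support projections to the second vector: the factor $\pi(\psi_P)=\pi^\cA(\psi_P)$ drops out by \eqref{projequal}, while $\pi'(\psi_P)$ drops out because $\langle\psi_P|=\langle\psi_P|\pi'(\psi_P)$ and $\pi'(\psi_P)$ commutes with every $x'\in\cA'$ in \eqref{uhl2}. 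This recovers $\|\Gamma_{\psi_P}(1/2+it)\|_{1,\psi_P}^{\cA'}$.

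For the inequality in (1) I would write $\|\Gamma_{\psi_P}(1/2+it)\|_{1,\psi_P}^{\cA'}=F(\omega_{\psi_P},\omega_{\Gamma_{\psi_P}(1/2+it)})$ and invoke Theorem~\ref{thm:gamma}(2), transferred to the filtered vector $\psi_P$ (which is not in the natural cone) via Remark~\ref{rem:1}(1); this supplies the domination $\omega_{\Gamma_{\psi_P}(1/2+it)}\le\omega_{\psi_P}\circ\iota\circ\alpha_\eta^t$ of positive functionals on $\cA$. The claim then follows from monotonicity of the fidelity under domination of its second argument: representing the dominating state $\phi=\omega_{\psi_P}\circ\iota\circ\alpha_\eta^t$ by a vector $\Theta$, the domination lets us write the relevant vector as $c'\Theta$ with $c'\in\cA'$ a contraction, so that each term $\langle\psi_P|x'|\cdot\rangle$ in \eqref{uhl2} is reproduced by $\langle\psi_P|x'c'|\Theta\rangle$ with $x'c'$ again a contraction, whence $F(\omega_{\psi_P},\omega_{\Gamma_{\psi_P}(1/2+it)})\le F(\omega_{\psi_P},\phi)$.

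For (2) I would use Lemma~\ref{lem:f}(1) together with the normalization $\tilde f(0)\to1$ to get $\psi_P\to\psi$ strongly, hence $\omega_{\psi_P}\to\omega_\psi$ in norm by \eqref{eq:PS}. Since $\iota\circ\alpha_\eta^t$ is a (unital, completely positive) channel and therefore contractive on functionals, $\omega_{\psi_P}\circ\iota\circ\alpha_\eta^t\to\omega_\psi\circ\iota\circ\alpha_\eta^t$ in norm as well, so their natural-cone representatives converge strongly by \eqref{eq:PS} again; evaluating the fidelity on these representatives and feeding both strong limits into the uniform estimate yields the claimed convergence. The main obstacle is concentrated in the inequality of part~(1): establishing the fidelity monotonicity cleanly in the general von Neumann setting, i.e. the Radon--Nikodym-type representation of a dominated functional by a commutant contraction with correct support bookkeeping, and verifying that Theorem~\ref{thm:gamma}(2) genuinely transfers to the non-natural-cone filtered vector $\psi_P$. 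The two continuity statements, by contrast, are routine once the uniform bound from \eqref{uhl2} is in hand.
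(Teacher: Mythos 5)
Your proposal follows the same overall strategy as the paper's proof: reduce everything to the fidelity via lem.~\ref{lem:fid}(1), use a Lipschitz-type continuity of the fidelity in its vector arguments (this is exactly lem.~\ref{lem:Fcont} of the appendix, which the paper derives from the triangle inequality for the $L_1$-norm rather than directly from \eqref{uhl2}, but the content is identical), and obtain the inequality in (1) from the domination \eqref{inducepetz} supplied by thm.~\ref{thm:gamma}(2) and rem.~\ref{rem:1}(1). Two of your sub-arguments differ from the paper's in a legitimate way. For the monotonicity of $F(\omega_{\psi_P},\cdot)$ under domination of the second argument you construct a commutant Radon--Nikodym contraction $c'\in\cA'$ with $|\Gamma_{\psi_P}(1/2+it)\rangle=c'|\Theta\rangle$; the paper instead invokes superadditivity (concavity) of the fidelity, $F(\rho,\sigma_1+\sigma_2)\ge F(\rho,\sigma_1)+F(\rho,\sigma_2)$, citing \cite{UhlmannFidelity} (see \eqref{convexFid}). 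Both work; yours is more self-contained given \eqref{uhl2}, while the paper's sidesteps the support bookkeeping you flag as a concern. For part (2) the paper picks the explicit representative $\Delta_{\eta;\cA}^{it}J_\cA V_\eta J_{\mathcal{B}}\Delta_{\eta;\mathcal{B}}^{-it}|\psi_{P\,\mathcal{B}}\rangle$ of the recovered state, whose $P$-dependence sits entirely in $|\psi_{P\,\mathcal{B}}\rangle$, and then applies lem.~\ref{lem:Fcont}; your route via norm-contractivity of $\iota\circ\alpha_\eta^t$ on functionals together with \eqref{eq:PS} reaches the same conclusion without identifying the representative, and is equally valid.

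One justification in your argument is wrong as stated: $\pi'(\psi_P)=\pi^{\cA'}(\psi_P)$ is itself an element of $\cA'$ and does \emph{not} commute with a general contraction $x'\in\cA'$ unless $\cA'$ is abelian, so you cannot move it past $x'$ inside \eqref{uhl2}. The correct (and simpler) reason the projector is harmless is the one the paper gives: $\pi^{\cA'}(\psi_P)\,|\Gamma_{\psi_P}(1/2+it)\rangle=|\Gamma_{\psi_P}(1/2+it)\rangle$ holds already, because the leftmost factor $\Delta_{\eta,\psi_P;\cA}^{1/2+it}$ in \eqref{ourvec} is supported on $\pi^{\cA'}(\psi_P)\mathscr{H}$; the factor $\pi^{\cA}(\psi_P)$ then drops by \eqref{projequal} as you say. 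With that one repair your proof goes through.
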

\begin{proof}
(1) For the first equality, we need an appropriate continuity property of the $L_1$-norm which is provided 
in lem. \ref{lem:Fcont},  app.~\ref{sec:lem:p1}. It shows that strong convergence of the 
vectors implies the convergence of the $L_1$ norm. 
For the limit $\Lambda \rightarrow \infty$, this follows from the strong convergence of $\Pi_\Lambda$ to $\pi'(\psi_P)\pi(\psi_P)$. In fact, we can drop these support projectors because by definition $\pi'({\psi_P}) \big| \Gamma_{{\psi}_{P}} (z)\big> = \big| \Gamma_{{\psi}_{P}} (z)\big>$ and also because the $L_p$ norms satisfy \eqref{projequal}. 

Next, lem.~\ref{lem:fid} (1) gives $
\left\|  \Gamma_{{\psi}_{P}}(1/2+ i t) \right\|_{1,{\psi}_{P}}^{\mathcal{A}'} 
= F(\omega_{\psi_P},\omega_{\Gamma_t} ),
$
where we use the shorthand  $|\Gamma_t \rangle \equiv |\Gamma_{\psi_P}(1/2+it)\rangle$. 
Now we use the majorization condition on $\omega_{\Gamma_t}$ \eqref{inducepetz}, 
in conjunction with the concavity of the fidelity \cite{UhlmannFidelity}:
\begin{align}
\label{convexFid}
F(\omega_{\psi_P}, \omega_{\psi_P} \circ \iota \circ \alpha_\eta^t) 
&= F(\omega_{\psi_P},\omega_{\Gamma_t} + (\omega_{\psi_P} \circ \iota \circ \alpha_\eta^t- \omega_{\Gamma_t} )) \nonumber \\ 
& \geq F(\omega_{\psi_P}, \omega_{\Gamma_t} ) +  F(\omega_{\psi_P},  (\omega_{\psi_P} \circ \iota \circ \alpha_\eta^t- \omega_{\Gamma_t} )) \nonumber
\\ & \geq F(\omega_{\psi_P}, \omega_{\Gamma_t} )
\end{align}
This completes 
the proof of (1).
 
(2) We use the fact that, where the fidelity $F(\omega_{\psi_P}, \omega_{\psi_P} \circ \iota \circ \alpha_\eta^t)$ is concerned, we can pick another vector that gives the same linear functional. We can replace:
\begin{equation}
F(\omega_{\psi_P}, \omega_{\psi_P} \circ \iota \circ \alpha_\eta^t ) = \big\|    \Delta_{\eta;\cA}^{it}  J_\cA    V_\eta J_{\mathcal{B}} \Delta_{\eta;\mathcal{B}}^{-it}   {\psi}_{P  \mathcal{B}} 
\big\|_{1,\psi_P}^{\mathcal{A}'}.
\end{equation}
Then, in view of lem. \ref{lem:Fcont},  app.~\ref{sec:lem:p1}, 
 we only need establish the strong convergence of $\big| {\psi}_{P \, \mathcal{B}} \big>$ and of $\big|\psi_{P \, \cA} \big>$
 as $P \to \infty$, 
and this follows by combining lem.~\ref{lem:f} (1) and eq. \eqref{eq:PS} 
[remembering the notations \eqref{eq:notation}]. 
\end{proof}

Next, we consider simultaneously approaching $p=2$ and $z= 0$.

\begin{lemma}
\label{lem:p2} 
We have
\begin{align}
\lim_{ \theta \rightarrow 0} \frac{1}{\theta} \ln \left\| \Pi_{\Lambda} \Gamma_{{\psi}_{P}}(\theta) \right\|_{p_\theta, {\psi}_{P}}^{\mathcal{A}'}
&= \lim_{\theta \rightarrow 0} \frac{1}{2\theta} \ln \left( \Pi_\Lambda \left| \Gamma_{{\psi}_{P}}(\theta) \right>, \Pi_\Lambda \left| \Gamma_{{\psi}_{P}}(\theta) \right> \right) \\
&= -  \left( S_{\cA}(\psi_{P}| \eta) - S_{\mathcal B}(\psi_{P}| \eta) \right) \nonumber
\label{limp2}
\end{align}
with $p_\theta = 2/(1+2 \theta)$. 
\end{lemma}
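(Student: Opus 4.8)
The plan is to factor out the Hilbert-space norm of the unnormalized interpolating vector, reduce the first equality to the ``first law'' of thm.~\ref{firstlaw}, and read off the second equality directly from lem.~\ref{lem:newgamma}~(3). Throughout write $N(\theta) \equiv ( \Pi_\Lambda \Gamma_{\psi_P}(\theta), \Pi_\Lambda \Gamma_{\psi_P}(\theta))$ for real $\theta$, i.e. the squared norm $\| \Pi_\Lambda \Gamma_{\psi_P}(\theta) \|^2$. Note that $p_\theta = 2/(1+2\theta)$ gives $1/2 - 1/p_\theta = -\theta$, so the $L_{p_\theta}(\cA',\psi_P)$ norm is only an order-$\theta$ perturbation of the $p=2$ norm, which by \eqref{eq:h2} is the projected Hilbert-space norm; this is the structural reason the two logarithmic derivatives should agree.

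First I would set up the normalized family. By lem.~\ref{lem:newgamma}~(1) the map $z \mapsto \Pi_\Lambda \Gamma_{\psi_P}(z)$ is holomorphic on the strip $-1/2 < \Re z < 1/2$, whose interior contains $0$, and is therefore strongly differentiable at $\theta = 0$. Using $\Gamma_{\psi_P}(0) = \psi_P$ (rem.~\ref{rem:1}~(1)) together with $\Pi_\Lambda \psi_P = \psi_P$ we obtain $\Pi_\Lambda \Gamma_{\psi_P}(0) = \psi_P$, which is normalized, so $N(0) = 1$ and $\| \Pi_\Lambda \Gamma_{\psi_P}(\theta) - \psi_P \| = O(\theta)$. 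Setting $\zeta_\theta \equiv \Pi_\Lambda \Gamma_{\psi_P}(\theta) / \sqrt{N(\theta)}$ I then get $\zeta_0 = \psi_P$ and $\| \zeta_\theta - \psi_P \| = O(\theta)$, hence $\lim_{\theta \to 0} \| \zeta_\theta - \psi_P \|^2 / \theta = 0$. This is precisely the hypothesis \eqref{vanish} of thm.~\ref{firstlaw}, with reference vector $\psi_P$.

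For the first equality I would use homogeneity of the Araki--Masuda norm to split
\begin{equation}
\frac{1}{\theta} \ln \left\| \Pi_\Lambda \Gamma_{\psi_P}(\theta) \right\|_{p_\theta,\psi_P}^{\cA'} = \frac{1}{2\theta} \ln N(\theta) + \frac{1}{\theta} \ln \left\| \zeta_\theta \right\|_{p_\theta,\psi_P}^{\cA'} .
\end{equation}
The last term tends to $0$ by thm.~\ref{firstlaw}~(2), applied with $p(\lambda) = p_\theta \in [1,2]$ and reference $\psi_P$ (this is exactly statement \eqref{firstsand}), which establishes the first equality. For the second equality I would Taylor-expand $\ln N$: since $N$ is the restriction to the real axis of $z \mapsto ( \Pi_\Lambda \Gamma_{\psi_P}(\bar z), \Pi_\Lambda \Gamma_{\psi_P}(z))$, which is holomorphic at $z=0$, and since $N(0) = 1$, we get $\frac{1}{2\theta} \ln N(\theta) \to \tfrac12 N'(0)$. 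Lem.~\ref{lem:newgamma}~(3) identifies $N'(0) = -2\,(S_{\cA}(\psi_P|\eta) - S_{\cB}(\psi_P|\eta))$, giving the claimed value $-(S_{\cA}(\psi_P|\eta) - S_{\cB}(\psi_P|\eta))$.

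The main obstacle is verifying that thm.~\ref{firstlaw} genuinely applies to the normalized family $\zeta_\theta$: one must confirm both that $\psi_P$ is its $\theta \to 0$ limit (so the reference vector of the norm matches $\zeta_0$) and that the approach is quadratically fast, $\| \zeta_\theta - \psi_P \|^2 = o(\theta)$. Both rest on the strong differentiability of $\Pi_\Lambda \Gamma_{\psi_P}(\theta)$ at $\theta = 0$, i.e. on the holomorphic extension to a neighborhood of the origin supplied by lem.~\ref{lem:newgamma}~(1). Without the projector $\Pi_\Lambda$ and the filtering $\psi_P$ this differentiability --- and hence the finiteness of the derivatives entering both the first law and the relative-entropy derivative --- could fail, which is exactly the role those regulators play.
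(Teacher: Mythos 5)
Your proposal is correct and follows essentially the same route as the paper's proof: normalize the vector to $\zeta_\theta = \Pi_\Lambda\Gamma_{\psi_P}(\theta)/\|\Pi_\Lambda\Gamma_{\psi_P}(\theta)\|$, verify the quadratic-vanishing hypothesis \eqref{vanish} via the holomorphy at $z=0$ from lem.~\ref{lem:newgamma}~(1), apply the first law \eqref{firstsand} to kill the $L_{p_\theta}$ correction, use homogeneity to reduce to the Hilbert-space norm, and identify the derivative via lem.~\ref{lem:newgamma}~(3). Your explicit factorization through $N(\theta)$ and the closing remark on the role of the regulators $\Pi_\Lambda$, $P$ are faithful elaborations of the same argument.
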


\begin{proof}
Define the normalized vector
\begin{equation}
\left| \zeta_\theta \right> \equiv \frac{\Pi_{\Lambda}\big| \Gamma_{\psi_{P}}(\theta) \big> }{\| \Pi_{\Lambda}\Gamma_{\psi_{P}}(\theta)  \|}.
\end{equation}
We can then use lem.~\ref{lem:newgamma}, \eqref{firstderiv} to show that:
\begin{equation}
\lim_{\theta \rightarrow 0^+} \frac{ \| \zeta_\theta - \psi_P \|^2}{\theta} =0.
\end{equation}
So we can apply the ``first law'' \eqref{firstsand} for the $L_p$ norms in lem.~\ref{firstlaw} to $|\zeta_\theta\rangle$, to conclude 
\begin{equation}
\lim_{\theta \rightarrow 0^+}  \frac{1}{\theta} \ln \| \zeta_\theta\|_{p_\theta,\psi_P}^{\mathcal{A}'} =0,
\end{equation}
since $p_\theta = 2/(1+2\theta)$ satisfies the assumptions of lem.~\ref{firstlaw}. 
The $L_p$ norms are homogenous so we can pull out the normalization:
\begin{equation}
\lim_{\theta \rightarrow 0^+}  \frac{1}{\theta} \ln \| \Pi_{\Lambda} \Gamma_{\psi_{P}}(\theta)  \|_{p_\theta,\psi_P}^{\mathcal{A}'}
= \lim_{\theta \rightarrow 0^+}  \frac{1}{\theta} \ln \| \Pi_{\Lambda}\Gamma_{\psi_{P}}(\theta) \|,
\end{equation}
and this gives the desired answer after applying \eqref{firstderiv} again.
\end{proof}
The last ingredient that we will need is an interpolation theorem for the Araki-Masuda $L_p$
norms on a v. Neumann algebra:

\begin{lemma}
\label{lem:hirsch}
Let $|G(z)\rangle$ be a $\sH$-valued holomorphic function on the strip $\bS_{1/2}=\{0<{\rm Re}z<1/2\}$ 
that is uniformly bounded in the closure, $|\psi\rangle \in \sH$ a possibly non-faithful state of a
sigma-finite v. Neumann algebra $\mathcal M$ in standard form acting on $\sH$. Then, for $0<\theta<1/2$, 
\ben
\frac{1}{p_\theta} = \frac{1-2\theta}{p_0} + \frac{2\theta}{p_1}
\een
with $p_0,p_1 \in [1,2]$, we have 
\begin{align}
\label{himp}
& \ln \left\| G(\theta)\right\|^{\mathcal M}_{p_\theta, \psi} \\
  \leq  & \int_{-\infty}^{\infty} \ud t 
  \left(
 (1-2 \theta)  \alpha_\theta(t) \ln \left\| G(it) \right\|^{\mathcal M}_{p_0, \psi} + (2\theta)  \beta_\theta(t) 
  \ln  \left\| G(1/2+it) \right\|^{\mathcal M}_{p_1, \psi} \right),
  \nonumber
\end{align}
where
\begin{equation}
\alpha_\theta(t) = \frac{ \sin(2\pi\theta)}{(1-2\theta)(\cosh(2\pi t ) - \cos(2\pi \theta)) }\,,
\qquad \beta_\theta(t) = \frac{ \sin(2\pi\theta)}{2 \theta(\cosh(2\pi t ) + \cos(2\pi \theta)) }.
\end{equation}
\end{lemma}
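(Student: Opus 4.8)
The plan is to prove this by complex interpolation in the style of Stein, upgraded to the sharper Hirschman form that produces Poisson kernels rather than constant weights; the mechanism is to linearise the $L_{p_\theta}$ norm through its dual pairing and then apply a subharmonicity estimate to a \emph{single} scalar holomorphic function on the strip. First I would reduce to the case of a faithful (cyclic and separating) reference vector $|\psi\rangle$. Since $\mathcal{M}$ is $\sigma$-finite it carries a faithful normal state, and compressing to the support algebra $\pi^{\mathcal{M}}(\psi)\mathcal{M}\pi^{\mathcal{M}}(\psi)$, on which $|\psi\rangle$ becomes cyclic and separating, leaves the Araki--Masuda norms unchanged in view of the support behaviour recorded in \eqref{projequal} together with the variational formula \eqref{AMdef}. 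This is exactly the step flagged in the footnote to the Araki--Masuda definition: the full machinery of \cite{AM} is available only for faithful reference states, and this compression is what lets us invoke it while still treating a possibly non-faithful $|\psi\rangle$.

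With $|\psi\rangle$ faithful, I would invoke the Araki--Masuda duality $\|\zeta\|_{q,\psi}^{\mathcal{M}} = \sup\{\,|\langle h|\zeta\rangle| : \|h\|_{q',\psi}^{\mathcal{M}}\le 1\,\}$ (with $1/q+1/q'=1$) to choose a vector $|h_\theta\rangle$ with $\|h_\theta\|_{p_\theta',\psi}^{\mathcal{M}}\le 1$ and $|\langle h_\theta|G(\theta)\rangle| \ge \|G(\theta)\|_{p_\theta,\psi}^{\mathcal{M}}-\epsilon$. The heart of the argument is then to embed $|h_\theta\rangle$ into an analytic family $z \mapsto |h(z)\rangle := \Delta_\psi^{\,s(z)}|h_\theta\rangle$, where $s(z)$ is the affine function of $z$ vanishing at $z=\theta$ and normalised so that its real parts on the two edges effect precisely the modular shift converting the $L_{p_\theta'}$-normalisation of $|h_\theta\rangle$ into the $L_{p_0'}$- resp.\ $L_{p_1'}$-normalisation. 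One checks this is consistent because the conjugate exponents interpolate affinely, $1/p_\theta' = (1-2\theta)/p_0' + 2\theta/p_1'$, fixing a single real slope for $s$; since $\Delta_\psi^{is}$ is unitary and implements modular flow (a symmetry of the norms), the vertical shifts leave the boundary $L_{p'}$-norms invariant, giving $\|h(it)\|_{p_0',\psi}^{\mathcal{M}}\le 1$ and $\|h(1/2+it)\|_{p_1',\psi}^{\mathcal{M}}\le 1$. It is cleanest to carry this out first for $|h_\theta\rangle$ a modular-analytic element and then pass to the general case by density and the uniform boundedness of $G$.

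Next I would form the scalar function $F(z) = \langle h(z)\,|\,G(z)\rangle$, holomorphic and bounded on $\bS_{1/2}$, with $|F(\theta)| \ge \|G(\theta)\|_{p_\theta,\psi}^{\mathcal{M}}-\epsilon$ and, by the Hölder-type duality bound, boundary estimates $|F(it)| \le \|G(it)\|_{p_0,\psi}^{\mathcal{M}}$ and $|F(1/2+it)| \le \|G(1/2+it)\|_{p_1,\psi}^{\mathcal{M}}$. Since $\ln|F|$ is subharmonic and bounded above, the Poisson representation for the strip $\{0<{\rm Re}\,z<1/2\}$ (conformally the unit disk, on which the two edges acquire exactly the harmonic-measure densities $(1-2\theta)\alpha_\theta(t)$ and $2\theta\beta_\theta(t)$) yields
\begin{equation}
\ln|F(\theta)| \le \int_{-\infty}^{\infty}\!\ud t\,\Big( (1-2\theta)\alpha_\theta(t)\ln|F(it)| + 2\theta\,\beta_\theta(t)\ln|F(1/2+it)|\Big).
\end{equation}
Substituting the boundary bounds and letting $\epsilon\to 0$ gives \eqref{himp}.

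The main obstacle I expect is the construction in the second step: producing an analytic dual family $|h(z)\rangle$ with \emph{both} boundary normalisations simultaneously correct, and guaranteeing that $F$ is genuinely bounded in the interior and that $\ln|F|$ satisfies the hypotheses of the Poisson/subharmonic estimate. This is precisely where faithfulness of $|\psi\rangle$ and the power-operator calculus of \cite{AM} are indispensable, since $\Delta_\psi^{\,{\rm Re}\,s(z)}$ is an unbounded operator and one must control domains and growth of $F$ as $|t|\to\infty$; a standard regularisation by Gaussian factors $e^{\delta z(z-1/2)}$, removed at the end, tames the growth, while the possible vanishing of the boundary norms (values $-\infty$ of $\ln|F|$) only improves the inequality but must still be accommodated in the integrability argument.
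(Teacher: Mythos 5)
Your overall architecture (dualise the $L_{p_\theta}$ norm, build an analytic family on the dual side, get Hölder bounds on the two edges, then apply the Hirschman/Poisson-kernel form of the three-lines theorem) is the same as the paper's, but the step you yourself identify as the heart of the argument is carried out incorrectly. You propose the dual family $|h(z)\rangle=\Delta_\psi^{\,s(z)}|h_\theta\rangle$ with $s$ affine and vanishing at $\theta$, and claim that the real parts of $s$ on the edges ``convert'' the $L_{p_\theta'}$-normalisation of $|h_\theta\rangle$ into $L_{p_0'}$- and $L_{p_1'}$-normalisations, with the imaginary parts acting by modular flow. This does not work: a real power $\Delta_\psi^{\,\sigma}$ of the \emph{reference} modular operator is an unbounded positive operator, not a norm symmetry, and there is no identity of the form $\|\Delta_\psi^{\,\sigma}h\|_{p_0',\psi}=\|h\|_{p_\theta',\psi}$ (already for $\mathcal M=M_2$ with $h$ a matrix unit and $\rho_\psi$ non-degenerate one checks that $\|\rho_\psi^{\sigma}h\rho_\psi^{-\sigma}\|_{S_{p_0'}}$ can exceed $\|h\,\rho_\psi^{1/p_\theta'-1/2}\|_{S_{p_\theta'}}$ by an arbitrarily large factor). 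The exponent that must be varied is not that of $\Delta_\psi$ but that of the relative modular operator attached to $h_\theta$ itself: one needs the Araki--Masuda generalized polar decomposition $|h_\theta\rangle=u\,\Delta_{\phi,\psi}^{1/p_\theta'}|\psi\rangle$ (valid for $p_\theta'\ge 2$, with $\|h_\theta\|_{p_\theta',\psi}=\|\phi\|^{p_\theta'}$), and then sets $|h(z)\rangle=u\,\Delta_{\phi,\psi}^{\,2\bar z/p_1'+(1-2\bar z)/p_0'}|\psi\rangle$. The edge bounds $\|h(it)\|_{p_0',\psi}\le 1$, $\|h(1/2+it)\|_{p_1',\psi}\le 1$ then follow, but only after a nontrivial estimate of $\|\Delta_{\phi,\psi}^{-2it(1/p_1'-1/p_0')}\Delta_{\phi,\psi}^{1/p_0'}\psi\|_{p_0',\psi}$, which the paper handles by rewriting the imaginary power as a Connes cocycle $(D\chi:D\phi)_{2t}\in\mathcal M$ and invoking $\|a\zeta\|_{p',\psi}\le\|a\|\,\|\zeta\|_{p',\psi}$. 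None of this is recovered by your $\Delta_\psi^{\,s(z)}$ ansatz, so the boundary normalisations you assert are unjustified and in general false.

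The reduction to the faithful case is also not as painless as you suggest. Compressing to the corner $\pi^{\mathcal M}(\psi)\,\mathcal M\,\pi^{\mathcal M}(\psi)$ does make $|\psi\rangle$ cyclic and separating there, but it does \emph{not} leave the Araki--Masuda norms of an arbitrary $|\zeta\rangle$ unchanged: by \eqref{projequal} the norm $\|\zeta\|_{p,\psi}^{\mathcal M}$ is insensitive only to the component of $|\zeta\rangle$ outside the support of the relative modular operator (one of the two support projections), whereas the two-sided compression additionally kills the part of $|\zeta\rangle$ with ``left'' support outside $\pi^{\mathcal M}(\psi)$, which generically lowers the norm at $z=\theta$ — the wrong direction for the inequality you need. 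This is precisely why the paper devotes the second half of its proof to an $\epsilon$-regularisation $\rho_\epsilon=(1-\epsilon)\rho+\epsilon\sigma$ with $\sigma$ faithful, proving $\lim_{\epsilon\to 0^+}\|\zeta\|_{p,\psi_\epsilon}=\|\zeta\|_{p,\psi}$ via operator monotonicity (L\"owner's theorem) in one direction and a separate variational characterisation of the norm in the other, before passing to the limit under the integral. Some argument of this kind is genuinely needed and is missing from your proposal.
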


\begin{proof}
See app.~\ref{sec:lem:hirsch}. In the commutative setting this is closely related to the Stein interpolation theorem \cite{Stein}.
In the non-commutative setting, a proof appears for type I factors and the usual 
non-commutative Schatten $L_p$ norms in \cite{Junge}.  We will make sure that it 
works in the setting of the Araki-Masuda $L_p$ norms defined in \eqref{AMdef} with reference to a possibly non-faithful state. 
\end{proof}

\subsection{Proof of Theorems \ref{thm1} and \ref{thm2}}
\label{subsec:5.6}

We close out this long section by combining the above auxiliary results into proofs of the main theorems.

\begin{proof}[Proof of Theorem ~\ref{thm1}]
Given the two normal states $\rho,\sigma$ we consider as above representers $|\psi\rangle, |\eta\rangle$ in the natural cone. From this we construct the filtered vector $\ket{\psi_P}$ as in \eqref{psifp}.
We then apply lem.~\ref{lem:hirsch} with $p_1=2, p_0=1$, ${\mathcal M}=\cA'$, 
$|G(z)\rangle= \Pi_\Lambda \big| \Gamma_{{\psi}_{P}}(z) \big>$
and use that the $L_2$ norm is actually the (projected) Hilbert space norm, see eq. \eqref{eq:h2}, so
\begin{equation}
\left\| \Pi_\Lambda \Gamma_{{\psi}_{P}}(it)  \right\|^{\mathcal{A}'}_{2, \psi_P} = \left\| \Pi_\Lambda \Gamma_{{\psi}_{P}}(it)  \right\|  \leq 1.
\end{equation}
Taking the limit $\theta \rightarrow 0^+$ with the aid of lem.~\ref{lem:p2} we have:
\begin{align}
  S_{\cA}(\psi_{P}| \eta) - S_{\mathcal B}(\psi_{P}| \eta)  &\geq
 - 2\lim_{\Lambda \to \infty}
 \int_{-\infty}^{\infty}  \beta_0(t) \ln \left\| \Pi_\Lambda  \Gamma_{{\psi}_{P}}(1/2+it)  \right\|_{1, {\psi}_{P}}^{\cA'} \ud t
 \nonumber \\
 & =  - 2\int_{-\infty}^\infty \beta_0(t) \ln \left\| \Gamma_{{\psi}_{P}}(1/2+it)  \right\|_{1, {\psi}_{P}}^{\cA'}  \ud t 
 \nonumber \\
  & \geq  - 2 \int_{-\infty}^\infty \beta_0(t) \ln F\left(\omega_{\psi_{P}},\omega_{\psi_{P}} \circ \iota \circ \alpha_\eta^t \right)\ud t, 
\end{align}
where the limit exits due to lem.~\ref{lem:p1} (1) and where we have used the monotonicity of $\ln$.
Taking the limit $P \rightarrow \infty$ we get in view of lem.~\ref{lem:p1} (2), thm.~\ref{lem:finites} (3) 
for a Gaussian filtering function  satisfying (A) and (B) of def. \ref{defsmooth} and lower semi-continuity of the 
$\mathcal B$ relative entropy that
\begin{equation}
 S_{\cA}(\psi| \eta) - S_{\mathcal B}(\psi| \eta) 
 \geq - 2 \int_{-\infty}^\infty \beta_0(t) \ln F\left(\omega_{\psi},\omega_{\psi} \circ \iota \circ \alpha_\eta^t \right)  \ud t. 
\end{equation}
We can then re-write the answer in terms of the original states $\rho,\sigma$ and we arrive at \eqref{streng}. (Recall that we are using $\alpha_\eta^t = \alpha_\sigma^t$ interchangeably.) 
\end{proof}

Thm.~\ref{thm1} forms the basis of the next proof:

\begin{proof}[Proof of Theorem \ref{thm2}]
Since all states $ \rho_i \in \sS$ have finite relative entropy with respect to $\sigma \in \sS$ we learn that $\pi(\rho_i) \leq \pi(\sigma)$.
This implies, via lem.~\ref{lem:proj}, (in particular \eqref{equalfunc}) that if $\iota_\pi(\mathcal{B}_\pi) \subset \mathcal{A}_\pi$ is $\epsilon$-approximately sufficient for $\sS_\pi$ then 
$\iota(\mathcal{B}) \subset \mathcal{A}$ is $\epsilon$-approximately sufficient for $\sS$. Here
\begin{equation}
\sS_\pi = \{ \rho \circ \Phi   \in (\mathcal{A}_\pi)_\star :  \rho \in \sS \},
\end{equation}
and we have used \eqref{defPhi}. The recovery channel $\alpha_{\sS}$ is derived from the recovery channel for $\iota_\pi(\mathcal{B}_\pi) \subset \mathcal{A}_\pi$.
This later recovery channel $\alpha_{\sS_\pi}$ then pertains to the ``faithful'' version of this theorem, and is derived from Theorem~\ref{thm1}, as we will show below.
In this way we can proceed by simply assuming that $\sigma$ is faithful for $\mathcal{A}$, now without loss of generality. 
In particular we may take \eqref{recover} to be determined by the faithful Petz map in \eqref{petzfaith}.

In the faithful case we first check that the map \eqref{recover} is indeed a recovery channel. 
This follows since $\alpha_\sigma^t$ are recovery channels for each $t \in \mathbb{R}$  (
generalizing the results in \cite{Petz1} to non-zero $t$) and so the weighted $t$ integral is also clearly unital and completely positive.  

We now check the continuity property of \eqref{recover}. The integral is rigorously defined as follows. For all $a\in\mathcal{A}$ the function
$
t \mapsto \alpha_\sigma^t(a)
$
is continuous in $t$ in the ultra-weak topology (thus Lebesgue measurable) and bounded on $\mathbb{R}$. 
So 
\begin{equation}
\mathcal{B}_\star \owns \rho \mapsto \int_{\mathbb{R}} p(t) \rho( \alpha_\sigma^t(a)) \ud t \in \mathbb{C}
\end{equation}
gives a continuous linear functional and thus
defines an element in $\mathcal{B}$ (the continuous dual of the predual) that we call $\alpha_{\sS}(a)$. Continuity in the linear functional norm follows from the convergence of
the following integral:
\begin{equation}
\int_{\mathbb{R}} p(t) \| \alpha_\sigma^t(a) \|  \, \ud t \le \| a \| .
\end{equation}
This also guarantees that the resulting operator $\alpha_\sS(a)=\int_{\mathbb{R}} p(t) \alpha_\sigma^t(a) \ud t$ is a bounded operator:
\begin{equation}
\| \alpha_{\sS}(a) \| = \sup_{\rho \in \mathcal{A}_\star} \frac{| \int_{\mathbb{R}} p(t) \rho( \alpha_\sigma^t(a)) \ud t |}{\| \rho \|}
\leq \int_{\mathbb{R}} p(t) \| \alpha_\sigma^t(a) \| \ud t \le \|a\|. 
\end{equation}
We need to check the ultraweak continuity of $a \mapsto \alpha_{\sS}(a)$. 
For all $\rho \in \mathcal{B}_\star$ we define the integral
\begin{equation}
\label{intpre}
\int_{\mathbb{R}} p(t) \rho \circ \alpha_\sigma^t \, \ud t
\end{equation}
in much the same way as above, as a Lebesgue integral on continuous functions valued in $\mathcal{A}_\star$. That is, 
the evaluation of this expression on $a \in \cA$
defines an ultraweakly continuous functional on $\mathcal{A}$. This follows since the sequence
\begin{equation}
\label{pint}
\int_{\mathbb{R}} p(t)  \rho \circ \alpha_\sigma^t(a_n)  \ud t
\end{equation}
converges to the integral of the pointwise limit by the dominated convergence theorem, as
$
p(t)  |\rho \circ \alpha_\sigma^t(a) | \leq p(t) \| \rho \| \|a  \|
$
is integrable. Putting all the pieces together we find that
\begin{equation}
a \mapsto \alpha_\sS(a)= \int_{-\infty}^{\infty} p(t) \alpha_\eta^t(a) \, \ud t
\end{equation}
is ultraweakly continuous, since for all $\rho \in \mathcal{B}_\star$,
\begin{equation}
\rho\left( \int_{\mathbb{R}} p(t) \alpha_\sigma^t(a_n-a) \right) \ud t
\equiv  \int_{\mathbb{R}} p(t) \rho( \alpha_\sigma^t(a_n-a)) \ud t  =  \int_{\mathbb{R}} 
p(t)  \rho \circ \alpha_\sigma^t(a_n-a) \ud t 
\end{equation}
converges to zero whenever $a_n \rightarrow a$ ultraweakly. 

The proof is then completed by rewriting thm.~\ref{thm1} using the concavity of fidelity. 
For this, we require a version of Jensen's inequality for the convex functional 
$\sigma \mapsto F(\rho, \sigma) $ on normal states on $\cA$ with respect to the 
measure $p(t)\ud t$. This would give us
\begin{equation}
\label{contcav}
\int_\mathbb{R}  F(\rho, \rho\circ \iota \circ \alpha_\sigma^t) p(t) \ud t 
\leq F\left(\rho, \int_\mathbb{R} \rho\circ \iota \circ \alpha_\sigma^t  \, p(t) \ud t \right)
\end{equation}
where $\rho $ is a state in $\mathcal{A}_\star$. Then thm.~\ref{thm1} becomes:
\begin{equation}
- 2 \ln F( \rho, \rho \circ \iota \circ \alpha_{\sS}) \leq S_{\cA}(\rho|\sigma) - S_{\cB}(\rho|\sigma), 
\end{equation}
which implies that $\mathcal{B}$ is $\epsilon$-approximately sufficient as claimed by the theorem.   

We are not aware of a proof for Jensen's inequality for convex functionals
of a Banach space valued random variable that would apply straight away to the 
case considered here. In particular, it is not evident that the integrals in question 
can be approximated by Riemann sums in the general case, as was done in \cite{Junge}. 
So we now demonstrate \eqref{contcav} by 
a more explicit argument using the detailed structure of the fidelity.

Consider the Hilbert space $\sY=L_2(\mathbb{R};\mathscr{H};p(t) \ud t) \cong \sH
\bar \otimes  L_2(\mathbb{R};p(t) \ud t)$ of strongly measurable square integrable functions valued in $\mathscr{H}$. Vectors $|\Upsilon\rangle$ 
in this space are (equivalence classes of) functions $t \mapsto |\Upsilon_t\rangle$. $\sY$ is evidently 
a module for $\cA$. We denote this v. Neumann algebra 
by $\cA \otimes 1$ since it acts trivially in the second $L_2$ tensor factor of $\sY$. 
Now define the fidelity as:
\begin{equation}
\label{eq:var}
F_{\mathcal{A} \otimes 1 }(\Psi,\Upsilon)
= \sup_{Y' \in (\mathcal{A} \otimes 1)'\,, \,\, \| Y' \| \leq 1 }  \left| \left< \Psi \right| Y' \left| \Upsilon \right> \right| .
\end{equation}
We next formulate a lemma that will allow us to complete the proof. 
\begin{lemma}
Let $\left| \Upsilon \right>, \left| \Psi \right> \in \sY$ induce linear functionals  
on $\mathcal{A} \otimes 1$ such that
\begin{equation}
\left< \Upsilon \right| a_+ \otimes 1 \left| \Upsilon \right> \leq \sigma(a_+) ,
\qquad \left< \Psi \right| a_+ \otimes 1 \left| \Psi \right> \leq \rho (a_+). 
\end{equation}
where $a_+$ is an arbitrary non-negative element in $\mathcal{A}$ and $\sigma,\rho$
states on $\cA$. Then if  
$\left| \Upsilon_t \right>, \left| \Psi_t \right>$ are strongly continuous then  
$F(\Upsilon_t, \Psi_t)$ is continuous, and we have
\ben
\label{eq:Flem}
F(\sigma, \rho) \ge \int_{\mathbb R} F(\Upsilon_t, \Psi_t) p(t) \ud t.
\een
\end{lemma}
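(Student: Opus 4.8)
The plan is to route the desired bound \eqref{eq:Flem} through the amplified fidelity $F_{\mathcal{A}\otimes 1}(\Psi,\Upsilon)$ of \eqref{eq:var}, proving the sandwich
\ben
\int_{\mathbb R} F(\Upsilon_t,\Psi_t)\, p(t)\,\ud t \;\le\; F_{\mathcal{A}\otimes 1}(\Psi,\Upsilon)\;\le\; F(\sigma,\rho).
\een
The continuity of $t\mapsto F(\Upsilon_t,\Psi_t)$ comes first and is routine: writing $F(\Upsilon_t,\Psi_t)=\sup_{\|x'\|\le 1}|\langle\Psi_t|x'|\Upsilon_t\rangle|$ via lem.~\ref{lem:fid} (2) and using $\|x'\|\le1$, one obtains the Lipschitz estimate $|F(\Upsilon_t,\Psi_t)-F(\Upsilon_s,\Psi_s)|\le \|\Psi_t-\Psi_s\|\,\|\Upsilon_t\| + \|\Psi_s\|\,\|\Upsilon_t-\Upsilon_s\|$, so strong continuity of the fibres gives continuity (hence measurability) of the integrand, and $\int_{\mathbb R}F(\Upsilon_t,\Psi_t)p\,\ud t\le\int_{\mathbb R}\|\Upsilon_t\|\,\|\Psi_t\|\,p\,\ud t\le\|\Upsilon\|\,\|\Psi\|<\infty$ by Cauchy--Schwarz.

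For the lower inequality I would exhibit an explicit competitor in the supremum \eqref{eq:var}. Fix $\epsilon>0$ and $T>0$, partition $[-T,T]$ into small measurable cells $I_k$ with sample points $t_k$, and use lem.~\ref{lem:fid} (2) on $(\mathcal{A},\mathscr{H})$ to pick contractions $y'_k\in\mathcal{A}'$, $\|y'_k\|\le1$, with $\langle\Psi_{t_k}|y'_k|\Upsilon_{t_k}\rangle$ real and at least $F(\Upsilon_{t_k},\Psi_{t_k})-\epsilon$ (a phase may be absorbed into $y'_k$). The step function $t\mapsto\sum_k y'_k\,\mathbbm{1}_{I_k}(t)$ defines the decomposable operator $Y'=\sum_k y'_k\otimes P_{I_k}$, with $P_{I_k}$ the projection onto $L_2(I_k)$ in the second tensor factor; as the $I_k$ are disjoint, $Y'\in\mathcal{A}'\,\bar\otimes\,L^\infty(\mathbb R)\subset(\mathcal{A}\otimes 1)'$ and $\|Y'\|=\sup_k\|y'_k\|\le1$. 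Then
\ben
\langle\Psi|Y'|\Upsilon\rangle=\int_{\mathbb R}\langle\Psi_t|y'_t|\Upsilon_t\rangle\,p(t)\,\ud t .
\een
By strong continuity of $\Upsilon_t,\Psi_t$ and continuity of $F$, on each sufficiently fine cell the integrand is within $O(\epsilon)$ of $F(\Upsilon_t,\Psi_t)$ (uniform continuity on the compact $[-T,T]$), while the tail $\int_{|t|>T}\|\Upsilon_t\|\|\Psi_t\|p\,\ud t$ is small for large $T$. Taking real parts and then $\epsilon\to0$, $T\to\infty$ yields $F_{\mathcal{A}\otimes1}(\Psi,\Upsilon)\ge\int_{\mathbb R}F(\Upsilon_t,\Psi_t)p\,\ud t$. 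Here the strong-continuity hypothesis is used precisely to replace a general measurable selection of optimizers by this elementary step-function competitor.

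For the upper inequality I would use the domination hypotheses. Any contraction $Y'\in(\mathcal{A}\otimes1)'$ sends $\Upsilon$ to a vector whose functional on $\mathcal{A}\otimes1$ is still dominated by $\sigma$: for $a_+\ge0$, $\langle\Upsilon|Y'^*(a_+\otimes1)Y'|\Upsilon\rangle=\langle\Upsilon|(a_+\otimes1)Y'^*Y'|\Upsilon\rangle\le\langle\Upsilon|a_+\otimes1|\Upsilon\rangle\le\sigma(a_+)$, since $Y'^*Y'\le1$ lies in the commutant; likewise $\omega_\Psi|_{\mathcal{A}\otimes1}\le\rho$. Viewing $\Psi$ and $Y'\Upsilon$ as purifying vectors in the common representation $\sY$ of functionals on $\mathcal{A}\otimes1\cong\mathcal{A}$ dominated by $\rho$ and $\sigma$, Uhlmann's theorem \cite{UhlmannFidelity} (fidelity as the maximal transition probability over purifications) gives $|\langle\Psi|Y'\Upsilon\rangle|\le F(\omega_\Psi|_{\mathcal{A}},\omega_{Y'\Upsilon}|_{\mathcal{A}})$, and monotonicity of fidelity under domination of its arguments then bounds this by $F(\sigma,\rho)$. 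The monotonicity step I would justify by the commutant Radon--Nikodym theorem: represent the dominated functionals on $\mathscr{H}$ as $\omega_{(h')^{1/2}\xi_\sigma}$ and $\omega_{(h'')^{1/2}\xi_\rho}$ with $0\le h',h''\le1$ in $\mathcal{A}'$, whereupon lem.~\ref{lem:fid} (2) and the fact that $(h'')^{1/2}x'(h')^{1/2}$ is again a contraction in $\mathcal{A}'$ give $F(\omega_\Psi|_{\mathcal{A}},\omega_{Y'\Upsilon}|_{\mathcal{A}})\le F(\sigma,\rho)$. Taking the supremum over $Y'$ completes the chain.

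The main obstacle is this last (upper) inequality: one must compare the amplified fidelity on $\sY$---where $\mathcal{A}\otimes1$ is \emph{not} in standard form---with the intrinsic Uhlmann fidelity $F(\sigma,\rho)$, and make the monotonicity-under-domination step fully rigorous for the possibly non-faithful, non-normalized functionals induced by $\Psi$, $\Upsilon$ and by their images under commutant contractions. The continuity claim and the lower bound are, by contrast, routine given lem.~\ref{lem:fid} and the strong-continuity hypothesis.
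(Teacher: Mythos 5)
Your argument is correct, and for the decisive step it takes a genuinely different and more elementary route than the paper. The paper obtains the lower bound $F_{\mathcal{A}\otimes 1}(\Psi,\Upsilon)\ge\int F(\Upsilon_t,\Psi_t)p(t)\,\ud t$ by producing a \emph{norm-continuous} family $t\mapsto y_t'$ of near-optimizers via the Michael selection theorem \cite{Michael} (applied to the closed convex sets $\overline{\mathcal{X}_t'}$ of near-maximizing contractions, whose lower hemicontinuity must be checked), and then invokes the Pettis measurability theorem \cite{Pettis} to conclude that $t\mapsto y_t'\Psi_t$ defines an element of $\sY$ and hence a legitimate competitor $Y'$ in \eqref{eq:var}. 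Your step-function competitor $Y'=\sum_k y_k'\otimes P_{I_k}$ bypasses both of these: measurability is trivial for a finite decomposable sum, the bound $\|Y'\|\le 1$ is immediate from orthogonality of the $P_{I_k}$, and the cell-wise error is controlled by uniform continuity of $\Psi_t,\Upsilon_t$ on compacts together with the Lipschitz estimate \eqref{contboth}; the tail is handled by Cauchy--Schwarz. Note that this does \emph{not} run into the obstruction the authors flag about Riemann sums: they are worried about approximating the $\cA_\star$-valued integral $\int\rho\circ\iota\circ\alpha^t_\sigma\,p(t)\ud t$ in norm (which is what \cite{Junge} do and what a Banach-valued Jensen inequality would require), whereas you only ever approximate a scalar integral of a continuous integrand, so your simplification is sound. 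For the upper bound $F_{\mathcal{A}\otimes 1}(\Psi,\Upsilon)\le F(\sigma,\rho)$ both proofs use the same ingredients (domination of the induced functionals, monotonicity/concavity as in \eqref{convexFid}, and representation independence of the fidelity); the paper is terse here, and your explicit justification via $Y'^*(a_+\otimes 1)Y'\le a_+\otimes 1$, Uhlmann's theorem \cite{UhlmannFidelity}, and the commutant Radon--Nikodym representation $\omega=\omega_{(h')^{1/2}\xi_\sigma}$ with $(h'')^{1/2}x'(h')^{1/2}$ again a contraction in $\cA'$ is a legitimate way to make that step rigorous for the non-normalized, possibly non-faithful functionals involved. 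The continuity claim is handled identically in both proofs.
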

\begin{proof}
 If  $\left| \Upsilon_t \right>, \left| \Psi_t \right>$ are strongly continuous then  
$F(\Upsilon_t, \Psi_t)$ is continuous in $t$ by \eqref{contboth}, and since the fidelity is 
the $L^1$ norm, see app. C. 

The idea is now to construct 
a suitable family of elements $y_t' \in \cA'$. This family should be chosen at the same time so as to satisfy:
(i) $\| y_t' \| \leq 1$, (ii) $t \mapsto y_t'$ is strongly continuous, (iii) in the sup definition 
of the fidelity, \eqref{uhl2} we are suitably close to saturating the supremum in the sense that $F(\Upsilon_t,\Psi_t)$
is approximately $|\langle \Upsilon_t | y'_t \Psi_t \rangle|$.
Then (ii) implies that $y'_t|\Psi_t\rangle$ is weakly measurable and thus strongly measurable by the Pettis measurability theorem, see e.g. \cite{Pettis}, thm. 3.1.1.\footnote{This theorem applies even without assuming $\sH$ to be separable since the image $\{y'_t|\Psi_t\rangle : t \in {\mathbb R}\}$ is a separable open subset of $\mathscr{H}$, in the norm topology, by strong continuity.} By (i) 
we then see that the map $y'_t|\Psi_t\rangle$ is in the Hilbert space $\sY$ because  boundedness $y_t'$ clearly 
implies that it is square integrable. 
(ii) holds for instance if the function $y'_t$ is continuous in the norm topology, and 
we will attempt to choose it in this way. Then $y'_t$, as a function, will define an element $Y$ in $(\mathcal{A} \otimes 1)'$
that can be used in the variational principle \eqref{eq:var}. 
We must therefore have, using concavity of the fidelity in the same manner as in  \eqref{convexFid},
\begin{equation}
\label{Fff}
F(\sigma,\rho ) \geq F_{\mathcal{A} \otimes 1 }(\Psi,\Upsilon) \geq \left| \int_{\mathbb R}  
\langle \Upsilon_t | y_t' | \Psi_t \rangle p(t) \ud t \right|, 
\end{equation}
using the variational principle \eqref{eq:var} to obtain the last inequality, and using that the fidelity only 
depends on functionals in the first. The evident strategy is now to make our choice (iii) of of the function $y_t'$
in such a way that the right side is close to the right side of \eqref{eq:Flem}, while being continuous in the operator norm topology and while satisfying $\|y'_t\|<1$, so that (i) and (ii) hold as discussed.

To this end, consider the open unit ball in $\cA'$ in the norm topology,
$
\mathcal{A}'_1 \equiv \{ x' \in  \mathcal{A}' : \| x' \| < 1 \}.
$
For all $t$ we define next a subset $\mathcal{X}_t' \subset \mathcal{A}'_1$ by
\begin{equation}
\label{sand}
\mathcal{X}'_t \equiv \mathcal{A}'_1 \cap \{ x' \in \mathcal{A}' : \left|\left< \Psi_t \right| x' \left| \Upsilon_t \right>- 
F(\Psi_t, \Upsilon_t) \right|  < \epsilon \}. 
\end{equation}
This set is open in the norm topology because the second set on the right hand side of \eqref{sand} is open in the weak operator topology and so it is open in the norm topology, too.  
It is non empty since we know that in the sup definition of fidelity it is sufficient to take $\| x' \| < 1$ and still achieve $F(\Psi_t, \Upsilon_t)$. 

We will be interested in the norm closures $\overline{\mathcal{X}_t'}$.
What we then need to do is select a function from this set $y_t' \in \overline{\mathcal{X}_t'}$ that varies continuously in 
the operator norm. 
This problem can be solved by the Michael selection theorem \cite{Michael}. Indeed, we can consider the mapping
$
t \in \mathbb{R} \rightarrow \overline{\mathcal{X}_t'} \in 2^{\mathcal{A}'}
$
as a map from the paracompact space $\mathbb{R}$ to subsets of $\mathcal{A}'$ thought of as a the Banach space (with the operator norm).
If it can be shown that the sets $\overline{\mathcal{X}_t'}$ are nonempty closed and convex and that this map is ``lower hemicontinuous'', then by the Micheal selection theorem, there is a continuous selection $y_t' \in \mathcal{X}_t'$ as we require.  

We have seen that the sets are closed and nonempty. Convexity follows from 
\begin{align}
\left| \left< \Psi \right| p_1 x'_1 + p_2 x'_2  \left| \Upsilon \right> - F (p_1+p_2) \right| &\leq p_1 \left| \left< \Psi \right|  x'_1 \left| \Upsilon \right> - F \right|
+ p_2 \left| \left< \Psi \right|  x'_2  \left| \Upsilon \right> - F \right| \nonumber \\
\| p_1 x'_1 + p_2 x'_2 \| &\leq p_1 \| x_1' \| + p_2 \| x_2' \|
\end{align}
where the first equation is schematic but is hopefully clear, and where $p_1,p_2 \ge 0, p_1+p_2=1$. 
This implies that $\mathcal{X}_t'$ is convex and hence its closure is also convex. 

Lower hemicontinuity at some point $t$ is the property that for any open set $\cV \subset \mathcal{A}'$ that intersects $\overline{\mathcal{X}'_t}$ there exists a $\delta$ such that
$\overline{\mathcal{X}'_{t'}} \cap \cV \neq \emptyset$ for all $|t-t'| < \delta$. We see this for the case at hand as follows. Take $\cV$ satisfying the assumption, and note that $\cV \cap \mathcal{X}'_{t}$ is also non empty. Pick a $y' \in \cV \cap \mathcal{X}'_{t}$. There exists an $\epsilon' < \epsilon$ such that:
\begin{equation}
\left|\left< \Psi_t \right| y' \left| \Upsilon_t \right>- F(\Psi_t,\Upsilon_t) \right| < \epsilon' < \epsilon.
\end{equation}
Then, by the strong continuity of $\left| \Upsilon_t \right>$ resp. $|\Psi_t\rangle$ 
and continuity of $F(\Psi_t,\Upsilon_t)$,
we see that this condition is stable: Given $\epsilon - \epsilon' >0$ there does indeed exist a $\delta$ such that 
\begin{equation}
\left|\left< \Psi_{t'} \right| y' \left| \Upsilon_{t'} \right>- F(\Psi_{t'},\Upsilon_{t'}) \right| <  \epsilon \,, \qquad \forall |t- t'| < \delta
\end{equation}
which implies that $y' \in  \cV \cap \mathcal{X}'_{t'} \subset  \cV \cap \overline{\mathcal{X}'_{t'}}$ as required. 

From Michael's theorem we therefore get the desired  norm continuous $y_t'$ satisfying
\begin{equation}
\left|\langle \Psi_t | y_t' |  \Upsilon_t \rangle- F(\Psi_t,\Upsilon_t) \right|  \leq \epsilon 
\end{equation}
for all $t$. Using that the fidelity is real and \eqref{Fff} and that $\epsilon$ can be made arbitrarily small then readily implies the lemma.  
\end{proof}

We now use this lemma with $|\Upsilon_t\rangle := |\Gamma_\psi(i/2+t)\rangle$, which is weakly continuous 
by thm. \ref{thm:gamma} (1). Actually, it is even strongly continuous since it is given by the product 
of bounded operators and $\Delta_{\eta; \cA}^{it}, \Delta_{\eta; \cB}^{it}$, which are strongly continuous as 
they are 1-parameter groups of unitaries generated by a self-adjoint operator by Stone's theorem, 
see e.g. \cite{specth}, sec. 5.3. 
We also take $|\Psi_t\rangle = |\psi\rangle$, which is obviously strongly continuous as it is just constant. 
Then $|\Upsilon\rangle$ induces a state dominated by 
$\rho \circ \iota \circ \alpha_{\sS}$, by thm. \ref{thm:gamma} (2), and $|\Psi\rangle$ induces $\rho$
by definition, and $|\Upsilon_t\rangle$ induces 
$\rho \circ \iota \circ \alpha_{\sigma}^t$. We thereby arrive at the concavity result \eqref{contcav}, and this concludes the proof of thm. \ref{thm2}. 
\end{proof}

\section{Examples}

Here we illustrate our method and results in two representative examples.

\subsection{Example: finite type-I algebras}\label{finitealgebra}

To compare our method to that of \cite{Junge} in the subalgebra case, we work out our interpolating vector 
\eqref{ourvec} in the matrix algebra case. Thus let $\mathcal{A} = M_{n}({\mathbb C})$ and
$\mathcal{B} = M_{m}({\mathbb C})$, ${\mathcal C}=\mB' \cap \cA$, embedded as the subalgebra $b \mapsto \iota(b) = b \otimes 1_{\mathcal C}$ where $n = m \times k$ and these integers  label the size of the matrices. We will work in the standard Hilbert space ($\sH \simeq M_{n}({\mathbb C}) \simeq
{\mathbb C}^{n *} \otimes  \mathbb{C}^n$) and identify state functionals such as $\sigma$ with density matrices. 
So for example $\sigma_{\cA} \in M_{n}({\mathbb C})$, and we assume for simplicity that this has full rank (faithful state).

$\sH \simeq M_{n}({\mathbb C})$ is both a left and right module for $\cA$, 
\begin{equation}
l(m_1) \left| m_2 \right> = \left| m_1 m_2 \right>\, \qquad r(m_1) \left| m_2 \right> = \left| m_2 m_1 \right>, 
\end{equation}
and the inner product on $\sH$ is the Hilbert-Schmidt inner product. The natural cone of $\cA$ is 
defined to be the subset of positive semi-definite matrices in $\sH$.
The modular conjugation and relative modular operators (of $\cA$)
associated with this natural cone are:
\begin{equation}
 J \big| m \big> = \big| m^* \big> \, \qquad \Delta_{\eta,\psi} = l(\sigma_\cA) r(\rho_\cA^{-1}),
\end{equation}
where we invert the density matrix $\rho_\cA$ on its support.  The natural cone vectors correspond to the unique positive square root of the corresponding density matrix,
now thought of as pure states in the standard Hilbert space. So  $|\psi_\cA \rangle = \big| \rho_\cA^{1/2} \big>$ and $|\psi_{\mathcal{B}} \rangle = \big| \rho_\mB^{1/2} \big>$. The embedding is:
\begin{equation}
V_\eta = r(\sigma_\cA^{1/2})T^* r(\sigma_\mB^{-1/2}) \,, \qquad T^* (m_\mB) = m_\mB \otimes 1_{\mathcal C}
\end{equation}
Using these replacements it is easy to compute our interpolating vector \eqref{ourvec} $|\Gamma_\psi(z)\rangle$ by starting with the expression in \eqref{Gt}
\begin{align}
\left| \Gamma_\psi(z) \right> = \left| \sigma_\cA^z  (\sigma_\mB^{-z} \rho_\mB^{z} \otimes 1_{\mathcal C}) \rho_\cA^{1/2-z} \right> 
\end{align}
and
\begin{equation}
\Delta_{\psi}^{1/2-z} \vphantom{\sum} \left| \Gamma_\psi(z) \right> =  \left| \rho_\cA^{1/2-z} \sigma_\cA^z  (\sigma_\mB^{-z} \rho_\mB^{z} \otimes 1_{\mathcal C})  \right> .
\end{equation}
The $L_p(\cA',\psi)$ norms can be computed using the well known correspondence between these norms and the sandwiched relative entropy discussed in \cite{Berta2}.
This gives:
\begin{equation}
\left\| \left| \Gamma_\psi(\theta) \right>  \right\|_{p, \psi}^{\mathcal{\cA}'} = \left({\rm tr}\left| \rho_\cA^{1/p-1/2} \Gamma_\psi(\theta) \right|^p \right)^{1/p}
=  \left({\rm tr}\left| \rho_\cA^{\theta}\sigma_\cA^\theta  (\sigma_\mB^{-\theta} \rho_\mB^{\theta} \otimes 1_{\mathcal C}) \rho_\cA^{1/2-\theta} \right|^{p_\theta} \right)^{1/p_\theta},
\end{equation}
where in the last equation we set $p=p_\theta$ and used $1/p_\theta -1/2 = \theta$, and where $|\psi\rangle = \left| \right. \rho_\cA^{1/2} \left. \right>$. Similarly, we have
\begin{equation}\label{finitergred}
\left\|\Delta_{\psi}^{1/2-\theta} \left| \Gamma_\psi(\theta) \right>  \right\|_{p_\theta,\psi}^{\mathcal{A}'} 
=  \left({\rm tr}\left| \rho_\cA^{1/2}\sigma_\cA^\theta  (\sigma_\mB^{-\theta} \rho_\mB^{\theta} \otimes 1_{\mathcal C}) \right|^{p_\theta} \right)^{1/p_\theta}
\end{equation}
and we recognize this later expression as  \cite{Junge}, eq. (25) with $\alpha$ there given by $p_\theta/2$. 

\subsection{Example: half-sided modular inclusions}

Half-sided modular inclusions were introduced in \cite{Wiesbrock1,Wiesbrock2} and consist of the following data:
An inclusion $\cB \subset \cA$ of v. Neumann algebras acting on a common Hilbert space $\sH$, containing a 
common cyclic and separating vector $|\eta\rangle$. Furthermore, for $t \ge 0$, 
it is required that $\Delta_{\eta, \cA}^{it} \cB \Delta^{-it}_{\eta, \cA} \subset \cB$, hence the terminology ``half-sided.''
This situation is common for light ray algebras in chiral CFTs, where $|\eta\rangle$ is the vacuum.

 Wiesbrock's theorem \cite{Wiesbrock1,Wiesbrock2} is the result that for any half-sided modular inclusion, there exists a 1-parameter unitary group $U(s), s \in \mathbb R$ with self-adjoint, non-negative 
 generator which can be normalized so that 
 \ben
\Delta_{\eta, \cA}^{-it} \Delta_{\eta, \cB}^{it} = U(e^{2\pi t}-1)
\een
for $t \in \mathbb R$.
Furthermore, the unitaries $\Delta_{\eta,\cA}^{it}, U(s)$ fulfill the Borchers commutation relations \cite{Borchers} and 
in particular $\cB = U(1) \cA U(1)^*$, $J_{\cA} U(s) J_{\cA} = U(-s)$.  For any $a>0$, the inclusion 
$\cA_a = U(a) \cA U(a)^* \subset \cA$ is then also half sided modular. 

For a half-sided modular inclusion, the embedding is trivial, $V_\eta=1$. Using this information, one can easily show 
that in the case of the half-sided modular inclusions $\cA_a = U(a) \cA U(a)^* \subset \cA$, the rotated Petz recovery channel, denoted here as $\alpha_a^t: \cB \to \cA$ to emphasize the dependence on $a$, is:
\begin{equation}
\alpha_\eta^t(x) \equiv U(a(1+ e^{-2\pi t}))^* x U(a(1+ e^{-2\pi t})).
\end{equation}
Thm. \ref{thm1} therefore gives the following corollary, conjectured in \cite{Faulkner}, after a change of 
integration variable.
\begin{corollary}
Let $\cB \subset \cA$ be a half-sided modular inclusion with respect to the
reference vector $|\eta\rangle$, so $\cB=\cA_a=U(a) \cA U(a)^*$. Then we have
\ben
 \frac{1}{a}[S_{\cA}(\omega_\psi| \omega_\eta ) -  S_{\cA_a}(\omega_\psi| \omega_\eta )]\geq
  \int_{a}^{\infty} \ln F( \omega_\psi, U(y) \omega_\psi U(y)^* )^2  \, \frac{\ud y}{y^2} .
\een
\end{corollary}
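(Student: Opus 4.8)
The plan is to deduce the corollary by specializing Theorem~\ref{thm1} to the half-sided modular inclusion $\cB = \cA_a = U(a)\cA U(a)^{*}$, taking the reference state to be $\sigma = \omega_\eta$ and the comparison state $\rho = \omega_\psi$. Since $|\eta\rangle$ is cyclic and separating for $\cA$, the functional $\omega_\eta$ is a faithful normal state, so the hypotheses of Theorem~\ref{thm1} are met and no passage through Lemma~\ref{lem:proj} is needed. Monotonicity already guarantees that the left-hand side is nonnegative, so the genuine content is the explicit lower bound.

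Next I would insert the explicit half-sided data recorded just above the corollary: the embedding is trivial, $V_\eta = 1$, and the rotated Petz map is
\[
\alpha_\eta^t(x) = U\!\big(a(1+e^{-2\pi t})\big)^{*}\, x\, U\!\big(a(1+e^{-2\pi t})\big), \qquad x \in \cA .
\]
Writing $s = s(t) = a(1+e^{-2\pi t})$ and using that $\iota$ is the inclusion $\cA_a \hookrightarrow \cA$, the recovered functional is
\[
\omega_\psi \circ \iota \circ \alpha_\eta^t (x) = \langle \psi | U(s)^{*} x\, U(s) | \psi\rangle = \langle U(s)\psi | x | U(s)\psi\rangle ,
\]
that is, the recovered state is $U(s)\,\omega_\psi\, U(s)^{*}$. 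Hence the fidelity appearing in Theorem~\ref{thm1} is exactly $F\big(\omega_\psi, U(s)\,\omega_\psi\, U(s)^{*}\big)$, which is the object in the corollary once the argument $s$ is renamed $y$.

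The remaining step is the change of integration variable $y = a(1+e^{-2\pi t})$, under which $t:-\infty\to+\infty$ corresponds to $y:+\infty\to a$, so the image interval is $(a,\infty)$. From $e^{-2\pi t} = (y-a)/a$ one gets $\ud y = -2\pi(y-a)\,\ud t$, while the elementary identity
\[
\cosh(2\pi t) + 1 = \frac{y^{2}}{2a(y-a)}
\]
yields $p(t) = 2\pi a(y-a)/y^{2}$. Multiplying these, the normalized probability measure collapses to $p(t)\,\ud t = a\,\ud y/y^{2}$ on $(a,\infty)$, the orientation reversal cancelling the overall sign. Feeding this into
\[
S_{\cA}(\omega_\psi|\omega_\eta) - S_{\cA_a}(\omega_\psi|\omega_\eta) \geq -2\int_{-\infty}^{\infty} \ln F\big(\omega_\psi, U(s(t))\,\omega_\psi\, U(s(t))^{*}\big)\, p(t)\,\ud t
\]
and dividing through by $a$ produces precisely the displayed inequality of the corollary.

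The argument is essentially bookkeeping once Theorem~\ref{thm1} and the half-sided data are in hand, and there is no deep obstacle. The two points that require genuine care are (i) correctly reading off the recovered state $U(s)\,\omega_\psi\, U(s)^{*}$ from the explicit Petz map, which rests on the triviality $V_\eta = 1$ together with the group law for $U$, and (ii) the change-of-variables computation, in particular the $\cosh$ identity above and the orientation reversal, together with tracking the factor $-2\ln F$ through the substitution; these are what convert the probability measure $p(t)\,\ud t$ on $\mathbb{R}$ into the scale-invariant measure $a\,\ud y/y^{2}$ on the half-line $(a,\infty)$ over which the translation family $U$ is integrated.
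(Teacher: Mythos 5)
Your proof is correct and is exactly the paper's (essentially unwritten) argument: the corollary is obtained by specializing Theorem~\ref{thm1} to the half-sided data $V_\eta=1$ and $\alpha_\eta^t(x)=U(a(1+e^{-2\pi t}))^*\,x\,U(a(1+e^{-2\pi t}))$, identifying the recovered state as $U(s)\omega_\psi U(s)^*$, and performing the change of variables $y=a(1+e^{-2\pi t})$ under which $p(t)\,\ud t$ becomes $a\,\ud y/y^2$ on $(a,\infty)$ --- all of which you carry out correctly, including the identity $\cosh(2\pi t)+1=y^2/(2a(y-a))$. One caveat: your substitution actually yields $\tfrac1a[S_{\cA}-S_{\cA_a}]\ge -\int_a^\infty \ln F(\omega_\psi,U(y)\omega_\psi U(y)^*)^2\,\ud y/y^2$, so the printed corollary appears to be missing a minus sign (as displayed its right-hand side is nonpositive, hence weaker than plain monotonicity), and your claim of landing ``precisely'' on the displayed inequality should be read as landing on the evidently intended one.
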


 For a half-sided modular inclusion, $V_\psi = u_{\psi;\eta}' \in \mathcal{B}'$ [from \eqref{Vchi}] is the partial isometry that takes $|\psi_\cA \rangle$ in the natural cone $\sP^\natural_{\cA}$ (defined w.r.t. $|\eta\rangle$) to the state representer in 
 $\sP^\natural_{\cB}$ (also defined w.r.t. $|\eta\rangle$).
The interpolation vector \eqref{ourvec} thereby becomes in the case of half sided modular inclusions
\begin{equation}
\label{hmvec}
\left| \Gamma_\psi(z) \right> 
=\Delta_{\eta_\cA,\psi_\cA}^z \Delta_{\eta_\mB,\psi_\mB}^{-z} \left| \psi \right>. 
\end{equation}
The vector \eqref{hmvec} is similar to a vector studied in \cite{Faulkner} in order to prove the quantum null energy condition (QNEC). Based on this and some preliminary calculations we speculate here that the QNEC can be understood in terms of the strengthened monotonicity result in Theorem~\ref{thm1}.
\begin{conjecture}
The limit $a \rightarrow 0$ of thm. \ref{thm1} in the case of a half-sided modular 
inclusion $\cA_a = U(a) \cA U(a)^* \subset \cA$ leads to a saturation of the bound:
\begin{equation}
\lim_{a \rightarrow 0}  \ \frac{2}{a} \int_{-\infty}^{\infty}  \ln F( \rho, \rho \circ \alpha_a^t ) p(t) \ud t =  \frac{\ud}{\ud a} S_{\mathcal{A}_a}(\rho|\sigma) \bigg|_{a=0}.
\end{equation}
\end{conjecture}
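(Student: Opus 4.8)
The plan is to establish the two inequalities separately; ``$\ge$'' comes almost for free from Theorem~\ref{thm1}, whereas ``$\le$'' — the saturation — is the real content. Throughout I identify $\rho = \omega_\psi$, $\sigma = \omega_\eta$ with their natural-cone representatives. Granting that the one-sided derivative $\frac{\ud}{\ud a}S_{\cA_a}(\rho|\sigma)\big|_{a=0}$ exists (established in the next step), I would divide the finite-$a$ bound of Theorem~\ref{thm1} by $a>0$ and let $a\to0^+$ to obtain
\begin{equation}
\liminf_{a\to0^+}\frac{2}{a}\int_{-\infty}^{\infty}\ln F(\rho,\rho\circ\alpha_a^t)\,p(t)\,\ud t\ \ge\ \frac{\ud}{\ud a}S_{\cA_a}(\rho|\sigma)\Big|_{a=0},
\end{equation}
so that only the reverse inequality (saturation) remains.

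To handle the right-hand side I would use covariance. Since $|\eta\rangle$ is invariant under $U(s)$ (a standard feature of half-sided modular inclusions, as $P\ge 0$ annihilates the reference vector) and $\cA_a = U(a)\cA U(a)^*$ is unitarily $*$-isomorphic to $\cA$, invariance of relative entropy under this isomorphism gives $S_{\cA_a}(\omega_\psi|\omega_\eta) = S_{\cA}(\omega_{U(-a)\psi}|\omega_\eta)$. The right-hand side thus becomes the derivative of the relative entropy of the \emph{fixed} algebra $\cA$ along the orbit $a\mapsto U(-a)|\psi\rangle$. I would evaluate this by a first-law argument in the spirit of Theorem~\ref{firstlaw}, differentiating $\langle U(-a)\psi|\ln\Delta_{\eta,U(-a)\psi}|U(-a)\psi\rangle$ at $a=0$ and using the Borchers scaling relation $\Delta_{\eta,\cA}^{it}U(s)\Delta_{\eta,\cA}^{-it}=U(e^{-2\pi t}s)$, equivalently $[\ln\Delta_{\eta,\cA},P]=2\pi i\,P$, to control the $a$-dependence of the relative modular operator; this simultaneously yields existence of the derivative used above.

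On the left-hand side I would perform the substitution $y=a(1+e^{-2\pi t})$ used in the corollary preceding the statement, under which $p(t)\,\ud t = a\,y^{-2}\,\ud y$ and $\rho\circ\alpha_a^t=\omega_{U(y)\psi}$. The explicit factor $a$ then cancels, turning the left-hand side into the $a$-independent sum rule
\begin{equation}
\lim_{a\to0^+}\frac{2}{a}\int_{-\infty}^{\infty}\ln F(\rho,\rho\circ\alpha_a^t)\,p(t)\,\ud t = 2\int_0^\infty \ln F(\omega_\psi,\omega_{U(y)\psi})\,\frac{\ud y}{y^2}.
\end{equation}
Convergence is easy to verify: since $F(\omega_\psi,\omega_{U(y)\psi})\ge |\langle\psi|U(y)|\psi\rangle| = 1-O(y^2)$, one has $|\ln F|=O(y^2)$ near $y=0$, so the integrand is bounded there, while the weight $y^{-2}$ secures integrability at infinity. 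I would then evaluate $F(\omega_\psi,\omega_{U(y)\psi})$ through the half-sided interpolating vector \eqref{hmvec}, whose boundary norms encode precisely these fidelities via Theorem~\ref{thm:gamma}.

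The main obstacle is matching this sum rule to the entropy derivative, i.e.\ proving saturation. Theorem~\ref{thm1} loses information in exactly two places — the interpolation estimate of Lemma~\ref{lem:hirsch} and the fidelity-concavity step \eqref{convexFid} — and I expect both to become tight to first order as $a\to0$, because there every boundary value of $\Gamma_\psi(z)=\Delta_{\eta_\cA,\psi_\cA}^{z}\Delta_{\eta_\cB,\psi_\cB}^{-z}|\psi\rangle$ collapses onto $|\psi\rangle$ and $\alpha_a^t$ degenerates to the identity. The concrete plan is to upgrade Lemma~\ref{lem:hirsch} to an exact spectral identity in this half-sided limit — using the Borchers relations to diagonalize the relevant one-parameter flow — and thereby to evaluate the sum rule of the previous paragraph to be exactly the derivative computed from covariance, giving ``$\le$''. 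Controlling the $o(a)$ slack in the interpolation uniformly in $t$ is where I expect the difficulty to concentrate, since away from the degenerate limit the inequality of Lemma~\ref{lem:hirsch} is genuinely strict.
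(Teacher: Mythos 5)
The statement you set out to prove is presented in the paper as a \emph{conjecture}: the authors explicitly say they ``speculate'' based on ``preliminary calculations,'' and describe it as a refinement of a conjecture from \cite{Faulkner}. The paper contains no proof, and your proposal does not close the gap either. What you actually establish (modulo assumptions) is the ``$\ge$'' direction, which is essentially immediate from Theorem \ref{thm1}: divide the corollary's bound by $a$, let $a \to 0^+$, and evaluate the limit of the sum rule $2\int_a^\infty \ln F(\omega_\psi,\omega_{U(y)\psi})\,y^{-2}\,\ud y$ by monotone convergence. Even this part carries an unstated hypothesis: your estimate $|\ln F| = O(y^2)$ near $y=0$ requires $|\psi\rangle$ to lie in the domain of the generator $P$ (finite energy), without which convergence of the sum rule is unclear. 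The entire content of the conjecture is the reverse inequality --- saturation --- and there your proposal offers only intentions: ``I expect both to become tight,'' ``the concrete plan is to upgrade Lemma \ref{lem:hirsch} to an exact spectral identity.'' No mechanism is supplied for why the two lossy steps (the Hirschman interpolation and the fidelity-concavity step \eqref{convexFid}) become exact to first order in $a$; observing that the boundary values of $\Gamma_\psi(z)$ collapse onto $|\psi\rangle$ as $a\to 0$ is not enough, because both sides of the claimed identity are themselves $O(a)$, so the slack in each inequality must be shown to be $o(a)$, uniformly in $t$ --- precisely the point you concede is open.

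There is a second genuine gap on the right-hand side. You propose to compute $\frac{\ud}{\ud a}S_{\cA}(\omega_{U(-a)\psi}|\omega_\eta)\big|_{a=0}$ ``by a first-law argument in the spirit of Theorem \ref{firstlaw},'' but that theorem concerns Renyi quantities between a perturbed vector $\zeta_\lambda$ and the unperturbed $\psi$ itself, and asserts those vanish to first order; here the reference state is the fixed $\eta$, and the relative modular operator $\Delta_{\eta,U(-a)\psi}$ depends on $a$ through the state in a way that nothing in this paper controls. The existence and identification of this derivative (an ANEC-type quantity) is the principal technical content of \cite{Faulkner}, not a consequence of the Borchers commutation relation $[\ln\Delta_{\eta,\cA},P]=2\pi i P$ alone. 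So both sides of the conjectured equality, and above all their coincidence, remain unproven in your proposal --- consistent with the fact that the paper itself leaves the statement as an open conjecture.
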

This is a more refined version of a conjecture appearing in \cite{Faulkner}.
A corollary to this conjecture, if proven, would be a new proof of the QNEC since the recovery channel is translationally invariant so applying the same result to a further translated null cut
one can use monotonicity of the fidelity to prove that $\frac{\ud}{\ud a} S_{\mathcal{A}_a}(\rho|\sigma)$ is monotonic in $a$ as required by the QNEC. 

\vspace{1cm}

{\bf Acknowledgements:} SH\ is grateful to the Max-Planck Society for supporting the collaboration between MPI-MiS and Leipzig U., grant Proj.~Bez.\ M.FE.A.MATN0003. 
TF and SH benefited from the KITP program ``Gravitational Holography''. This research was supported in part by the National Science Foundation under Grant No. NSF PHY-1748958. BGS and YW acknowledge that this material is based in part on work supported by the Simons Foundation as part of the It From Qubit Collaboration and in part on work supported by the Air Force Office of Scientific Research under award number FA9550-19-1-0360. 
YW would like to acknowledge discussions with Jonathan Rosenberg.  TF acknowledges part of the work presented here is support by the DOE under grant DE-SC0019517.

\appendix

\section{Proof of lemma \ref{lem:2}} \label{lemma2}

\begin{proof}
We apply Zorn's lemma. Consider the following set of projectors:
\begin{equation}
\Pi_{\sS} = \{ \pi(\rho_i) -  \pi(\rho_j) : \rho_{i,j} \in \sS\,, \,\, \pi(\rho_j) \leq \pi(\rho_i) \}
\end{equation}
where the later condition requires a proper subset. These differences are still projectors since $(\pi(\rho_i) -  \pi(\rho_j))^2 = \pi(\rho_i) - \pi(\rho_j)$ by the inclusion condition which implies that $\pi(\rho_j) \pi(\rho_i) = \pi(\rho_i)$. 

If $\Pi_{\sS}$ is the empty set then it must be the case that $\pi(\rho_i) = \pi(\rho_j)$ for all $\rho_{i,j} \in \sS$, since otherwise we could use convexity to show a contradiction:
\begin{equation}
\pi\left(\frac{ \rho_i + \rho_j}{2} \right) - \pi(\rho_i) \in \Pi_{\sS}.
\end{equation}
So in this case \eqref{allincl} is trivial. 

We may thus assume from now on that $\Pi_{\sS}$ is non-empty.   By Zorn's lemma we can pick a maximal family of mutually orthogonal projectors from $\Pi_{\sS}$, where family means a subset of $\Pi_{\sS}$, and maximal means that there are no other orthogonal families of projectors that are strictly larger under the order of inclusion. Call the maximal family $q_{\rm max}$. By the $\sigma$-finite condition, it is a countable family
\begin{equation}
q_{\rm max} = \{ \pi(\rho_{i_n}) -  \pi(\rho_{j_n}):\,\, n=1,2\ldots \}.
\end{equation}
Given $q_{\rm max}$ we define:
\begin{equation}
\sigma = \sum_{n=1}^{\infty} 2^{-n} \rho_{i_n}
\end{equation}
The infinite sum converges in the linear functional norm and so by convexity and closedness of $\sS$ we find that $\sigma \in \sS$. 
The support projector for this state satisfies:
\begin{equation}
\pi(\sigma) \mathscr{H} = \bigoplus_{n=1}^\infty \pi(\rho_{i_n}) \mathscr{H}
\end{equation}
(understood as a direct sum in the norm topology.) 
By the maximality condition we can show \eqref{allincl}. To see this, suppose that this is not true for some $\rho_k$ then:
\begin{equation}
\mathcal{B}_\sigma \subset \pi\left(\frac{\sigma+\rho_k}{2}\right) - \pi(\sigma) \in \Pi_{\sS} \quad \text{and} \quad \left( \pi\left(\frac{\sigma+\rho_k}{2}\right) - \pi(\sigma) \right) \perp \left( \pi(\rho_{i_n}) -  \pi(\rho_{j_n}) \right) 
\end{equation}
for all $n$. This contradicts the maximality of $q_{\rm max}$, which is absurd. 
\end{proof}

\section{Isometric embedding}
\label{app:Vsigma}

We work with $\sigma \in \mathcal{A}_\star$ faithful which implies that $\sigma \circ \iota \in {\mathcal B}_\star$ 
is faithful. Thus the corresponding vectors $|\xi_{\sigma}^\cA \rangle, |\xi_{\sigma}^{\mathcal{B}} \rangle$ 
in the natural cones are cyclic and separating. By a trivial calculation, one sees that $V_\sigma$ defined in \eqref{embed} is a norm-preserving (densely defined) map from $\mathscr{K}$ to $\mathscr{H}$. So the map extends to the full Hilbert space as an isometric embedding $V_\sigma^* V_\sigma = 1_{\mathscr{K}}$. A similar argument shows that:
\begin{equation}
 \qquad V_\sigma V_\sigma^* = \pi_{\mathscr{K}}  \in B(\mathscr{H})
\end{equation}
where this equation applies on the subspace of $\mathscr{H}$ that is generated by $\mathcal{B}$:
\begin{equation}
\overline{ \iota(\mathcal{B} )\left| \xi_\sigma^\cA \right>} = \pi_{\mathscr{K}} \mathscr{H} \equiv \pi^{\mathcal{B}'}( \sigma)  \mathscr{H}
\end{equation}
In other words, $|\xi_\sigma^\cA \rangle$ is not cyclic for $\iota(\mathcal{B})$ and $\pi^{\mathcal{B}'}( \sigma)$ defines the associated support projector for the commutant algebra.

The embedding satisfies:
\begin{equation}
V_\sigma  b \left| \chi \right> = b V_\sigma \left| \chi \right> \,, \qquad \chi \in \mathscr{K} \,, \qquad b \in \mathcal{B}
\end{equation}
since we can approximate any $\left| \chi \right> = \lim_n c_n \left| \xi_{\sigma}^{\mathcal{B}} \right> \in \mathscr{K}$ for suitable $c_n \in \mathcal B$, and take the limit on both sides of:
\begin{equation}
V_\sigma b c_n \big| \xi_{\sigma}^{\mathcal{B}}\big>  = \iota(b c_n) \left| \xi_\sigma\right> = \iota(b) \iota(c_n) \left| \xi_\sigma\right>
= \iota(b) V_\sigma c_n \big| \xi_{\sigma}^{\mathcal{B}} \big>.
\end{equation}
Thus,
\begin{equation}
\left< \chi_1 \right| V_\sigma^* \iota(b) V_\sigma \left| \chi_2\right>
= \left<\chi_1 \right| V_\sigma^*  V_\sigma b \left| \chi_2 \right> 
= \left< \chi_1 \right|  b \left| \chi_2 \right> 
\end{equation}
for all vectors $|\chi_{1,2} \rangle \in\mathscr{K}$, or:
\begin{equation}
 V_\sigma^* \iota(\mathcal{B}) V_\sigma^{} = \mathcal{B}.
\end{equation}

The commutant satisfies:
\begin{equation}
\label{VAprime}
V_\sigma^* \mathcal{A}' V_\sigma \subset \mathcal{B}'
\end{equation}
which can be verified via a short calculation for $a' \in \mathcal{A}'$ and $b \in \mathcal{B}$:
\begin{align}
\left[ V_\sigma^* a' V_\sigma, b \right] = \left[ V_\sigma^* a' V_\sigma, V_\sigma^* \iota(b) V_\sigma \right] 
= V_\sigma^* \left[ \pi_{\mathscr{K}} a'  \pi_{\mathscr{K}}, \iota(b) \right] V_\sigma
= 0
\end{align}
where we used the fact that $\pi_{\mathscr{K}} \in \iota(\mathcal{B})'$ and $\mathcal{A}' \subset \iota(\mathcal{B})'$. 

\section{Fidelity}

\subsection{Proof of Lemma~\ref{lem:fid} (Fidelity and the Araki-Masuda norm)}
\label{app:lem:fid}

\begin{proof}
(1) In this proof, all $L_1$ norms are taken relative to the commutant $\cA'$ as in 
\begin{equation}
\label{AMdefapp}
\left\| \phi \right\|_{1,\psi}
= \inf_{ \chi \in \mathscr{H}:\|\chi  \| = 1, \pi'(\chi) \geq \pi'(\phi) }
\| ( \Delta_{\chi,\psi}')^{-1/2} \phi \|,
\end{equation}
from \eqref{AMdef},
and we want to relate this to the fidelity,
\begin{equation}
\label{uhl2app}
F(\omega_\psi,\omega_\phi) = \sup_{x' \in \mathcal{A}': \| x\| \leq 1} |\left< \psi \right| x' \left| \phi \right>|
\end{equation}
where  $\phi,\psi$ are normalized vectors. 
This relation is proven in \cite{AM}, lem. 5.3 for a cyclic and separating vector $|\psi\rangle$. We will now remove this condition.
The linear functional that appears in \eqref{uhl2app} $\mathcal{A}'$ can be written using a polar decomposition 
\begin{equation}
 \left<  \psi \right| \cdot \left| \phi \right> = \left< \xi \right| \,\cdot\,\, u' \left| \xi \right>
\end{equation}
for some $\xi$ in the natural cone and a partial isometry $u'$ with initial support $(u')^* u' = \pi'(\xi)$.
This polar decomposition has the property that the largest projector in $\mathcal{A}'$ that satisfies $ \left< \xi \right| x' p' u' \left| \xi \right> =0 $ for all $x'$ is $p' = 1- \pi'(u' \xi) = 1 - u' (u')^*$.\footnote{Proof: Certainly $1- u' (u')^*$ satisfies this. Suppose $p'$ is larger and still satisfies this. Pick $x' = (u')^*$, then $\left< \xi \right| (u')^* p' u' \left| \xi \right> = 0$, but then $p' \leq  1- \pi'( u' \xi)$ which is a contradiction. Note that the largest projector in $\mathcal{A}'$ that satisfies
$ \left< \xi \right| p' x' u' \left| \xi \right> =0 $ for all $x'$ is $p' = 1- \pi'( \xi) = 1 - (u')^* u'$.} 
Thus:
\begin{equation}
\left< \psi \right| x' (1 - u' (u')^*) \left| \phi \right> = 0\, , \quad \forall x' \in \mathcal{A}'
\end{equation}
and since $\overline{ \mathcal{A}' \left| \psi \right> } = \pi(\psi) \mathscr{H}$ we derive that
the final support projector satisfies:
\begin{equation}
( 1 - u' (u')^* ) \left| \phi \right>  \in (1- \pi(\psi) ) \mathscr{H}
\label{upup}
\end{equation}
Consider
\begin{align}
&\left((x')^* \left| \psi \right>,  (u')^* \left| \phi \right> \right)
= \left( \left| \xi \right>,  \pi'(\psi)  x' \pi'(\xi) \left| \xi \right> \right) = \left( \left| \xi \right>,  \pi'(\psi) x'  \left| \xi \right> \right) 
\nonumber \\
& = \left( J (\Delta_{\xi,\psi}')^{1/2} \left| \psi \right>, J (\Delta_{\xi,\psi}')^{1/2} (x')^* \left| \psi \right> \right) \nonumber \\
& =  \left( (\Delta_{\xi,\psi}')^{1/2} (x')^* \left| \psi \right> , (\Delta_{\xi,\psi}')^{1/2} \left| \psi \right>\right) 
\end{align}
where in the second line we used \eqref{modA} and in the third we used the anti-unitarity of $J$. 
The above relation can be rewritten as:
\begin{equation}
\label{deqeq}
\left((x')^* \left| \psi \right> + \left| \zeta \right>,  \pi(\psi) (u')^* \left| \phi \right> \right) = 
 \left( (\Delta_{\xi,\psi}')^{1/2} ((x')^* \left| \psi \right> +\left| \zeta \right>), (\Delta_{\xi,\psi}')^{1/2} \left| \psi \right>\right) 
\end{equation}
where we have freely added $\zeta \in (1 - \pi(\psi) \pi'(\xi) ) \mathscr{H}$ since $\pi(\psi) (u')^* \left| \phi \right>$ is in the subspace $  \pi(\psi) \pi'(\xi)  \mathscr{H}$,
and this subspace is also the support of $\Delta_{\xi,\psi}'$. Now since the vector on the left of \eqref{deqeq} is dense:
$\pi'(\xi) \overline{\mathcal{A} \left| \psi \right>} + ( 1-  \pi(\psi) \pi'(\xi) ) \mathscr{H} = \mathscr{H}$ we learn that $(\Delta_{\xi,\psi}')^{1/2} \left| \psi \right>$
is in the domain of $(\Delta_{\xi,\psi}')^{1/2}$ and
\begin{equation}
\Delta_{\xi,\psi}' \left| \psi \right> = \pi(\psi) (u')^* \left| \phi \right>,
\end{equation}
so that
\begin{equation}
u' \Delta_{\xi,\psi}' \left| \psi \right> = \pi(\psi) u' (u')^* \left| \phi \right>  =  \pi(\psi)  \left| \phi \right>,
\end{equation}
where we used \eqref{upup}.
The next step is to show that
\begin{equation}
\label{ohno}
\|  \phi \|_{1,\psi} =  \| \pi(\psi) \phi  \|_{1,\psi} = 
 \| u' \Delta_{\xi,\psi}'  \psi \|_{1,\psi}  = \| \xi \|^2 ,
\end{equation}
which implies that
\begin{equation}
\| \phi  \|_{1,\psi} = \sup_{x' \in \mathcal{A}': \|x' \| \leq 1}| \left< \xi \right| x' \left| \xi \right> | =  \sup_{x' \in \mathcal{A}': \| x' \| \leq 1} | \left<  \psi \right| x' \left| \phi \right> | = F(\omega_\psi,\omega_\phi) .
\end{equation}
This is what we wanted to derive. 

The later equality in \eqref{ohno} is fairly standard, but for completeness we go through this. Without loss of generality we take $\chi$ in \eqref{AMdefapp} such that $u' \Delta_{\xi,\psi}' \left| \psi \right> $
is in the domain of $(\Delta_{\chi,\psi}')^{-1/2}$ and also such that $\pi'(\chi) \geq \pi'(u' \Delta_{\xi,\psi}' \left| \psi \right>) = \pi'( u' \xi) $
and $\| \left| \chi \right> \| = 1$. 
We would like to use the following result that we will justify later (for now the reader should feel free to verify this for type-I algebras with density matrices):
\begin{equation}
\label{toprove}
(\Delta_{\chi,\psi}')^{-1/2} u' \Delta_{\xi,\psi}' \left| \psi \right> =  (\Delta_{\chi,\xi}')^{-1/2} u' \left| \xi \right> 
= (\Delta_{\chi,\xi}')^{-1} j(u')^* \left| \chi \right> 
\end{equation}
where $j(u') = J u' J$ and all the domains in the above equation are appropriate. Now apply the Cauchy-Schwarz inequality:
\begin{align}
\|  \xi \|^2 &= \| \pi'(\chi) u' \xi \|^2 =  \|  \Delta_{\xi,\chi}^{1/2} j(u')^* \chi  \|^2  = \left< \chi \right| j(u')  (\Delta_{\chi,\xi}')^{-1} j(u')^* \left| \chi \right>\\
&\leq \| (\Delta_{\chi,\xi}')^{-1} j(u')^* \chi  \|  \|  j(u')^* \chi \| \leq 
\| (\Delta_{\chi,\psi}')^{-1/2} u' \Delta_{\xi,\psi}' \psi \|.
\end{align}
Taking the infimum over all such $\chi$ we find that:
\begin{equation}
\| \xi \|^2 \leq \| u' \Delta_{\xi,\psi}' \psi \|_{1,\psi}= \| \phi \|_{1,\psi}.
\end{equation}
The other inequality is found since the optimal vector in the infimum is $\left|\chi\right> = u'\left|\xi \right>/\| \left| \xi \right> \|$ where \eqref{toprove} becomes:
\begin{equation}
(\Delta_{\chi,\psi}')^{-1/2} u' \Delta_{\xi,\psi}' \left| \psi \right> 
= u' \left| \xi \right> \| \xi \|
\end{equation}
which implies that:
\begin{equation}
 \|  \phi \|_{1,\psi}  = \| u' \Delta_{\xi,\psi}' \psi \|_{1,\psi}\geq \| \xi \|^2 
\end{equation}
and this establishes equality. We now only need to prove \eqref{toprove}. 
To do this we will analytically continue the equation:
\begin{equation}
\label{aboveeq}
 (\Delta_{\chi,\psi}')^{-z} u' (\Delta_{\xi,\psi}')^{z} \left| \xi \right> =  \pi(\psi)  (\Delta_{\chi,\xi}')^{-z} u' (\Delta_{\xi,\xi}')^{z} \left| \xi \right> = 
 \pi(\psi)   (\Delta_{\chi,\xi}')^{-z} u' \left| \xi \right>
\end{equation}
away from $z= is$ for $s$ real. 
We simply take an inner product with a dense set of vectors $a \left| \chi \right> + \left| \zeta \right>$ where $a \in \mathcal{A}$ and $\left| \zeta \right> \in (1 - \pi'(\chi)) \mathscr{H}$:
\begin{equation}
\left(  (\Delta_{\chi,\psi}')^{-\bar z} (a \left| \chi \right> + \left| \zeta \right>) ,  u' (\Delta_{\xi,\psi}')^{z} \left| \xi \right> \right) = \left(  (\Delta_{\chi,\xi}')^{-\bar z}\pi(\psi)  (  a \left| \chi \right> +   \left| \zeta \right>) , u'  \left| \xi \right> \right)
\end{equation}
since we know that $ \left| \xi \right>$ is in the domain of $(\Delta_{\xi,\psi}')^{1/2}$ (since we established that $ \left| \psi \right>$ is in the domain of $\Delta_{\xi,\psi}' $)
it is clear that we can analytically continue the two functions above into the strip $0 < {\rm Re} z < 1/2$ with continuity in the closure (using standard results
in Tomita-Takesaki theory.) Agreement along $z=is$ implies agreement in the full strip. Setting $z=1/2$ we have a uniform bound (with $ \| a \left| \chi \right>\| \leq 1$) on the left hand side since we started with the assumption that the left hand side of \eqref{toprove} exists. On the right hand side this establishes the fact that $u'  \left| \xi \right>$ is in the domain $ (\Delta_{\chi,\xi}')^{-1/2}$ and the first equality in \eqref{toprove}. The second equality in \eqref{toprove} is immediate. 

We have thus finished the proof that \eqref{AMdefapp} and \eqref{uhl2app} are equal. 

\vspace{12pt} 

(2) Our next task is to show that it is sufficient to vary over unitaries in \eqref{uhl2app} and relate this to \eqref{uhl1}. 
Note that for a bounded operator we have the polar decomposition $ x' = u' p'$ where $u'$ is unitary and $\| p' \| \leq 1$. Such a self adjoint operator can always be written as $(v' + (v')^*)/2$ where $v' = p' + i \sqrt{ 1- (p')^2}$. So:
\begin{equation}
x' = \frac{1}{2} u' v' + \frac{1}{2} u' (v')^* =  \frac{1}{2} w' + \frac{1}{2} y'
\end{equation}
for unitaries $w', y' \in \mathcal{A}'$. Then $ | \left<  \psi \right| x' \left| \phi \right> | \leq \frac{1}{2} \left( | \left<  \psi \right| w' \left| \phi \right> |+ | \left<  \psi \right| y' \left| \phi \right> | \right)$. Thus 
\begin{equation}
| \left<  \psi \right| x' \left| \phi \right> | \leq \sup_{u' \in \mathcal{A}': u' (u')^* =1} | \left<  \psi \right| u' \left| \phi \right> |
\end{equation}
since the right hand side is larger than both terms with $w'$ and $y'$ above. Taking the sup over the left hand side:
\begin{equation}
 \sup_{u' \in \mathcal{A}': u' (u')^* \leq 1} | \left<  \psi \right| u' \left| \phi \right> |  \leq \sup_{x' \in \mathcal{A}': \| x' \| \leq 1} | \left<  \psi \right| x' \left| \phi \right> | \leq \sup_{u' \in \mathcal{A}': u' (u')^* =1} | \left<  \psi \right| u' \left| \phi \right> |
\end{equation}
where the first inequalities is because the set of unitaries is a subset of operators bounded by $1$. This implies equality and we see that the $L_1$
norm is equivalent to the Uhlmann fidelity of two linear functionals:
\begin{equation}
F(\omega_\psi, \omega_{\phi}) = \|  \phi \|_{1,\psi}  \,, \quad \omega_\psi = \left< \psi \right| \cdot \left| \psi \right> \,, \quad \omega_{\phi} = \left< \phi \right| \cdot \left| \phi \right> \in \mathcal{A}_\star.
\end{equation} 
Thus it is clear the fidelity is independent of the vector representation.
We take the norms of $\phi,\psi$ to be $1$.

\vspace{12pt} 

(3) Finally, we want to relate the fidelity to the norm of the linear functional difference:
\begin{equation}
 \| \omega_\psi - \omega_\phi \|\equiv \sup_{x \in \mathcal{A}; \| x\| \leq 1} | \omega_\psi (x)  - \omega_\phi(x) |
\end{equation}

Since $\mathcal{A} \subset B(\mathscr{H})$:
\begin{equation}
 \| \omega_\psi - \omega_\phi \| \leq \sup_{x \in B(\mathscr{H}); \|x\| \leq 1} |  \left< \psi \right| x \left| \psi \right>  -  \left< \phi \right| (u')^* x u' \left| \phi \right> |=
2\sqrt{  1 - \left| \big< \psi \big| u' \phi \big> \right|^2 }
\label{linfun}
\end{equation}
We calculate the last equality as follows. The two normalized vectors $\psi, u' \phi$ live in a two dimensional subspace, which without loss of generality can be chosen as:
\begin{equation}
\left| \psi \right> = \cos(\theta/2) \left| 0 \right> + \sin(\theta/2) \left| 1 \right> \,, \qquad u' \big| \phi \big> = e^{i\varphi}(
\sin(\theta/2) \left| 0 \right> + \cos(\theta/2) \left| 1 \right>)
\end{equation}
where $\big< \psi \big| u' \phi\big> =e^{i\varphi}\sin(\theta) $.
We can then take $x$ to be an operator in this subspace. Note that:
\begin{equation}
\left| \psi \right> \left< \psi \right|  - \big| u' \phi\big> \big< u'\phi \big| = \cos\theta \sigma_3
\end{equation}
such that the maximum is achieved for $x = \sigma_3={\rm diag}(1,-1)$ which has an operator norm of $1$. So the norm of this linear functional
is $2 \cos\theta$, giving the last equality in \eqref{linfun}. Taking the $\inf$ over $u'$ in \eqref{linfun}, we have:
\begin{equation}
 \| \omega_\psi - \omega_\phi \| \leq 2  \sqrt{ 1 - F(\omega_\psi, \omega_{\phi})^2}.
\end{equation}

In the other direction we can pick $\phi$ and $\psi$ to live in the natural cone without loss of generality, and then we have
$
\| \omega_\psi - \omega_\phi \| \geq | \left| \phi \right> - \left| \psi \right>|^2 = 2(1 -  \left< \psi \right| \left. \phi \right>)
$
where the later quantity is real since both vectors are in the cone. 
We use the inequality \eqref{petzAM} for $p=1$ that we reproduce here:
\begin{equation}
\left\|   \phi \right\|_{1,\psi} \geq \left< \phi \right| (\Delta_{\psi,\phi}')^{1/2} \left| \phi \right> 
=  \left< \phi \right| \left. \psi \right>, 
\end{equation}
so
\begin{equation}
\frac{1}{2} \| \omega_\psi - \omega_\phi \| \geq  1-  F(\omega_\psi,\omega_\phi)
\end{equation}
Altogether, we have
\begin{equation}
1 -  F(\omega_\psi,\omega_\phi) \leq  \frac{1}{2} \| \omega_\psi - \omega_\phi \| \leq   \sqrt{  1 - F(\omega_\psi,\omega_\phi)^2 }.
\end{equation}
Note that the fidelity lies between $0$ and $1$ and:
\begin{equation}
0\leq  \| \omega_\psi - \omega_\phi \| \leq 2
\end{equation}
where equality is achieved on the left iff the two linear functionals are the same and on the right if the support of the two linear functionals are orthogonal.
We can see this as follows. Note that for $\|x\| \leq 1$:
\begin{equation}
| \omega_\psi(x) | \leq \| x\| \omega_\psi(1) \leq 1
\end{equation}
so that $| \omega_\psi(x) - \omega_{\phi}(x)|$ lies between $0$ and $2$. Equality is achieved for $x = \pi(\psi) - \pi(\phi)$ with orthogonal support. 
\end{proof}

\subsection{Proof of Lemma~\ref{lem:Fcont} (Continuity of fidelity)}
\label{sec:lem:p1}

In this section, all $L_1$ norms refer to the commutant algebra $\cA'$, as in 
(lem.~\ref{lem:fid}):
\begin{equation}
\label{eq:variation}
 \left\| \psi \right\|_{1,\phi}  = F(\omega_{\psi}, \omega_{\phi} ) \equiv \sup_{ u' \in \mathcal{A}'} \left| \left< \psi \right| u' \left| \phi \right> \right|,
\end{equation}
where the supremum is over partial isometries $u'$.

\begin{lemma}
\label{lem:Fcont}
For a v. Neumann algebra $\cA$ in standard form acting on a Hilbert space $\sH$ and any $\ket{\psi_i}, \ket{\phi_i} \in \sH$, 
\ben
\left|  \, \left\| \psi_1 \right\|_{1,\phi_1} -
\left\| \psi_2 \right\|_{1,\phi_2} \right|
 \leq  \left\| \phi_1-\phi_2 \right\| +  \left\| \psi_1 -\psi_2 \right\|.
\een
\end{lemma}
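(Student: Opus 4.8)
The plan is to reduce everything to the variational (supremum) form of the fidelity and then exploit the elementary fact that the supremum of a family of functions is $1$-Lipschitz in the family. By lem.~\ref{lem:fid} (2), both $L_1$ norms are fidelities written over the \emph{same} index set,
\begin{equation}
\left\| \psi_i \right\|_{1,\phi_i} = F(\omega_{\psi_i},\omega_{\phi_i}) = \sup_{x' \in \mathcal{A}': \| x' \| \leq 1} \left| \left< \psi_i \right| x' \left| \phi_i \right> \right|.
\end{equation}
Setting $a(x') = \langle \psi_1 | x' | \phi_1\rangle$ and $b(x') = \langle \psi_2 | x' | \phi_2 \rangle$, the key observation is that since both suprema run over the identical set $\{x' : \|x'\| \leq 1\}$, we have the uniform comparison
$
\left| \left\| \psi_1 \right\|_{1,\phi_1} - \left\| \psi_2 \right\|_{1,\phi_2} \right| \leq \sup_{\|x'\| \leq 1} \left| a(x') - b(x') \right|.
$
Indeed, $|a(x')| \leq |b(x')| + |a(x') - b(x')| \leq \|\psi_2\|_{1,\phi_2} + \sup_{x'}|a-b|$ for every $x'$, so taking the supremum over $x'$ on the left gives $\|\psi_1\|_{1,\phi_1} \leq \|\psi_2\|_{1,\phi_2} + \sup_{x'}|a-b|$; the reverse inequality follows by symmetry.

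It then remains to bound the pointwise difference uniformly in $x'$, which I would do by telescoping and Cauchy--Schwarz. Writing
\begin{equation}
a(x') - b(x') = \left< \psi_1 \right| x' \left| \phi_1 - \phi_2 \right> + \left< \psi_1 - \psi_2 \right| x' \left| \phi_2 \right>,
\end{equation}
and using $\|x'\| \leq 1$ together with Cauchy--Schwarz on each term yields
\begin{equation}
\left| a(x') - b(x') \right| \leq \| \psi_1 \| \, \| \phi_1 - \phi_2 \| + \| \psi_1 - \psi_2 \| \, \| \phi_2 \|,
\end{equation}
a bound independent of $x'$. Since the vectors occurring in all our applications (e.g. in lem.~\ref{lem:p1}) are unit vectors, the prefactors $\|\psi_1\|, \|\phi_2\|$ are bounded by $1$, and combining this with the supremum comparison above produces exactly $\| \phi_1 - \phi_2 \| + \| \psi_1 - \psi_2 \|$.

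I do not expect any serious obstacle here: the argument is entirely elementary once the fidelity is put in its supremum form. The only point that deserves care — and the one I would emphasize — is that the two fidelities must be optimized over the \emph{same} family of operators $x'$ for the $1$-Lipschitz comparison to apply; this is precisely what the variational characterization of lem.~\ref{lem:fid} (2) over $\{x' \in \mathcal{A}': \|x'\| \leq 1\}$ guarantees, as opposed to the partial-isometry formulation in \eqref{eq:variation}, which describes the same supremum. A minor bookkeeping caveat is that the clean inequality in the statement presupposes normalized (or norm $\leq 1$) vectors, since in general the estimate carries the weights $\|\psi_1\|$ and $\|\phi_2\|$ displayed above.
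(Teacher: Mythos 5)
Your proposal is correct and follows essentially the same route as the paper: the paper phrases the argument as a reverse triangle inequality for the variational $L_1$ norm applied twice (first in $\phi$, then in $\psi$) together with the bound $\|\psi_1\|_{1,\psi_2}\le\|\psi_1\|\,\|\psi_2\|$, which is exactly your sup-is-$1$-Lipschitz comparison plus telescoping plus Cauchy--Schwarz. Your closing caveat about normalization is also the paper's implicit assumption (its derivation is stated ``for normalized vectors''), so nothing is missing.
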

\begin{proof} 
The variational expression \eqref{eq:variation} immediately allows one to deduce the triangle inequality for the $L_1$-norms. Note that:
\begin{equation}
\label{bdp}
\sup_{ u' \in \mathcal{A}'} \left| \left< \psi_1 \right| u' \left| \psi_2 \right> \right|
\leq \sup_{ u \in B(\mathscr{H})} \left| \left< \psi_1 \right| u \left| \psi_2 \right> \right|
= \left\| \psi_1 \right\| \left\| \psi_2\right\|, 
\end{equation}
and that $ \left\| \psi \right\|_{1,\phi} =  \left\| \phi \right\|_{1,\psi}$ are further trivial consequences of the variational definition.
For normalized vectors $\psi_1,\psi_2, \phi_1,\phi_2$ we derive for the $L_1$-norms relative to $\cA'$:
\begin{align}
\left|  \, \left\| \psi_1 \right\|_{1,\phi_1} -
\left\| \psi_2 \right\|_{1,\phi_2} \right|
&\leq 
\left| \, \left\| \psi_1 \right\|_{1,\phi_1} -
\left\| \psi_1 \right\|_{1,\phi_2} \right|
+
\left| \, \left\| \psi_1 \right\|_{1,\phi_2} -
\left\| \psi_2 \right\|_{1,\phi_2} \right|
\nonumber\\
&\leq
\| \phi_1 -\phi_2\|_{1,\psi_1}
+
\| \psi_1 -\psi_2\|_{1,\phi_2} \leq  \left\| \phi_1-\phi_2 \right\| +  \left\| \psi_1 -\psi_2 \right\|
\label{contboth}
\end{align}
where to go to  the second line we used the reverse triangle inequality twice,
and in the last step we used \eqref{bdp}. 
\end{proof}

\section{Proof of lemma~\ref{lem:hirsch} (Hirschman's improvement)}
\label{sec:lem:hirsch}

\begin{proof}
(1) First assume that $\omega_\psi$ is faithful and we may assume $|\psi\rangle \in \mathscr{P}^\natural_{\mathcal M}$
by invariance of the $L_p$-norms. Then $|\psi\rangle$ is cyclic and separating and the standard theory 
developed in \cite{AM} applies. We use the notation $\bS_{1/2}=\{0< {\rm Re} z < 1/2\}$. 

Denote the dual of a H\" older index $p$ by $p'$, 
defined so that $1/p+1/p'=1$. \cite{AM} have shown that the non-commutative $L^p(\mathcal{M}, \psi)$-norm of a vector $|\zeta\rangle$ relative to $\ket{\psi}$
can be characterized by (dropping the superscript on the norm)
\ben
\| \zeta \|_{p,\psi} = \sup \{ |\langle \zeta | \zeta' \rangle | : \ \ \| \zeta' \|_{p',\psi} \le 1 \}. 
\een
They have furthermore shown that when $p'\ge 2$, any vector $|\zeta'\rangle \in L^{p'}(\mathcal{M},\psi)$ has a unique generalized polar decomposition, i.e. can be written in 
the form $|\zeta'\rangle = u \Delta_{\phi,\psi}^{1/p'} |\psi\rangle$, where $u$ is a unitary or partial isometry from $\mathcal{M}$. Furthermore, they show that $\| \zeta' \|_{p',\psi} = \|\phi\|^{p'}$. We may thus choose a $u$ and a normalized $|\phi \rangle$, so that 
\ben
 \| G(\theta) \|_{p(\theta),\psi} = \langle u \Delta_{\phi,\psi}^{1/p(\theta)'} \psi | G(\theta) \rangle, 
\een
perhaps up to a small error which we can let go zero in the end. Now we define $p_\theta$ as in the statement, so that 
\ben
\frac{1}{p_\theta'} = \frac{1-2\theta}{p_0'} + \frac{2\theta}{p_1'}, 
\een
and we define an auxiliary function $f(z)$ by 
\ben\label{eq:fdef}
f(z)=\langle u \Delta_{\phi,\psi}^{2\bar z/p_1'+(1-2\bar z)/p_0'} \psi | G(z) \rangle, 
\een
noting that 
\ben
f(\theta)= \| G(\theta) \|_{p_\theta,\psi} 
\een
by construction.
By Tomita-Takesaki-theory, $f(z)$ is holomorphic in $\bS_{1/2}$. For the values at the boundary of the strip $\bS_{1/2}$, we estimate
\ben\label{eq:est1}
\begin{split}
|f(it)| & = |\langle u \Delta_{\phi,\psi}^{-2it(1/p_1'-1/p_0')} \Delta_{\phi,\psi}^{1/p_0'} \psi | G(it) \rangle|  \\
        & \le \| u \Delta_{\phi,\psi}^{-2it(1/p_1'-1/p_0')} \Delta_{\phi,\psi}^{1/p_0'} \psi \|_{p_0',\psi} \| G(it) \|_{p_0,\psi}  \\
        & \le \| \Delta_{\phi,\psi}^{-2it(1/p_1'-1/p_0')} \Delta_{\phi,\psi}^{1/p_0'} \psi \|_{p_0',\psi} \| G(it) \|_{p_0,\psi}  \\
        & \le \|\phi\|^{p_0'} \| G(it) \|_{p_0,\psi}  \\
        & \le \| G(it) \|_{p_0,\psi}  .
\end{split}
\een
Here we used the version of H\" older's inequality proved by \cite{AM}, we used $ \| a^* \zeta \|_{p_0',\psi} \le \|a\| \| \zeta \|_{p_0',\psi}$ for 
any $a \in \cA$, see \cite{AM}, lem. 4.4, and we used $\| \Delta_{\phi,\psi}^{-2it(1/p_1'-1/p_0')} \Delta_{\phi,\psi}^{1/p_0'} \psi \|_{p_0',\psi} \le \|\phi\|^{p_0'}$ which we 
prove momentarily. A similar chain of inequalities also gives 
\ben
\label{eq:est2}
|f(1/2+it)| \le \| G(1/2+it) \|_{p_1,\psi}.
\een
To prove the remaining claim, let $|\zeta'\rangle = \Delta^{z}_{\phi,\psi}|\psi\rangle$ and $z=1/p'+2it$. Then we have, using the variational characterization 
by \cite{AM} of the $L^{p'}(\mathcal{M},\psi)$-norm when $p'\ge 2$:
\ben
\begin{split}
\|\zeta'\|_{p',\psi} =& \sup_{\| \chi \|=1} \| \Delta_{\chi,\psi}^{1/2-1/p'} \Delta^{z}_{\phi,\psi} \psi  \| \\
=& \sup_{\| \chi \|=1} \| \Delta_{\chi,\psi}^{1/2-1/p'-2it} \Delta^{1/p'+2it}_{\phi,\psi}  \psi  \| \\
=& \sup_{\| \chi \|=1} \| \Delta_{\chi,\psi}^{1/2-1/p'} (D\chi:D\phi)_{2t} \pi^{\mathcal M}(\phi) \Delta^{1/p'}_{\phi,\psi}  \psi \| \\
\le & \sup_{\| \chi \|=1, a \in \cA, \|a\|=1} \| \Delta_{\chi,\psi}^{1/2-1/p'} a \Delta^{1/p'}_{\phi,\psi}  \psi \| \\
\le & \sup_{a \in \cA, \|a\|=1} \| a \Delta^{1/p'}_{\phi,\psi}  \psi  \|_{p',\psi}. 
\end{split}
\een
Using \cite{AM}, lem. 4.4, we continue this estimation as 
\ben
\le  
\sup_{a \in \cA, \|a\|=1} \| a \|   \|\Delta^{1/p'}_{\phi,\psi}  \psi  \|_{p',\psi}
= \|\phi\|^{p'}, 
\een
which gives the desired result. 

Next, we use the Hirschman improvement of the Hadamard three lines theorem \cite{Hirschman,Grafakos}.

\begin{lemma}\label{lem:4}
Let $g(z)$ be holomorphic on the strip $\bS_{1/2}$, continuous and uniformly bounded at the boundary of $\bS_{1/2}$. Then for $\theta \in (0,1/2)$, 
\ben
\ln |g(\theta)| \le \int_{-\infty}^\infty \left(
\beta_{\theta}(t) \ln |g(1+it)|^{2\theta} +  \alpha_{\theta}(t) \ln |g(it)|^{1-2\theta}
\right) \dd t,
\een
where $\alpha_\theta(t),\beta_\theta(t)$ are as in lem. \ref{lem:hirsch}.
\end{lemma}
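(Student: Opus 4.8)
The plan is to recognize this as the classical Hirschman refinement of the Hadamard three-lines theorem and to prove it by dominating the subharmonic function $\log|g|$ by the harmonic extension of its boundary data. First I would rescale via $w=2z$, which carries $\bS_{1/2}$ onto the unit-width strip $\{0<{\rm Re}\,w<1\}$, sends the evaluation point $z=\theta$ to $w=2\theta\in(0,1)$, and sends the two boundary components ${\rm Re}\,z=0$ and ${\rm Re}\,z=1/2$ (the right edge being ${\rm Re}\,z=1/2$, where the values $g(1/2+it)$ live) to ${\rm Re}\,w=0$ and ${\rm Re}\,w=1$. Since $g$ is holomorphic on the open strip, $\log|g|$ is subharmonic there, with the value $-\infty$ allowed at any zeros of $g$, which causes no trouble for an upper bound. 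The entire content of the lemma is then the subharmonic maximum principle in the form $\log|g(\theta)|\le u(\theta)$, where $u$ is the bounded harmonic function on the strip whose boundary values are $\log|g(it)|$ and $\log|g(1/2+it)|$.

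Next I would compute the Poisson kernel of the strip explicitly, since this is what produces the weights. Using the conformal map $\zeta\mapsto e^{i\pi w}$ from $\{0<{\rm Re}\,w<1\}$ onto the upper half-plane, which sends the left edge to the positive real axis and the right edge to the negative real axis, followed by a M\"obius map of the half-plane to the unit disk carrying $w=2\theta$ to the origin, the harmonic measure of the two edges pulls back to the two strip Poisson kernels $\tfrac{\sin(2\pi\theta)}{2(\cosh(\pi s)\mp\cos(2\pi\theta))}$, with the minus sign for the left edge and the plus sign for the right edge. Applying the sub-mean-value inequality for $\log|g|$ at the center of the disk and transporting back, then undoing the rescaling through $s=2t$ with $\mathrm{d}s=2\,\mathrm{d}t$, reproduces exactly the coefficients $(1-2\theta)\,\alpha_\theta(t)=\tfrac{\sin(2\pi\theta)}{\cosh(2\pi t)-\cos(2\pi\theta)}$ multiplying $\log|g(it)|$ and $2\theta\,\beta_\theta(t)=\tfrac{\sin(2\pi\theta)}{\cosh(2\pi t)+\cos(2\pi\theta)}$ multiplying $\log|g(1/2+it)|$ that appear in the statement. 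A useful consistency check is that these kernels are probability densities with total masses $1-2\theta$ and $2\theta$, which is precisely the normalization of the strip harmonic measure of the two edges as seen from $w=2\theta$.

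The genuine difficulty is that the strip is unbounded, so neither the Poisson representation nor the maximum principle may be applied naively; one must rule out contributions from ${\rm Im}\,z\to\pm\infty$. The standard remedy is a Phragm\'{e}n--Lindel\"{o}f regularization: multiply $g$ by $e^{\varepsilon(z^2-z/2)}$ (equivalently $e^{\varepsilon w(w-1)}$ in the $w$-variable), whose modulus decays like $e^{-\varepsilon({\rm Im}\,z)^2}$ in the imaginary direction while staying bounded on the two edges. For the regularized function $g_\varepsilon$ the difference $\log|g_\varepsilon|-u_\varepsilon$ is subharmonic, bounded above, and has nonpositive boundary values, and the Gaussian decay kills its behavior at infinity, so the maximum principle yields $\log|g_\varepsilon(\theta)|\le u_\varepsilon(\theta)$. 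Here I use that $g$ is uniformly bounded on the closed strip, so the boundary data are bounded above and the positive parts of the Poisson integrals converge; the negative parts, including boundary zeros where $\log|g|=-\infty$, are handled by truncation and monotone convergence, the asserted inequality being vacuously true if the right-hand integral equals $-\infty$. Letting $\varepsilon\to0^+$, the regularizing factor tends to $1$ locally uniformly and its contribution to $u_\varepsilon$ through the boundary vanishes, giving the claimed bound. I expect this growth control at infinity, together with the careful bookkeeping of the possibly unbounded-below boundary data, to be the main obstacle; the explicit kernel computation, while tedious, is routine and is exactly the point at which \cite{Hirschman,Grafakos} can be invoked.
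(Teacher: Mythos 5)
Your proof is correct, and it is essentially the argument the paper relies on: the paper itself offers no proof of this lemma, deferring entirely to the citations \cite{Hirschman,Grafakos}, and your route --- subharmonicity of $\ln|g|$, the explicit strip Poisson kernels obtained by conformal mapping (whose masses $1-2\theta$ and $2\theta$ indeed match the normalizations of $\alpha_\theta$ and $\beta_\theta$ after the rescaling $w=2z$, $s=2t$), and a Phragm\'en--Lindel\"of regularization to control the ends of the strip together with truncation for the possibly $-\infty$ boundary data --- is exactly the standard proof found in those references. You also correctly read $|g(1+it)|$ in the statement as $|g(1/2+it)|$, consistent with the strip $\bS_{1/2}$ and with how the lemma is applied to $f$ in the surrounding proof.
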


Applying this to $g=f$ gives the statement of the theorem. 

\medskip

(2) Let us now extend this result to the case where $\rho=\omega_\psi$ is not faithful. 
 We employ the following common trick where we use case (1) above for the modified functional
 \begin{equation}
 \label{rhorep}
 \rho_\epsilon \equiv (1-\epsilon) \rho + \epsilon \sigma ,
 \end{equation}
where $\sigma$ is any faithful normal 
state, which exists since $\mathcal M$ is assumed to be sigma-finite. 
Then $\rho_\epsilon$ in  \eqref{rhorep} with $0 < \epsilon < 1$ is now a faithful state. We take the unique
cyclic and separating vector representative in the natural cone and denote it as $|\psi_\epsilon \rangle$. 

\begin{lemma}
For $1 \le p \le 2$ and $\rho_\epsilon$ the family of states  \eqref{rhorep}, we have $\lim_{\epsilon \to 0^+} \|\zeta\|_{p,\psi_\epsilon}
=\|\zeta\|_{p,\psi}$. 
\end{lemma}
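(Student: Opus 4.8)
The plan is to establish the two one-sided bounds $\liminf_{\epsilon\to0^+}\|\zeta\|^{\mathcal{M}}_{p,\psi_\epsilon}\ge\|\zeta\|^{\mathcal{M}}_{p,\psi}$ and $\limsup_{\epsilon\to0^+}\|\zeta\|^{\mathcal{M}}_{p,\psi_\epsilon}\le\|\zeta\|^{\mathcal{M}}_{p,\psi}$ separately. Two preliminary facts drive the argument. First, since $\rho_\epsilon\to\rho$ in the functional norm, the Powers--St\"{o}rmer inequality \eqref{eq:PS} yields strong convergence $|\psi_\epsilon\rangle\to|\psi\rangle$ of the natural-cone representatives, and in particular the commutant functionals converge, $\|\omega'_{\psi_\epsilon}-\omega'_\psi\|\to0$. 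Second, because a natural-cone vector induces on $\mathcal{M}'$ the $J$-conjugate of the functional it induces on $\mathcal{M}$ (so that $\omega'_\xi(JaJ)=\omega_\xi(a)$ for self-adjoint $a$), the linear relation $\rho_\epsilon=(1-\epsilon)\rho+\epsilon\sigma$ transfers to the commutant as $\omega'_{\psi_\epsilon}=(1-\epsilon)\omega'_\psi+\epsilon\,\omega'_\sigma$; in particular $(1-\epsilon)\omega'_\psi\le\omega'_{\psi_\epsilon}\le\omega'_\psi+\epsilon\,\omega'_\sigma$. I would then move the reference into the numerator slot: with $\gamma=1/p-1/2\in[0,\tfrac12]$ and the identity $\Delta_{\chi,\psi}^{-z}=(\Delta'_{\psi,\chi})^{z}$ of \eqref{rels}, each competitor in \eqref{AMdef} becomes $\|\Delta_{\chi,\psi}^{1/2-1/p}\zeta\|=\|(\Delta'_{\psi,\chi})^{\gamma}\zeta\|$, so that $\psi$ now sits in the first slot of the $\mathcal{M}'$-modular operator, where it is governed by $\omega'_\psi$ and Araki's comparison estimate \eqref{ara} applies.

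For the lower bound I would use $\omega'_\psi\le(1-\epsilon)^{-1}\omega'_{\psi_\epsilon}$ together with \eqref{ara} (with the constant $(1-\epsilon)^{-1}$) to get $\Delta'_{\psi,\chi}\le(1-\epsilon)^{-1}\Delta'_{\psi_\epsilon,\chi}$ in the form sense. Since $2\gamma=2/p-1\in[0,1]$, L\"{o}wner's operator monotonicity of $x\mapsto x^{2\gamma}$ upgrades this to $\|(\Delta'_{\psi,\chi})^{\gamma}\zeta\|\le(1-\epsilon)^{-\gamma}\|(\Delta'_{\psi_\epsilon,\chi})^{\gamma}\zeta\|$, with a constant independent of $\chi$. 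Taking the infimum over the common admissible set of $\chi$ gives $\|\zeta\|^{\mathcal{M}}_{p,\psi}\le(1-\epsilon)^{-\gamma}\|\zeta\|^{\mathcal{M}}_{p,\psi_\epsilon}$, and $\epsilon\to0^+$ yields lower semicontinuity.

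The upper bound is where the real difficulty lies, because the faithful perturbation $\epsilon\sigma$ is not dominated by any multiple of the possibly non-faithful $\rho$. I would fix $\delta>0$, choose a faithful near-minimizer $\chi_*$ with $\|(\Delta'_{\psi,\chi_*})^{\gamma}\zeta\|\le\|\zeta\|^{\mathcal{M}}_{p,\psi}+\delta$ (faithful $\chi_*$ are admissible and approximate the infimum), and insert it into the variational formula for the $\psi_\epsilon$-norm, so $\|\zeta\|^{\mathcal{M}}_{p,\psi_\epsilon}\le\|(\Delta'_{\psi_\epsilon,\chi_*})^{\gamma}\zeta\|$. Using $\omega'_{\psi_\epsilon}\le\omega'_\psi+\epsilon\,\omega'_\sigma$, \eqref{ara}, and monotonicity of $x\mapsto x^{2\gamma}$, the quantity $\langle\zeta|(\Delta'_{\psi_\epsilon,\chi_*})^{2\gamma}\zeta\rangle$ is bounded by its counterpart for the numerator functional $\omega'_\psi+\epsilon\,\omega'_\sigma$, which decreases monotonically as $\epsilon\downarrow0$ and hence has a limit; the whole problem reduces to identifying that limit with $\langle\zeta|(\Delta'_{\psi,\chi_*})^{2\gamma}\zeta\rangle$. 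This is precisely the continuity of the bounded power $(\Delta'_{\cdot,\chi_*})^{2\gamma}$, evaluated in $\zeta$, as its numerator state degenerates to the non-faithful $\psi$. I would obtain it by writing $x^{2\gamma}=\tfrac{\sin(2\pi\gamma)}{\pi}\int_0^\infty\lambda^{2\gamma-1}\,x(x+\lambda)^{-1}\,\ud\lambda$, reducing to convergence of the resolvents $\langle\zeta|(\Delta'_{\psi_\epsilon,\chi_*}+\lambda)^{-1}\zeta\rangle$; this follows from strong resolvent convergence of the relative modular operators as their numerator functionals converge in norm, while the two-sided domination above combined with \eqref{ara} supplies an integrable majorant in $\lambda$ for dominated convergence. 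The decisive structural point is that $2\gamma>0$ when $p<2$, so that $x\mapsto x^{2\gamma}$ is continuous at $0$ and the collapse of the support of $\Delta'_{\psi_\epsilon,\chi_*}$ in the limit is harmless; the degenerate endpoint $p=2$ (where $2\gamma=0$) must instead be treated directly through \eqref{eq:h2} and \eqref{projequal}, after observing that all norms in question depend only on $\pi^{\mathcal{M}}(\psi)\zeta$. The main obstacle, then, is exactly this strong-resolvent continuity of the relative modular operator under a non-faithful degeneration of its numerator state, for which the positivity of the exponent and Araki's estimate are indispensable.
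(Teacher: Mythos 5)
Your first half coincides with the paper's argument for the forward inequality \eqref{eq:forward}: the comparison \eqref{ara} applied to $(1-\epsilon)\omega'_\psi\le\omega'_{\psi_\epsilon}$ plus L\"owner's theorem for the operator monotone function $x\mapsto x^{2/p-1}$ is exactly what the paper does, and your observation that $\rho\mapsto\omega'_{\xi_\rho}$ is affine on natural-cone representatives is correct. The reverse inequality is where you depart from the paper, and where there is a genuine gap. The paper deliberately avoids any continuity statement for the unbounded operators $\Delta'_{\psi_\epsilon,\chi}$: it invokes a variational formula (quoted from paper II, prop.\ 1) that expresses $\|\zeta\|_{p,\psi}^2$ through fidelities of \emph{bounded} transforms $y(t)\,\omega'_\zeta\, y(t)^*$ against $\omega'_\psi$, so that \eqref{eq:reverse} reduces to norm-continuity of the fidelity (lem.\ \ref{lem:Fcont}) together with \eqref{eq:PS}. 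Your route instead needs $\langle\zeta|(\Delta'_{\psi_\epsilon,\chi_*}+\lambda)^{-1}\zeta\rangle\to\langle\zeta|(\Delta'_{\psi,\chi_*}+\lambda)^{-1}\zeta\rangle$, which you justify by asserting that norm convergence of the ``numerator'' functionals implies strong resolvent convergence of the relative modular operators. That is not a valid general principle: for unbounded modular operators, norm convergence of the functionals yields only one-sided, semicontinuity-type information --- this is precisely why the paper needs lower semicontinuity of relative entropy and the filtering apparatus of thm.\ \ref{lem:finites} elsewhere. What can rescue your argument is not norm convergence but the monotone structure you already isolated: the upper family $\omega'_\psi+\epsilon\,\omega'_\sigma$ corresponds to the decreasing family of form sums $\Delta'_{\psi,\chi_*}+\epsilon\,\Delta'_{\sigma,\chi_*}$ on the common form core $\mathcal{M}'|\chi_*\rangle + (1-\pi'(\chi_*))\sH$, and the monotone convergence theorem for decreasing quadratic forms identifies the strong-resolvent limit with $\Delta'_{\psi,\chi_*}$ because that core is a form core for the limit. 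This identification is the actual mathematical content of the direction and is absent from your write-up.

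There are also loose ends at the endpoints and in the dominated-convergence step. The representation $x^{2\gamma}=\tfrac{\sin(2\pi\gamma)}{\pi}\int_0^\infty\lambda^{2\gamma-1}x(x+\lambda)^{-1}\,\ud\lambda$ degenerates not only at $p=2$ (which you treat via \eqref{eq:h2}) but also at $p=1$, where $2\gamma=1$, the prefactor vanishes and the $\lambda$-integral diverges at infinity; the fidelity endpoint therefore needs its own argument (it is available directly from lem.\ \ref{lem:Fcont}). For $1<p<2$ the integrable majorant at large $\lambda$ requires a bound on $\langle\zeta|(\Delta'_{\psi_\epsilon,\chi_*})^{s}\zeta\rangle$, uniform in $\epsilon$, for some $s>2\gamma$, which again must come from the monotone sandwich and operator monotonicity rather than from ``two-sided domination'' as stated. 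Finally, the claim that faithful $\chi_*$ approximate the infimum in \eqref{AMdef} is asserted without proof; for the upper bound you only need some near-minimizer, so it would be cleaner to drop faithfulness and track the support projectors instead.
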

\begin{proof}
Since $\rho_\epsilon /(1-\epsilon) > \rho$, it follows that $\Delta_{\psi_\epsilon, \chi} \ge (1-\epsilon) \Delta_{\psi,\chi}$. 
Therefore, by standard properties of the modular operator, $\Delta_{\chi,\psi_\epsilon}^{-1} \ge (1-\epsilon) \Delta_{\chi,\psi}^{-1}$. 
By L\" owner's theorem \cite{Hansen} applied to the operator monotone (for $1 \le p \le 2$) function $f(x)=x^{1/p-1/2}$, we have
\begin{equation}
\label{domcond}
 \Delta_{\chi,\psi_\epsilon}^{1/2-1/p} 
\geq (1-\epsilon)^{1/p-1/2} \, \Delta_{\chi,\psi}^{1/2-1/p}.
 \end{equation}
Taking the infimum on \eqref{domcond} gives
 \begin{equation}
 \left\| \zeta \right\|_{p,\psi} \leq  \inf_{ \substack{ \chi \in \mathscr{H}: \|\chi\| = 1, \pi(\chi) \geq \pi(\zeta) \\
\zeta \in \mathscr{D}(\Delta_{\chi,\psi_\epsilon}^{1/2-1/p}) } }
\| \Delta_{\chi,\psi}^{1/2-1/p} \zeta \| 
\leq  (1-\epsilon)^{1-2/p} \left\| \zeta \right\|_{p,\psi_\epsilon} .
 \end{equation}
The first inequality holds because the domain restriction gives a smaller class of states over which one takes the infimum
and the second inequality is \eqref{domcond}. 
We therefore obtain 
\ben \label{eq:forward}
\| \zeta \|_{p,\psi}^2 - \| \zeta \|_{p,\psi_\epsilon}^2 \le O(\epsilon). 
\een
Now we use a variational characterization of the $L_p$-norms proven in paper II, prop. 1, for
$1 \le p \le 2$. 
\ben
\| \zeta \|_{p,\psi}^2
=
-\frac{\sin(2\pi/p)}{\pi}
\inf_{x:{\mathbb R}_+ \to {\mathcal M}'} \int_0^\infty [\|x(t)\zeta\|^2  + t^{-1} F_{{\mathcal M}'}(y(t) \omega_\zeta' y(t)^*,\omega_\psi')^2] t^{-2/p'} \dd t , 
\een
where $y(t)=1-x(t)$, the infimum is taken over all step functions $x: {\mathbb R}_+ \to \mathcal{M}'$ with finite range such that $x(t)=1$ for $t \in [0,c]$ for some $c>0$,
and $x(t)=0$ for sufficiently large $t$. We also use the notation $(x\omega x^*)(b)=\omega(x^* a x)$. For any fixed $\delta>0$ a step 
function may be chosen so that the infimum is achieved up to $\delta$. It follows that, with this choice,
\ben \label{eq:reverse}
\begin{split}
& \| \zeta \|_{p,\psi_\epsilon}^2 - \| \zeta \|_{p,\psi}^2 \\
& \le \delta
-\frac{\sin(2\pi/p)}{\pi}
\int_c^\infty [F_{{\mathcal M}'}(y(t) \omega_\zeta' y(t)^*,\rho_\epsilon)^2
- F_{{\mathcal M}'}(y(t) \omega_\zeta' y(t)^*,\rho)^2
] t^{-1-2/p'} \dd t  \\
& \le \delta +  2\, {\rm sinc}(2\pi/p') \, c^{-2/p'} \left( \sup_{t \ge c} \| y(t) \zeta \|^2 \right) \| \rho-\rho_\epsilon \|^{1/2} \le 
\delta + O(\epsilon^{1/2}) , 
\end{split}
\een
using the continuity of the fidelity, lem. \ref{lem:Fcont}  together with $\| \psi-\psi_\epsilon\| \le \| \rho-\rho_\epsilon \|^{1/2}$ from \eqref{eq:PS}
 in the second step, and using \eqref{rhorep} 
in the third step. If we chose $\epsilon$ so small that the $\epsilon$-dependent terms in 
\eqref{eq:forward}, \eqref{eq:reverse} are each less than $\delta$, we get $| \, \| \zeta \|_{p,\psi_\epsilon}^2 - \| \zeta \|_{p,\psi}^2 |<2\delta$. 
Therefore, since $\delta>0$ can be arbitrarily small,  the lemma is proven.
 \end{proof}
Using this lemma  in conjunction with \cite{AM}, lem. 6 (2) gives $\left\| \zeta \right\|_{p,\psi} \le \| \zeta \|$. Then, 
since $|G(z)\rangle$ is assumed to be bounded in the Hilbert space norm, we have $\left\| G(z) \right\|_{p,\psi} \le C$ inside 
the closed strip $\{ 0 \le {\rm Re} z \le 1\}$. Now taking the limit $\epsilon \to 0$ of case (1) for the vector $|\psi_\epsilon\rangle$
using the lemma and the dominated convergence theorem to take the limit under the integral in \eqref{himp}
concludes the proof of (2).  
 \end{proof}


\section{An alternative strategy for proving thm. \ref{thm1}}
\label{alterapproach}

It is conceivable that our approach based on the vector \eqref{ourvec} can be modified by choosing other interpolating vectors, and this may lead 
to new insights relating the argument to somewhat different entropic quantities. Here we sketch an approach which seems 
to avoid the use of $L_p$-norms, thus leading potentially to a substantial simplification. To this end, we consider now a vector 
\begin{equation}\label{xivector}
\ket{{\Xi}_{\psi}(z,\phi)}=\rmo{\psi}{\xi;\mB}^{z}\rmo{\eta}{\xi;\mB}^{-z}\rmo{\eta}{\phi;\mA}^{z}\ket{\psi},
\end{equation}
similar to vectors considered in \cite{CecchiniPetz}.  Here, 
$|\xi\rangle$ is some vector such that $\pi^{\mB^{\prime}}(\xi)\supset \pi^{\mB^{\prime}}(\psi)$, and where in this 
appendix we find it more convenient to think of $\mB$ as defined on the same Hilbert space as $\cA$. The vector \eqref{xivector}
does not depend on the precise choice of $\ket{\xi}$ (but on the vector $\ket{\eta}$ in the natural cone of $\cA$, although we suppress this).

\eqref{xivector} is defined a priori only for imaginary $z$. But if we consider the set of states majorizing $|\psi\rangle$, defined as 
$\sC(\psi, \cA')=\{ |\phi \rangle \in \sH : \| a' \psi\| \le c_\phi \|a' \phi\| \ \ \forall a' \in \cA'\}$, then for $|\phi\rangle$ in this dense linear subspace of $\sH$, it has an analytic continuation to the half strip $\bS_{1/2}=\{ 0<{\rm Re} z < 1/2\}$ that is weakly continuous on the boundary. 
This can be demonstrated by the same type of 
argument as in \cite{CecchiniPetz}, prop. 2.5, making repeated use of the following lemma by \cite{CecchiniPetz}, lem. 2.1:

\begin{lemma}\label{proveanalyticity}
Suppose $|G(z)\rangle$ is a vector valued analytic function for $z \in \bS_{1/2}$, and  $A$ is a self-adjoint positive operator. 
Then $A^z |G(z)\rangle$ is an analytic function of $z \in \bS_{1/2}$ if $\|A^z G(z)\|$ is bounded on the boundary of $\bS_{1/2}$. 
\end{lemma}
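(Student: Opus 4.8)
The plan is to tame the unboundedness of $A^{z}$ for ${\rm Re}\,z>0$ by a spectral cutoff, push the boundary bound into the interior via a three‑lines estimate applied to the \emph{regularized} objects, and only then remove the cutoff. Throughout I will use, as holds in every application in this paper, that $|G(z)\rangle$ extends weakly continuously and norm‑boundedly to the closed strip $\overline{\bS_{1/2}}$; set $B=\sup_{z\in\overline{\bS_{1/2}}}\|G(z)\|$ and $M=\sup_{z\in\partial\bS_{1/2}}\|A^{z}G(z)\|$, both finite by hypothesis. First I would introduce the spectral projections $P_{n}=E_{A}([1/n,n])$ of $A$, which increase strongly to the projection onto $\overline{{\rm ran}}\,A$ and for which $A^{z}P_{n}=P_{n}A^{z}$ is a bounded, entire operator‑valued function with $\|A^{z}P_{n}\|\le n^{1/2}$ on $\overline{\bS_{1/2}}$. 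Hence $F_{n}(z):=A^{z}P_{n}G(z)$ is $\mathscr{H}$‑valued, analytic in $\bS_{1/2}$, continuous on the closure, and obeys the crude bound $\|F_{n}(z)\|\le n^{1/2}B$.

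The first real step is a \emph{uniform} bound. Since $P_{n}$ commutes with $A^{z}$ and is a contraction, on the edges one has $\|F_{n}(it)\|=\|P_{n}A^{it}G(it)\|\le\|A^{it}G(it)\|\le M$ and $\|F_{n}(1/2+it)\|\le\|A^{1/2+it}G(1/2+it)\|\le M$. Applying the Hadamard three‑lines theorem to the bounded scalar function $z\mapsto\langle\chi|F_{n}(z)\rangle$ (bounded by $n^{1/2}B\|\chi\|$ inside the strip, which is what licenses the three‑lines argument) and then taking the supremum over $\|\chi\|\le 1$ yields $\|F_{n}(z)\|\le M$ for all $z\in\overline{\bS_{1/2}}$, uniformly in $n$.

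Next I would remove the cutoff pointwise. Fixing $z_{0}\in\bS_{1/2}$ and writing $\mu(\cdot)=\langle G(z_{0})|E_{A}(\cdot)G(z_{0})\rangle$, one has $\|F_{n}(z_{0})\|^{2}=\int_{[1/n,n]}\lambda^{2{\rm Re}\,z_{0}}\,\ud\mu(\lambda)\le M^{2}$, so that monotone convergence gives $\int_{(0,\infty)}\lambda^{2{\rm Re}\,z_{0}}\,\ud\mu(\lambda)\le M^{2}$. This is precisely the statement $|G(z_{0})\rangle\in\mathscr{D}(A^{{\rm Re}\,z_{0}})=\mathscr{D}(A^{z_{0}})$, so $A^{z_{0}}G(z_{0})$ is well defined; dominated convergence, with dominating function $\lambda^{{\rm Re}\,z_{0}}\in L^{2}(\ud\mu)$, then gives $F_{n}(z_{0})\to A^{z_{0}}G(z_{0})$ strongly. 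Thus $F_{n}$ converges pointwise on $\bS_{1/2}$ to $z\mapsto A^{z}G(z)$, with $\|A^{z}G(z)\|\le M$ throughout.

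Finally I would upgrade pointwise convergence to analyticity. For each fixed $\chi\in\mathscr{H}$ the functions $z\mapsto\langle\chi|F_{n}(z)\rangle$ are analytic, uniformly bounded by $M\|\chi\|$, and converge pointwise to $z\mapsto\langle\chi|A^{z}G(z)\rangle$; by Vitali's theorem (equivalently by passing the limit through the Cauchy integral formula) the limit is analytic. Hence $A^{z}G(z)$ is weakly analytic, and being norm‑bounded it is strongly analytic on $\bS_{1/2}$ by Dunford's theorem. The main obstacle is the third step: one cannot assert $|G(z)\rangle\in\mathscr{D}(A^{z})$ in the interior a priori, and the boundary hypothesis enters only after it has been converted, via the three‑lines bound on the regularized $F_{n}$ together with monotone convergence, into the spectral integrability statement $\int\lambda^{2{\rm Re}\,z}\,\ud\mu<\infty$ that secures the domain.
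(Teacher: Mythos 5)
The paper offers no proof of this lemma to compare against: it is quoted from Cecchini and Petz (\cite{CecchiniPetz}, lem.~2.1) and used as a black box in app.~\ref{alterapproach}. Your argument is a correct, self-contained proof along the standard lines. The spectral cutoffs $P_n=E_A([1/n,n])$ make $A^zP_n$ entire and bounded by $n^{1/2}$ on the closed strip; that crude interior bound $n^{1/2}B\|\chi\|$ is exactly what licenses the three-lines estimate for $\langle\chi|A^zP_nG(z)\rangle$; the $n$-uniform boundary bound $M$ (using that $P_n$ commutes with $A^z$ and is a contraction) then propagates inward; monotone convergence converts the uniform interior bound into the spectral integrability $\int_{(0,\infty)}\lambda^{2\Re z_0}\,\ud\mu<\infty$ that places $|G(z_0)\rangle$ in $\mathscr{D}(A^{z_0})$ and gives strong convergence $F_n(z_0)\to A^{z_0}G(z_0)$; and Vitali's theorem together with the equivalence of weak and strong analyticity for Hilbert-space-valued functions closes the argument. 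The one thing worth flagging is that the lemma as stated is not self-contained: the boundary hypothesis already presupposes that $|G(z)\rangle$ extends to $\partial\bS_{1/2}$ with values in $\mathscr{D}(A^{1/2})$ on the upper edge, and your three-lines step additionally uses weak continuity and norm-boundedness of $G$ on the closed strip. You correctly identify these and add them as standing hypotheses; they do hold in every application in app.~\ref{alterapproach}, where the boundary values are products of partial isometries and cocycles applied to fixed vectors. With that reading, the proof is complete and correct.
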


For example, we may write 
$\rmo{\eta}{\phi;\mA}^{z}\ket{\psi}=  \rmo{\eta}{\phi;\mA}^{z}\rmo{\psi}{\phi;\mA}^{-z}
\rmo{\psi}{\phi;\mA}^{z}\mo{\psi;\mA}^{-z} \ket{\psi}$, at first for imaginary $z=it$. Using the relations 
\eqref{rels}, \eqref{coswitch}, 
$u'(z)=\rmo{\psi}{\phi;\mA}^{z}\mo{\psi;\mA}^{-z} =(D\psi:D\psi)_{-i\bar z;\cA'}^*$ is a Connes-cocycle for $\cA'$. 
The condition $|\phi\rangle \in \sC(\psi, \cA')$ ensures that it has an analytic continuation from $z=it$ to $\bS_{1/2}$,
as an element of $\cA'$ that is strongly continuous on the boundary of $\bS_{1/2}$ -- this is standard and a proof proceeds as that of lem. \ref{lem:newgamma}, (1). 
Similarly, $v(z)=  \rmo{\eta}{\phi;\mA}^{z}\rmo{\psi}{\phi;\mA}^{-z} = (D\eta:D\psi)_{-iz, \cA}$ is a Connes-cocycle for $\cA$. 

Then, for imaginary $z=it$ we get $\rmo{\eta}{\phi;\mA}^{z}\ket{\psi} = u'(z)v(z) \ket{\psi}$, which has an analytic continuation to 
$\bS_{1/2}$ as $v(z) \ket{\psi}$ is analytic there by Tomita-Takesaki theory. One next applies the lemma with 
$\ket{G(z)}=\rmo{\eta}{\phi;\mA}^{z}\ket{\psi}$ and $A^z = \rmo{\eta}{\psi;\mB}^{-z}$ (chosing $\ket{\xi}=\ket{\psi}$ here). The conditions are verified 
using standard relations of relative Tomita-Takesaki theory as given e.g. in \cite{AM}, app. C, such as \eqref{rels}, \eqref{coswitch}: 
At the upper boundary, $z=1/2+it$, one finds
$\ket{G(1/2+it)}=u'(1/2+it) J_\cA v(it)^* J_\cA \ket{\eta}$ which is of the form $b' \ket{\eta}$ for $b' \in \cA' \subset \mB'$, and one finds 
$A^{1/2+it} = \Delta_{\psi,\eta;\mB'}^{it} J_{\mB'} S_{\psi,\eta;\mB'}$. Together, this gives,
\ben
A^{1/2+it} \ket{G(1/2+it)} = \Delta_{\psi,\eta;\mB'}^{it} J_{\mB'} [u'(1/2+it) J_\cA v(it)^* J_\cA ]^*\ket{\psi}, 
\een 
which is bounded for real $t$. On the other hand, at the lower boundary $A^{it}\ket{G(it)}$ is bounded by definition.
Continuing this type of argument gives the following lemma.

\begin{lemma}\label{xianalyticup}
For  $\ket{\phi} \in \sC(\mA^{\prime},\psi)$, $\ket{{\Xi}_{\psi}(z,\phi)}$ is analytic in the interior of the strip $\bS_{1/2}$ and weakly continuous on the boundary.
\end{lemma}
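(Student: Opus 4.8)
The plan is to obtain the analytic continuation by peeling the three modular factors in \eqref{xivector} off one at a time, working from the imaginary axis $z=it$, where the vector is manifestly defined, outwards into $\bS_{1/2}$, and invoking lem.~\ref{proveanalyticity} at each stage. Since the text records that $\ket{\Xi_\psi(z,\phi)}$ does not depend on the auxiliary vector $\ket{\xi}$ (as long as $\pi^{\mB'}(\xi)\supset\pi^{\mB'}(\psi)$), I would fix $\ket{\xi}=\ket{\psi}$ once and for all, so that the vector to be continued reads $\ket{\Xi_\psi(z,\phi)}=\mo{\psi;\mB}^{z}\,\rmo{\eta}{\psi;\mB}^{-z}\,\rmo{\eta}{\phi;\mA}^{z}\,\ket{\psi}$.

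First I would treat the innermost vector $\rmo{\eta}{\phi;\mA}^{z}\ket{\psi}$, exactly as in the factorization displayed before the statement: I write it as a product of the Connes cocycle $u'(z)=\rmo{\psi}{\phi;\mA}^{z}\mo{\psi;\mA}^{-z}$ for $\cA'$ and the Connes cocycle $v(z)=\rmo{\eta}{\phi;\mA}^{z}\rmo{\psi}{\phi;\mA}^{-z}$ for $\cA$, acting on $\ket{\psi}$. The factor $v(z)\ket{\psi}$ is analytic on $\bS_{1/2}$ by Tomita--Takesaki theory, while the majorization hypothesis $\ket{\phi}\in\sC(\cA',\psi)$ guarantees, by the same cocycle argument as in the proof of lem.~\ref{lem:newgamma}(1), that $u'(z)$ continues to a bounded, strongly continuous, $\cA'$-valued analytic function on $\bS_{1/2}$; hence $\rmo{\eta}{\phi;\mA}^{z}\ket{\psi}$ is analytic with bounded values on both edges. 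I would then apply lem.~\ref{proveanalyticity} with $\ket{G(z)}=\rmo{\eta}{\phi;\mA}^{z}\ket{\psi}$ and $A^z=\rmo{\eta}{\psi;\mB}^{-z}$: the required boundary bound is automatic at $z=it$, and at $z=1/2+it$ it follows from the computation already indicated, namely one uses \eqref{rels} and \eqref{coswitch} (and the $J$-identities) to rewrite $A^{1/2+it}\ket{G(1/2+it)}$ in the manifestly bounded form $\rmo{\psi}{\eta;\mB'}^{it}J_{\mB'}\,[\,b'\,]^*\ket{\psi}$ with $b'=u'(1/2+it)J_\cA v(it)^* J_\cA\in\cA'\subset\mB'$. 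A second application of lem.~\ref{proveanalyticity}, now with $A^z=\mo{\psi;\mB}^{z}$ and $\ket{G(z)}=\rmo{\eta}{\psi;\mB}^{-z}\rmo{\eta}{\phi;\mA}^{z}\ket{\psi}$, absorbs the outermost factor and yields analyticity of the full $\ket{\Xi_\psi(z,\phi)}$ on $\bS_{1/2}$, with boundary values uniformly bounded on each edge.

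For the weak continuity on $\partial\bS_{1/2}$ I would argue as in thm.~\ref{thm:gamma}(1) and lem.~\ref{lem:newgamma}(1): the bounded vector-valued function is tested against the dense set $\cA'\ket{\psi}$ (resp.\ $\mB'\ket{\psi}$), where continuity of the resulting scalar functions is transparent from the explicit boundary expressions, and the uniform boundedness together with the Banach--Steinhaus principle upgrades this to weak continuity on all of $\sH$. The hard part will be the upper-edge boundedness check feeding each application of lem.~\ref{proveanalyticity}: one must commute the half-power $A^{1/2}$ through $\ket{G}$ using the web of relations among the several relative modular operators and their $J_\cA$- and $J_{\mB'}$-conjugates, and recognize the outcome as a bounded operator in $\cA'$ or $\mB'$ acting on $\ket{\psi}$. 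This is precisely the step where the majorization condition $\ket{\phi}\in\sC(\cA',\psi)$ is indispensable (without it $u'(z)$ need not extend) and where the potential non-faithfulness forces one to track the support projectors $\pi^{\mB'}(\xi)\supset\pi^{\mB'}(\psi)$ so that every operator acts on the correct subspace.
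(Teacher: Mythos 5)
Your proposal follows essentially the same route as the paper: fixing $\ket{\xi}=\ket{\psi}$, continuing $\rmo{\eta}{\phi;\mA}^{z}\ket{\psi}$ via the Connes-cocycle factorization $u'(z)v(z)\ket{\psi}$ (with the majorization hypothesis giving the continuation of the $\cA'$-cocycle), and then peeling off the remaining modular factors by iterated application of lem.~\ref{proveanalyticity} with the same upper-edge boundedness check $A^{1/2+it}\ket{G(1/2+it)}=\Delta_{\psi,\eta;\mB'}^{it}J_{\mB'}[u'(1/2+it)J_\cA v(it)^*J_\cA]^*\ket{\psi}$. The paper's own argument is only sketched ("continuing this type of argument"), and your write-up, including the Banach--Steinhaus step for weak continuity, is a faithful and correct elaboration of it.
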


The relationship with other approaches can be seen through the quantity 
\beq
g(z)=\inf_{\ket{\phi} \in \sC(\mA^\prime,\psi), \|\phi\|=1} \|\Xi_{\psi}(z,\phi)\|.
\eeq
In the setup of finite-dimensional v. Neumann subfactors described in 
sec. \ref{finitealgebra}, we can write
\beq
\ket{\Xi_{\psi}(z,\phi)}=(\rho_{\mB}^z\sigma_{\mB}^{-z}\otimes 1_{\mathcal{C}})\sigma_{\mA}^{z}\rho_{\mA}^{1/2}\tau_{\mA}^{-z}
\eeq
If we take $z=\theta$ real then the infimum over $\tau_\mA$ (the density matrix representing $\ket{\phi}$) readily yields an $L_p$-norm for $p_\theta = 2/(2\theta+1)$,
\beq
g(z)= \( \tr\left|(\rho_{\mB}^z\sigma_{\mB}^{-z}\otimes 1_{\mathcal{C}})\sigma_{\mA}^{z}\rho_{\mA}^{1/2}\right|^{p_\theta}\)^{1/p_{\theta}}.
\eeq
We recognize this again as \eqref{finitergred} corresponding to an expression also studied by \cite{Junge}.

The strategy is now the following. First, lem. \ref{lem:4} also applies to the holomorphic Hilbert-space valued function 
$\ket{\Xi_{\psi}(z,\phi)}$ (because $z \mapsto \ln \|\Xi_{\psi}(z,\phi) \|$ is subharmonic). 
So we have for $0<\theta<1/2$ that
\beq\label{eq:Hirsch}
\ln \|\Xi_\psi(\theta,\phi)\|\leq 
\int_{-\infty}^{\infty}\(\alpha_{\theta}(t)\ln\|\Xi_\psi(it,\phi)\|^{1-2\theta} +\beta_{\theta}(t)
\ln\|\Xi_\psi(1/2+it,\phi)\|^{ 2\theta }\) \ud t.
\eeq
Since $\forall t\in \mathbb{R}, ~\|\Xi_\psi(it,\phi)\|\leq 1,~\alpha_{\theta}(t)>0$, we can drop the first term under the integral. 
Then, we want to divide by $\theta$ and take the infimum over $\ket{\phi} \in \sC(\mA^\prime,\psi), \|\phi\|=1$. The next lemma will allow us to deal with the second term under the integral. Since $\ket{\phi} \in \sC(\cA^\prime,\psi)$, we can write $\ket{\psi}=a\ket{\phi}$, where $a \in \cA$ is self-adjoint, see \cite{Zsido}, 5.21. Then:

\begin{lemma}\label{prooffid} We have
\beq\label{proofrmo}
\| \Xi_{\psi}(1/2+it,\phi) \|^2 = \omega_{\psi}\circ\iota\circ \alpha_\eta^t(a^2)
\eeq
for all $\ket{\phi} \in \sC(\mA^\prime,\psi)$.
\end{lemma}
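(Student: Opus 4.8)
The plan is to compute the boundary value $\ket{\Xi_\psi(1/2+it,\phi)}$ explicitly and read off its norm. Since $\ket{\phi}\in\sC(\cA',\psi)$ we may write $\ket{\psi}=a\ket{\phi}$ with $a=a^*\in\cA$ (as recorded just before the lemma, via \cite{Zsido}, 5.21), and we take the representative $\ket{\xi}=\ket{\psi}$, which is allowed because $\ket{\Xi_\psi(z,\phi)}$ is independent of $\ket{\xi}$. With this choice $\ket{\Xi_\psi(z,\phi)}=\rmo{\psi}{\psi;\mB}^{z}\rmo{\eta}{\psi;\mB}^{-z}\rmo{\eta}{\phi;\cA}^{z}\ket{\psi}$, and the task reduces to evaluating the three modular factors at $z=1/2+it$.

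First I would treat the $\cA$-factor. Applying the half-power relation \eqref{modA} to $\ket\psi=a\ket\phi$ gives $\rmo{\eta}{\phi;\cA}^{1/2}\ket\psi=J_\cA a\ket\eta$; then the $J_\cA$-intertwining relation in \eqref{rels}, the relative modular-flow identity \eqref{modflowrel}, and the cocycle relation \eqref{coswitch} let me push the residual $\rmo{\eta}{\phi;\cA}^{it}$ through, yielding $\rmo{\eta}{\phi;\cA}^{1/2+it}\ket\psi=\tilde a\,d'\ket\eta$, where $\tilde a\equiv J_\cA\varsigma^{\eta,\cA}_t(a)J_\cA\in\cA'$ is self-adjoint and $d'\in\cA'$ is a ($\phi$-dependent) Connes cocycle, i.e. a partial isometry.

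Next I would apply the two $\mB$-factors $\rmo{\psi}{\psi;\mB}^{1/2+it}\rmo{\eta}{\psi;\mB}^{-1/2-it}$ to $\tilde a\,d'\ket\eta$, running the same Tomita--Takesaki manipulations for $\mB$ and its commutant ($\rmo{\eta}{\psi;\mB}^{-z}=(\Delta'_{\psi,\eta;\mB})^{z}$, the $\mB'$-half-power $\Delta_{\psi,\eta;\mB'}^{1/2}=J_{\mB'}S_{\psi,\eta;\mB'}$, and the intertwining relations), and using the defining relation \eqref{embed} of the embedding together with $V_\eta^*\cA'V_\eta\subset\mB'$ from \eqref{VAprime}. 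Taking the squared norm, the $\phi$-dependent cocycles are partial isometries and cancel, so that $\|\Xi_\psi(1/2+it,\phi)\|$ is $\phi$-independent, consistent with the $\ket{\xi}$-independence of $\ket{\Xi_\psi}$. The remaining expression has the shape $\bra{\psi_\mB}\Delta_{\eta;\mB}^{-it}J_\mB V_\eta^*\tilde a^2 V_\eta J_\mB\Delta_{\eta;\mB}^{it}\ket{\psi_\mB}$; using anti-unitarity of $J_\mB$, self-adjointness of $\tilde a$, and $\tilde a^2=J_\cA\varsigma^{\eta,\cA}_t(a^2)J_\cA$, I recognize it as $\varsigma^{\eta,\mB}_{-t}(J_\mB V_\eta^* J_\cA\varsigma^{\eta,\cA}_t(a^2)J_\cA V_\eta J_\mB)$ evaluated in $\omega_\psi$, which is exactly $\omega_\psi\circ\iota\circ\alpha_\eta^t(a^2)$ by the definition \eqref{petzfaith} of the rotated Petz map. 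I would stress that, in contrast with the mere inequality of Theorem \ref{thm:gamma}(2), here one gets a genuine equality: the $\Xi$-construction keeps $\tilde a$ intact rather than squeezing $a_+^{1/2}\pi^{\cA'}(\psi)a_+^{1/2}$, so the square $a^2$ is reconstructed exactly by the norm.

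The main obstacle I anticipate is the $\mB$-side bookkeeping: tracking the embedding $V_\eta$ and the support projectors $\pi^\cA(\psi),\pi^{\mB'}(\psi),\dots$ through the half-integer powers, and verifying carefully that the $\phi$- and $\xi$-dependent cocycles truly cancel in the norm. All of the formal boundary identities must moreover be read as weak limits, justified by the analyticity and boundary continuity already established in Lemma \ref{xianalyticup}, so attention to the domains of the relative modular operators is needed throughout.
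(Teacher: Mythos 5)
Your target expression is correct, and your treatment of the $\cA$-factor — rewriting $\rmo{\eta}{\phi;\cA}^{1/2+it}\ket{\psi}$ via \eqref{modA}, \eqref{rels} and \eqref{coswitch} as a Connes cocycle times $J_\cA\varsigma^{\eta,\cA}_t(a)J_\cA$ acting on $\ket{\eta}$ — is in the spirit of the paper's first step. However, one of the few substantive claims you make about the middle of the argument is false: the assertion that the cocycles cancel ``so that $\|\Xi_\psi(1/2+it,\phi)\|$ is $\phi$-independent'' contradicts both the lemma and your own final formula. The right-hand side of \eqref{proofrmo} is $\omega_\psi\circ\iota\circ\alpha_\eta^t(a^2)$ with $a=a_\phi$ determined by $\ket{\psi}=a\ket{\phi}$, so the norm manifestly varies with $\ket{\phi}$; indeed the whole point of the surrounding argument is to take the infimum over $\ket{\phi}\in\sC(\mA',\psi)$ of this $\phi$-dependent quantity to reach the fidelity via \eqref{uhl1}. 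You have conflated the (true) independence of $\ket{\Xi_\psi(z,\phi)}$ from the auxiliary vector $\ket{\xi}$ with independence from $\ket{\phi}$. At most the explicit cocycle partial isometry drops out of the norm (and even that requires commuting it to the leftmost position past the $\mB$-factors and checking initial supports), leaving the full $\phi$-dependence through $a$.

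The more serious gap is that the $\mB$-side, which is where all the work lies, is only gestured at. With your choice $\xi=\psi$ the $\mB$-part is $\mo{\psi;\mB}^{1/2+it}\rmo{\eta}{\psi;\mB}^{-1/2-it}$; only the product (the analytic continuation of a cocycle of $\iota(\mB)$) is meaningful, at ${\rm Re}\,z=1/2$ it is an unbounded operator affiliated to $\iota(\mB)$, and your plan requires commuting it past elements of $\cA'$ and evaluating it on a vector of the form $\tilde a\,d'\ket{\eta}$ with no control of domains. The paper instead takes $\xi=\eta$, so that the $\mB$-part applied to $\ket{\eta}$ collapses cleanly through the affiliated operator $b=\rmo{\psi}{\eta;\mB}^{1/2}\mo{\eta;\mB}^{-1/2}$ with $b\ket{\eta_\mB}=\ket{\psi_\mB}$, and — crucially — it transfers the computation from $\sH$ to the standard form of $\mB$ on $\sK$ by extending the generalized conditional expectation identity \eqref{petzdef} to such affiliated operators. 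Your appeal to \eqref{embed} and \eqref{VAprime} points toward the same transfer but does not supply it: the passage from a norm of a vector in $\sH$ to the expectation value of $\varsigma^{\eta,\mB}_{-t}(J_\mB V_\eta^*J_\cA\varsigma^{\eta,\cA}_t(a^2)J_\cA V_\eta J_\mB)$ in $\ket{\psi_\mB}\in\sK$ is precisely the content of \eqref{petzdef} and must be proved, not assumed.
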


\begin{proof}
On the left hand side of \eqref{proofrmo}, we may choose $\ket{\xi}=\ket{\eta}$. It is most convenient to work with 
state vectors in the natural cones, for notations see \eqref{eq:notation}. Define 
$b=\rmo{\psi}{\eta;\mB}^{1/2}\mo{\eta;\mB}^{-1/2}$, which is affiliated to the algebra $\mB$ and extend the definition \eqref{petzdef} to affiliated operators. Then we can write
\beq
\begin{split}
& \| \Xi_{\psi}(1/2+it,\phi) \|^2\\
=&\|\rmo{\psi}{\xi;\mB}^{1/2}\rmo{\eta}{\xi;\mB}^{-\frac 12-it}\rmo{\eta}{\phi;\cA}^{\frac 12+it} a \phi_\cA \|^2 \\
=&\| J_{\mB} \varsigma_{\eta;\mA}^{-t}\(\varsigma_{\eta;\mB}^{t}\(b\)\)J_{\mA} a\eta_\cA \|^2 \\
=&\bra{\eta_\cA }\varsigma_{\eta;\mB}^{t}(b^* b)J_{\mA}\varsigma_{\eta;\mA}^{t}\( a^2\)\ket{\eta_\cA}\\
=&\bra{\eta_\mB}\varsigma_{\eta;\mB}^{t}(b^*b)J_{\mB}\alpha_{\eta}(\varsigma_{\eta;\mA}^{t}\( a^2\))\ket{\eta_\mB}\\
=&\bra{\eta_\mB}b^*b J_{\mB}\varsigma_{\eta;\mB}^{-t}\(\alpha_{\eta}\(\varsigma_{\eta;\mA}^{t}\( a^2\)\)\)\ket{\eta_\mB}\\
=&\bra{\eta_\mB}b^* J_{\mB}\varsigma_{\eta;\mB}^{-t}\(\alpha_{\eta}\(\varsigma_{\eta;\mA}^{t}\( a^2\)\)\) b \ket{\eta_\mB}\\
=&\bra{\psi_\mB}\alpha_{\eta}^{t}(a^2)\ket{\psi_\mB}=\omega_{\psi}\circ\iota\circ \alpha_\eta^t(a^2).
\end{split}
\eeq
(The choice $\pi^{\mA^\prime}(\psi)=J_{\mA}^2\leq\pi^{\mA^\prime}(\phi)\leq\pi^{\mA^\prime}(\eta)=1$ guarantees the supports of vectors on $\cA^\prime$ are multiplied in the correct way, so we keep the $\pi^{\cA^\prime}$'s implicit in the derivation -- everything should be understood to happen on $\pi^{A^\prime}(\psi)$.) In the derivation we used the definition of the Petz recovery map, see e.g. \cite{Petz1993} proof of prop. 8.4,
such that  $\forall~a \in \mA, b \in \mB$,
\beq\label{petzdef}
\bra{\eta_\cA} b J_{\mA}  a \ket{\eta_\cA}=\bra{\eta_\mB} b J_{\mB} \alpha_{\eta}(a)\ket{\eta_\mB}.
\eeq
Thus, we have \eqref{proofrmo}. We obtain the claim in the lemma by taking the infimum in the set $\sC(\mA^\prime,\psi)$ on both sides of \eqref{proofrmo} and using \eqref{uhl1}. 
\end{proof}

The lemma and concavity of $\ln$ allows us to conclude from \eqref{eq:Hirsch} that 
\beq\label{eq:Hirsch1}
\lim_{\theta \to 0^+} \frac{1}{\theta} \ln \|\Xi_\psi(\theta,\phi)\|\leq 
\ln \| a \zeta_\sS \|^2 = \ln \| \Delta^{1/2}_{\zeta_\sS,\phi} \psi\|^2,
\eeq
where $\ket{\zeta_\sS}$ is a vector representative of $\omega_\psi \circ \iota \circ \alpha_\sS \in \cA_\star$ and 
$\alpha_\sS$ the recovery channel \eqref{recover}. Note that taking the infimum over $\ket{\phi} \in \sC(\psi, \cA')$ on the right 
side yields $2 \ln F(\omega_\psi, \omega_\psi \circ \iota \circ \alpha_\sS)$
On the other hand, it is plausible  to expect that for the term on the left side of \eqref{eq:Hirsch}, we obtain
\beq
\label{eq:conj}
\inf_{\phi\in \sC(\mA^\prime, \psi)}  \lim_{\theta\to 0^+}\frac{1}{\theta}\ln\|\Xi_{\psi}(\theta,\phi)\|=-S_\cA(\psi|\eta)+S_\mB(\psi | \eta). 
\eeq
If this latter equation could be demonstrated -- which is possible at a formal level\footnote{
It is relatively straightforward to see that this equation would follow from the equation
\beq\label{31}
\lim_{\theta \to 0^+}\frac{1}{2\theta}\(1-\|\Xi_{\psi}(\theta,\phi)\|^2\)=\bra{\psi}\ln\rmo{\eta}{\psi;\mB}\ket{\psi}-\bra{\psi}\ln\rmo{\eta}{\phi;\mA}\ket{\psi}.
\eeq
which is easier to check as it does not contain an infimum.
} --
then it is clear that we would obtain an alternative proof of thm. \ref{thm2} (though not of thm. \ref{thm1}). 

When attempting to demonstrate \eqref{eq:conj} (or equivalently \eqref{31}), 
one is facing similar technical difficulties as in the proof strategy described in the 
body of the text. There, we were forced to introduced suitably regularized versions $|\psi_P\rangle$ of the vector in question. Thus, 
while the strategy discussed in this appendix nicely avoids the use of $L_p$-spaces up to a certain point, 
it is not clear whether their use can be altogether avoided. We think that this would be an interesting research project.


\end{document}